\documentclass[a4paper,10pt]{article}
\setlength{\textwidth}{15.93cm} \setlength{\textheight}{24.62cm}
\oddsidemargin=-0.0cm \topmargin=-1.3cm

\usepackage{amsthm,amsmath,amssymb}
\usepackage{subfig,sidecap,caption}
\usepackage{paralist}
\usepackage[usenames,dvipsnames]{color}
\usepackage[pdftex,breaklinks,colorlinks,
    citecolor={BlueViolet}, linkcolor={Blue},urlcolor=Maroon]{hyperref}  
\usepackage{tkz-graph}
\usepackage{graphicx}
\usetikzlibrary{decorations.pathmorphing}
\usepackage{charter,eulervm}%
\usepackage{tabularx}
\usepackage[final,expansion=alltext,protrusion=true]{microtype}
\graphicspath{{./figs/}}

\theoremstyle{plain} 
\newtheorem{theorem}{Theorem}[section]
\newtheorem{lemma}[theorem]{Lemma}
\newtheorem{corollary}[theorem]{Corollary}
\newtheorem{proposition}[theorem]{Proposition}

\newtheorem{claim}{Claim}
\def\boxit#1{\vbox{\hrule\hbox{\vrule\kern4pt
      \vbox{\kern1pt#1\kern1pt} \kern2pt\vrule}\hrule}}

\newcommand{\uivd}{{unit interval vertex deletion}}
\newcommand{\uied}{{unit interval edge deletion}}
\newcommand{\uic}{{unit interval completion}}
\newcommand{\uie}{{unit interval editing}}
\newcommand{\nhcag}{normal Helly circular-arc graph}
\newcommand{\phcag}{proper Helly circular-arc graph}
\newcommand{\lp}[1]{\ensuremath{{\mathtt{lp}(#1)}}}
\newcommand{\rp}[1]{\ensuremath{{\mathtt{rp}(#1)}}}
\newcommand{\lpp}[1]{\ensuremath{{\mathtt{lp'}(#1)}}}
\newcommand{\rpp}[1]{\ensuremath{{\mathtt{rp'}(#1)}}}
\newcommand{\ccp}[1]{\ensuremath{{\mathtt{ccp}(#1)}}}
\newcommand{\cp}[1]{\ensuremath{{\mathtt{cp}(#1)}}}
\newcommand{\comment}[1]{\textbackslash\!\!\textbackslash {\em #1}}

\title{Unit Interval Editing is Fixed-Parameter Tractable}

\author{Yixin Cao\thanks{Department of Computing, Hong Kong
    Polytechnic University, Hong Kong, China.
    \href{mailto:yixin.cao@polyu.edu.hk}{\tt yixin.cao@polyu.edu.hk}.
    Supported in part by the Hong Kong Research Grants Council (RGC)
    under grant 252026/15E, the National Natural Science
    Foundation of China (NSFC) under grants 61572414 and 61420106009,
    the Hong Kong Polytechnic University (PolyU) under grant 4-ZZEZ,
    and the European Research Council (ERC) under grant 280152.}}

\date{}

\begin{document}
\maketitle
\begin{abstract}
  Given a graph~$G$ and integers $k_1$, $k_2$, and~$k_3$, the unit interval editing problem asks whether $G$ can be transformed into a unit interval graph by at most $k_1$ vertex deletions, $k_2$ edge deletions, and $k_3$ edge additions.  We give an algorithm solving this problem in time $2^{O(k\log k)}\cdot (n+m)$, where $k := k_1 + k_2 + k_3$, and $n, m$ denote respectively the numbers of vertices and edges of $G$.  Therefore, it is fixed-parameter tractable parameterized by the total number of allowed operations.  

Our algorithm implies the fixed-parameter tractability of the unit interval edge deletion problem, for which we also present a more efficient algorithm running in time $O(4^k \cdot (n + m))$.  Another result is an $O(6^k \cdot (n + m))$-time algorithm for the unit interval vertex deletion problem, significantly improving the algorithm of van 't Hof and Villanger, which runs in time $O(6^k \cdot n^6)$.
\end{abstract}

\section{Introduction}\label{sec:intro}
A graph is a \emph{unit interval graph} if its vertices can be
assigned to unit-length intervals on the real line such that there is
an edge between two vertices if and only if their corresponding
intervals intersect.  Most important applications of unit interval
graphs were found in computational biology
\cite{bodlaender-96-intervalize-k-colored-graphs,goldberg-95-interval-edge-deletion,kaplan-99-chordal-completion},
where data are mainly obtained by unreliable experimental methods.
Therefore, the graph representing the raw data is very unlikely to be
a unit interval graph, and an important step before understanding the
data is to find and fix the hidden errors.  For this purpose various
graph modification problems have been formulated: Given a graph $G$ on
$n$ vertices and $m$ edges, is there a set of at most $k$
modifications that make $G$ a unit interval graph.  In particular,
edge additions, also called completion, and edge deletions are used to fix false
negatives and false positives respectively, while vertex deletions can
be viewed as the elimination of outliers.  We have thus three
variants, which are all known to be NP-complete
\cite{lewis-80-node-deletion-np, 
  yannakakis-81-minimum-fill-in, goldberg-95-interval-edge-deletion}.

These modification problems to unit interval graphs have been well
studied in the framework of parameterized computation, where the
parameter is usually the number of modifications.  Recall that a graph
problem, with a nonnegative parameter~$k$, is {\em fixed-parameter
  tractable (FPT)} if there is an algorithm solving it in time
$f(k)\cdot (n + m)^{O(1)}$, where $f$ is a computable function
depending only on~$k$ \cite{downey-13}.  The problems \uic{}
and \uivd{} have been shown to be FPT by Kaplan et
al.~\cite{kaplan-99-chordal-completion} and van Bevern et
al.~\cite{bevern-10-pivd} respectively.  In contrast, however, the
parameterized complexity of the edge deletion version remained open to
date, which we settle here.  Indeed, we devise single-exponential
linear-time parameterized algorithms for both deletion versions.
\begin{theorem}\label{thm:alg-1}
  The problems \uivd{} and \uied{} can be solved in time $O(6^k\cdot
  (n + m))$ and $O(4^k\cdot (n + m))$ respectively.
\end{theorem}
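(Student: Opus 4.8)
The plan is to build a bounded-depth search tree driven by the forbidden-induced-subgraph characterization of unit interval graphs, separating the obstructions into a \emph{finite} part and the holes. Recall that a graph is a unit interval graph if and only if it induces no claw, net, tent, or hole. The claw, net, and tent have at most six vertices, so they can be handled by brute-force branching; the holes can be arbitrarily long and are the real difficulty. Concretely, while the current graph contains one of the finite obstructions, I would pick such a subgraph $F$ and branch on how a hypothetical solution destroys it: for \uivd{}, the candidate is one of the (at most six) vertices of $F$; for \uied{}, one of a constant number of its edges. Because every branch spends one modification, the budget drops by one along each branch, the recursion depth is at most $k$, and the tree has at most $6^k$ leaves for the vertex version. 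Detecting a finite obstruction, or certifying that none remains, is done in linear time by a certifying recognition routine, so each node costs only $O(n+m)$.

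The main obstacle is the holes, on which I cannot afford to branch. The idea that unlocks the problem is that the absence of the finite obstructions already forces a rigid global shape: once every claw, net, and tent (and the remaining bounded-size obstructions) has been destroyed, the graph is a \phcag{}. This is exactly what I want, because a \phcag{} is a unit interval graph precisely when it is hole-free, and in a Helly circular-arc model of it every hole winds once around the circle, so its arcs cover every point of the circle. Destroying all holes therefore reduces to turning the circular model into a linear one---cutting it open at a single point---after which it becomes a proper, i.e. unit, interval graph. For \uivd{} I would compute an arc model in linear time and delete the arcs covering a cheapest point; this is optimal since reaching an interval graph forces some point to be left uncovered, and no exponential branching is needed in this phase. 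The most delicate step here is proving that the reduction indeed yields a \phcag{} and that the single-cut solution is genuinely optimal; this is where the structural heart of the argument lies.

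For \uied{} the same two-phase scheme applies, with the circular-arc phase replaced by the edge-deletion analogue of the cut, still computable in (near-)linear time. The subtlety specific to this version is pushing the branching factor on the finite obstructions down to four: a naive per-obstruction branching (e.g.\ over the edges of a tent) is too costly, so I would use a dedicated rule that discards edge deletions which are dominated or would merely create a fresh claw, and fold the residual cases into the circular-arc reduction. Granting this, the search tree has at most $4^k$ leaves. In both versions correctness follows from the fact that any solution must hit every finite obstruction---so exhaustive branching on one such obstruction loses nothing---together with the optimality of the cutting subroutine on the residual \phcag{}; and since the tree has depth $O(k)$, at most $6^k$ (respectively $4^k$) leaves, and $O(n+m)$ work per node, the running times are $O(6^k\cdot(n+m))$ and $O(4^k\cdot(n+m))$.
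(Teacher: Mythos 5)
Your vertex-deletion half is essentially the paper's own route (and van 't Hof--Villanger's): branch with factor at most $6$ on the bounded-size obstructions (which, you should make explicit, must include the short holes $C_4$, $C_5$ -- claw/$S_3$/$\overline{S_3}$-freeness alone does \emph{not} force a \phcag{}, and indeed \uivd{} is NP-hard on \{claw, $S_3$, $\overline{S_3}$\}-free graphs), and once the connected graph is a \phcag{}, delete the clique $K_{\cal A}(\alpha)$ at a cheapest point of a proper Helly arc model; this is exactly Lemma~\ref{lem:hole-cover}, and it gives the $O(6^k\cdot(n+m))$ bound.

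The genuine gap is in the $O(4^k\cdot(n+m))$ claim for \uied{}. Your mechanism for reducing the branching factor---discarding deletions that are ``dominated or would merely create a fresh claw''---does work for the $S_3$ and $\overline{S_3}$ (this is Proposition~\ref{lem:4-suffice}: any solution deletes two edges of some triangle, giving effective factor $3$), but it does nothing for the short holes $C_5$ and $C_6$ that your first phase must also destroy before the \phcag{} structure is available: deleting a single edge of an induced $C_5$ or $C_6$ creates no new finite obstruction and is not dominated, so on these your scheme branches with factor $5$ or $6$, yielding at best $O(6^k)$, not $O(4^k)$. The paper escapes this precisely through Theorem~\ref{thm:characterization}: a connected \{claw, $S_3$, $\overline{S_3}$, $C_4$\}-free graph is either a \phcag{} \emph{or a fat $W_5$}, so $C_5$'s and $C_6$'s are never branched on at all, and the fat-$W_5$ case is solved separately in linear time (Theorem~\ref{thm:edge-deletion-phcag}(2)); without this characterization (or a substitute rule for $C_5$/$C_6$) the claimed constant $4$ is unjustified. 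A second, smaller issue: your ``edge-deletion analogue of the cut'' on \phcag{s} is merely asserted, and its naive form is false---an inclusion-wise \emph{minimal} hole-breaking edge set need not be local to any point of the model (Fig.~\ref{fig:lem:edge-hole-cover}); what is true, and is the technical heart of the paper, is that the deleted set of every \emph{maximum} spanning unit interval subgraph equals $\overrightarrow E_{\cal A}(\rho)$ for some point $\rho$ (Lemma~\ref{lem:edge-hole-cover}), whose proof requires a delicate exchange argument on interval models rather than following formally from the Helly property.
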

Our algorithm for \uivd{} significantly improves the currently
best parameterized algorithm for it, which takes $O(6^k\cdot
n^6)$ time \cite{villanger-13-pivd}.  Another algorithmic result of
van 't Hof and Villanger \cite{villanger-13-pivd} is an $O(n^7)$-time
$6$-approximation algorithm for the problem, which we improve to the
following.
\begin{theorem}\label{thm:alg-2}
  There is an $O(n m + n^2)$-time approximation algorithm of
  approximation ratio~$6$ for the minimization version of the \uivd{}
  problem.
\end{theorem}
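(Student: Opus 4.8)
The plan is to follow the ``find an obstruction and delete it'' paradigm and to make every iteration run in linear time; the approximation ratio will come from a vertex-disjoint packing of bounded-size obstructions. Recall the forbidden-induced-subgraph characterization: a graph is a unit interval graph if and only if it contains no induced claw $K_{1,3}$, no induced net (a triangle with a pendant attached to each of its three vertices), no induced tent (the $3$-sun), and no hole $C_\ell$ with $\ell\ge 4$; equivalently, a unit interval graph is exactly a chordal $\{$claw, net, tent$\}$-free graph. The claw, the net, the tent, and the short holes $C_4,C_5,C_6$ all have at most six vertices, so I call them \emph{small} obstructions; the only obstructions on more than six vertices are the long holes $C_\ell$ with $\ell\ge 7$.

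First I would compute, by a maximal vertex-disjoint packing, a family $\mathcal{F}=\{F_1,\dots,F_t\}$ of induced small obstructions that are pairwise vertex-disjoint and maximal with this property, and put all of their vertices into the deletion set. Since the $F_i$ are disjoint and each must lose at least one vertex in any solution, any optimum deletes at least $t$ vertices, whereas we delete at most $6t$; this is where the ratio~$6$ originates. After removing $\bigcup_i V(F_i)$ the graph contains no small obstruction at all, hence it is $\{$claw, net, tent$\}$-free and has no hole of length at most~$6$, so the only way it can still fail to be a unit interval graph is through a long hole.

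The heart of the argument, and the step I expect to be the main obstacle, is to dispose of the long holes without deleting all of their vertices (deleting an entire $C_\ell$ would push the ratio up to $\ell$). Here I would exploit claw-freeness to control how a long hole can attach to the rest of the graph: if $H$ is a hole of length at least~$5$, no outside vertex can be adjacent to exactly one vertex of $H$ (that would create a claw centered at that vertex), and net/tent-freeness further restricts the admissible attachment patterns. The goal is to prove a structural lemma saying that, in a $\{$claw, net, tent$\}$-free graph, the long holes decompose into vertex-disjoint ``regions'' each of which can be repaired by deleting a bounded (ideally constant) number of vertices, so that these regions can be appended to the packing $\mathcal{F}$ as further disjoint obstructions that each force an optimum deletion. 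Getting this decomposition right, and verifying that deleting a bounded set per region actually destroys every remaining hole while keeping the charge at most~$6$ per forced deletion, is the delicate part.

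Finally, for the running time I would replace the $O(n^6)$ obstruction search of van~'t Hof and Villanger by a \emph{certifying} linear-time recognition of unit interval graphs (for instance a LexBFS-based straight-enumeration test), which in $O(n+m)$ time either confirms that the current graph is a unit interval graph or returns one concrete obstruction of the kind needed above. Because each packed obstruction removes at least one vertex, the packing grows in at most $O(n)$ rounds; running an $O(n+m)$ detection in each round gives the claimed $O(nm+n^2)$ bound, the additive $n^2$ absorbing the bookkeeping for maintaining the ordering and the incremental updates to $G$ between rounds. The two points I would watch most carefully are that the certifying routine always returns an obstruction whose deletion can be charged within the factor~$6$ (in particular a usable witness for a long hole), and that the total number of rounds stays linear, so that the per-round linear work does not accumulate beyond $O(nm+n^2)$.
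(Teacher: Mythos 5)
Your first phase---greedily collecting vertex-disjoint small obstructions, deleting all of their vertices, and charging the optimum one deletion per obstruction---is exactly the paper's first phase, and your ratio accounting is correct \emph{provided} the residual graph is afterwards solved optimally: the cost is at most $6t + \mathrm{opt}(G') \le 6\bigl(t + \mathrm{opt}(G')\bigr) \le 6\,\mathrm{opt}(G)$. The genuine gap is your second phase, which is where all of the paper's work lies. The lemma you hope for---that in the residual \{claw, $S_3$, $\overline{S_3}$\}-free graph with no short holes, the long holes split into vertex-disjoint regions, each repairable \emph{independently} by a bounded number of deletions---has the wrong shape and cannot be proved. Consider a high power of a long cycle, say $C_n^k$ with $n = 10k$: it is a \phcag{}, hence claw-, $S_3$-, $\overline{S_3}$-free with shortest hole of length about $10$, and the $k$ residue classes modulo $k$ form pairwise disjoint holes. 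Every hole cover of this graph must contain an entire point-clique $K_{\cal A}(\alpha)$ of a proper Helly arc model (this is the ``only if'' direction of Lemma~\ref{lem:hole-cover}), i.e., $k+1$ vertices sitting at a \emph{common} angular position. Deleting a few vertices from each residue-class hole independently does not achieve this and leaves holes that weave between the classes intact: the repairs cannot be chosen region by region, they must be globally coordinated, and no local decomposition lemma can express that. The paper's actual resolution is structural and exact: by Theorem~\ref{thm:characterization}, every connected component of the residual graph is a fat $W_5$ or a \phcag{}; by Lemma~\ref{lem:hole-cover}, hole covers of a \phcag{} are precisely the vertex sets containing some point-clique $K_{\cal A}(\alpha)$; hence the residual instance is solved \emph{optimally} in $O(m)$ time by scanning the $2n$ point-cliques of the model (Corollary~\ref{lem:vertex-deletion-phcag}), and this exactness is exactly what closes the ratio-$6$ argument.

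The same structure theorem also repairs the running-time step that you flagged but left unresolved. A LexBFS-based certifying recognizer for unit interval graphs may return a long hole as its certificate, and neither deleting nor charging a long hole fits within the factor $6$; you correctly identified this as the weak point, but you have no fix for it. The paper never asks for a certificate of non--unit-interval-ness: it runs the $O(m)$ \emph{certifying recognition of \phcag{s}} (step~0 of the algorithm in Fig.~\ref{fig:alg-recognition}), which either outputs a proper and Helly arc model---in which case no further obstruction search is needed and the component is finished off exactly as above---or outputs a small obstruction from which a member of $\cal F = \{$claw, $S_3$, $\overline{S_3}$, $C_4\}$ is extracted in $O(m)$ time. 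In particular, the paper's packing never needs to contain $C_5$'s or $C_6$'s, let alone long holes. With at most $n/4$ rounds of $O(m)$ each plus one exact $O(m)$ solve, the bound $O(nm + n^2)$ follows. So your overall schema (pack small obstructions, then finish the residue) is the right one, but without the characterization of $\cal F$-free graphs as fat $W_5$'s/\phcag{s} and the point-clique description of their hole covers, the proposal does not yield an algorithm.
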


\tikzstyle{vertex}  = [{fill=blue,circle,blue,draw,inner sep=1pt}]
\begin{figure*}[t]
  \centering\small
  \subfloat[claw]{\label{fig:claw}
    \includegraphics{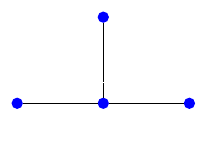} 
  }
  \,
  \subfloat[$S_3$]{\label{fig:tent}
    \includegraphics{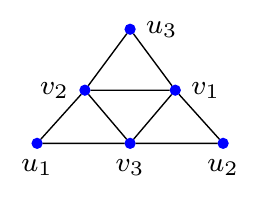} 
  }
  \,
  \subfloat[$\overline{S_3}$]{\label{fig:net}
    \includegraphics{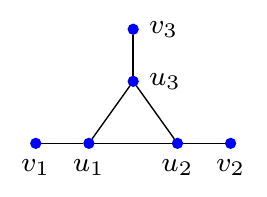}
  }
  \quad
  \subfloat[{$W_5$}]{\label{fig:5-wheel}
    \includegraphics{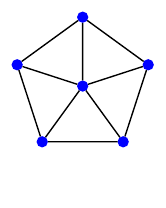} 
  }
  \quad
  \subfloat[{$C^*_5$} = $\overline{W_5}$]{\label{fig:5-star}
    \includegraphics{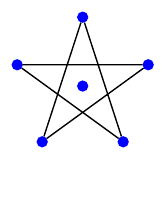} 
  }
  \quad
  \subfloat[$\overline{C_6}$]{\label{fig:complement-c-6}
    \includegraphics{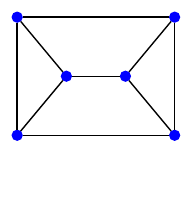} 
  }
  \caption{Small forbidden induced graphs.}
  \label{fig:fis}
\end{figure*}
The structures and recognition of unit interval graphs have been well
studied and well understood \cite{deng-96-proper-interval-and-cag}.
It is known that a graph is a unit interval graph if and only if it
contains no claw, $S_3, \overline{S_3}$, (as depicted in
Fig.~\ref{fig:fis},) or any hole (i.e., an induced cycle on at least
four vertices) \cite{roberts-69-indifference-graphs,
  wegner-67-dissertation}.  Unit interval graphs are thus a subclass
of chordal graphs, which are those graphs containing no holes.
Modification problems to chordal graphs and unit interval graphs are
among the earliest studied problems in parameterized computation, and
their study had been closely related.  For example, the algorithm of
Kaplan et al.~\cite{kaplan-99-chordal-completion} for \uic{} is a
natural spin-off of their algorithm for chordal completion, or more
specifically, the combinatorial result of all minimal ways to fill
holes in.  A better analysis was shortly done by
Cai~\cite{cai-96-hereditary-graph-modification}, who also made
explicit the use of bounded-depth search in disposing of finite
forbidden induced subgraphs.  This observation and the parameterized
algorithm of Marx \cite{marx-10-chordal-deletion} for the chordal
vertex deletion problem immediately imply the fixed-parameter
tractability of the \uivd\ problem: One may break first all induced
claws, $S_3$'s, and $\overline{S_3}$'s, and then call Marx's
algorithm.  Here we are using the hereditary property of unit interval
graphs,---recall that a graph class is {\em hereditary} if it is
closed under taking induced subgraphs.  However, neither approach can
be adapted to the edge deletion version in a simple way.  Compared to
completion that needs to add $\Omega(\ell)$ edges to fill a $C_\ell$
(i.e., a hole of length $\ell$) in, an arbitrarily large hole can be
fixed by a single edge deletion.  On the other hand, the deletion of
vertices leaves an induced subgraph, which allows us to focus on holes
once all claws, $S_3$'s, and $\overline{S_3}$'s have been eliminated;
however, the deletion of edges to fix holes of a \{claw, $S_3,
\overline{S_3}$\}-free graph may introduce new claws, $S_3$'s, and/or
$\overline{S_3}$'s.  Therefore, although a parameterized algorithm for
the chordal edge deletion problem has also been presented by Marx
\cite{marx-10-chordal-deletion}, there is no obvious way to use it to
solve the \uied\ problem.

Direct algorithms for \uivd\ were later discovered by van Bevern et
al.~\cite{bevern-10-pivd} and van 't Hof and
Villanger~\cite{villanger-13-pivd}, both using a two-phase approach.
The first phase of their algorithms breaks all forbidden induced
subgraphs on at most six vertices.  Note that this differentiates from
the aforementioned simple approach in that it breaks not only claws,
$S_3$'s, and $\overline{S_3}$'s, but all $C_\ell$'s with $\ell \le 6$.
Although this phase is conceptually intuitive, it is rather nontrivial
to efficiently carry it out, and the simple brute-force way introduces
an $n^6$ factor to the running time.  Their approaches diverse
completely in the second phase.  Van Bevern et
al.~\cite{bevern-10-pivd} used a complicated iterative compression
procedure that has a very high time complexity, while van 't Hof and
Villanger~\cite{villanger-13-pivd} showed that after the first phase,
the problem is linear-time solvable.  The main observation of van 't
Hof and Villanger~\cite{villanger-13-pivd} is that a connected \{claw,
$S_3, \overline{S_3}$, $C_4$, $C_5$, $C_6$\}-free graph is a proper
circular-arc graph, whose definition is postponed to
Section~\ref{sec:structures}.  In the conference presentation where
Villanger first announced the result, it was claimed that this settles
the edge deletion version as well.
However, the claimed result has not been materialized: It appears
neither in the conference version \cite{villanger-10-pivd} (which has
a single author) nor in the significantly revised and extended journal
version \cite{villanger-13-pivd}.  Unfortunately, this unsubstantiated
claim did get circulated.

Although the algorithm of van 't Hof and
Villanger~\cite{villanger-13-pivd} is nice and simple, its
self-contained proof is excruciatingly complex.  We revisit the
relation between unit interval graphs and some subclasses of proper
circular-arc graphs, and study it in a structured way.  In particular,
we observe that unit interval graphs are precisely those graphs that
are both chordal graphs and proper Helly circular-arc graphs.  As a
matter of fact, unit interval graphs can also be viewed as ``unit
Helly interval graphs'' or ``proper Helly interval graphs,''
thereby making a natural subclass of \phcag{s}.  The full containment
relations are summarized in Fig.~\ref{fig:classes-containment-2}; the
reader unfamiliar with some graph classes in this figure may turn to
the appendix for a brief overview.  These observations inspire us to
show that a connected \{claw, $S_3, \overline{S_3}$, $C_4$,
$C_5$\}-free graph is a proper Helly circular-arc graph.  It is easy
to adapt the linear-time certifying recognition algorithms for
\phcag{s} \cite{lin-13-nhcag-and-subclasses, cao-15-nhcag} to detect
an induced claw, $S_3, \overline{S_3}, C_4$, or $C_5$ if one exists.
Once all of them have been completely eliminated and the graph is a
\phcag{}, it is easy to solve the \uivd{} problem in linear time.
Likewise, using the structural properties of \phcag{s}, we can derive
a linear-time algorithm for \uied{} on them.  It is then
straightforward to use simple branching to develop the parameterized
algorithms stated in Theorem~\ref{thm:alg-1}, though some nontrivial
analysis is required to obtain the time bound for \uied{}.
\begin{figure}[t]
  \centering\small
  \captionsetup{justification=centering}
    \includegraphics{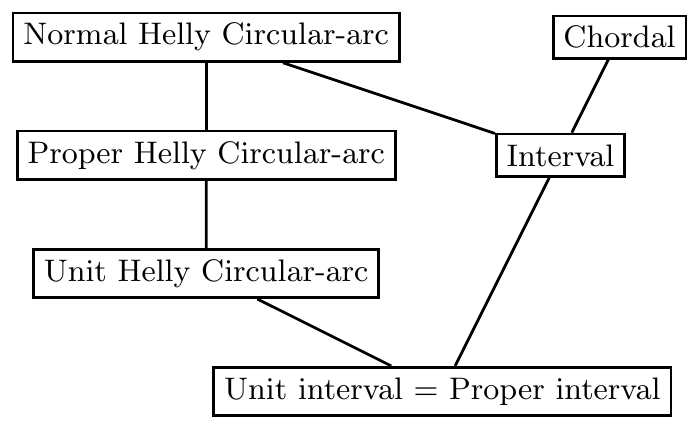} 
  \caption{Containment relation of related graph classes.\\ 
    Normal Helly circular-arc $\cap$ Chordal = Interval. \\Proper Helly
    circular-arc $\cap$ Chordal = Unit Helly circular-arc $\cap$
    Chordal = Unit interval.}
  \label{fig:classes-containment-2}
\end{figure}

Van Bevern et al.~\cite{bevern-10-pivd} showed that the \uivd{}
problem remains NP-hard on \{claw, $S_3, \overline{S_3}$\}-free
graphs.  After deriving a polynomial-time algorithm for the problem on
\{claw, $S_3, \overline{S_3}, C_4, C_5, C_6$\}-free graphs, van 't Hof
and Villanger~\cite{villanger-13-pivd} asked for its complexity on
\{claw, $S_3, \overline{S_3}, C_4$\}-free graphs.  (It is somewhat
intriguing that they did not mention the \{claw, $S_3, \overline{S_3},
C_4, C_5$\}-free graphs.)  Note that a \{claw, $S_3, \overline{S_3},
C_4$\}-free graph is not necessarily a proper (Helly) circular-arc
graph, evidenced by the $W_5$ (Fig.~\ref{fig:5-wheel}).  We answer
this question by characterizing connected \{claw, $S_3,
\overline{S_3}$, $C_4$\}-free graphs that are not \phcag{s}.  We show
that such a graph must be \textit{like} a $W_5$: If we keep only one
vertex from each twin class (all vertices in a twin class have the
same closed neighborhood) of the original graph, then we obtain a
$W_5$.  It is then routine to solve the problem in linear time.

\begin{theorem}\label{thm:vertex-deletion-phcag}
  The problems \uivd{} and \uied{} can be solved in $O(n + m)$ time on
  \{claw, $S_3, \overline{S_3}$, $C_4$\}-free graphs. 
\end{theorem}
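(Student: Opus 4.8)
The plan is to reduce first to a single connected graph: since a disjoint union of unit interval graphs is again a unit interval graph, the minimum number of deletions (vertices, or edges) for $G$ equals the sum of the minima over its connected components, so it suffices to compute the optimum of each component in linear time and compare the total against the budget. I would therefore fix a connected \{claw, $S_3$, $\overline{S_3}$, $C_4$\}-free graph $G$ and invoke the characterization of such graphs established above: either $G$ is a \phcag{}, or the quotient of $G$ by its true-twin equivalence (vertices sharing a closed neighborhood) is isomorphic to $W_5$. In the first case I would simply call the linear-time algorithms for \uivd{} and \uied{} on \phcag{s}, and we are done; the whole difficulty is the second case.

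So suppose $G$ is a ``blown-up $W_5$'': it is obtained from $W_5$ by substituting a clique (a true-twin module) $H$ for the hub and cliques $R_1,\dots,R_5$ for the consecutive rim vertices, with $H$ completely joined to every $R_i$ and $R_i$ completely joined to $R_{i+1}$ (indices modulo $5$). The first step is to pin down the obstructions. I would argue that two distinct true twins cannot both lie on an induced cycle without creating a chord, so every induced hole of $G$ meets each module in at most one vertex; and since $H$ is adjacent to everything, no hole meets $H$. Hence the induced holes of $G$ are exactly the $C_5$'s formed by choosing one vertex from each of $R_1,\dots,R_5$. As $G$ is already claw-, $S_3$-, $\overline{S_3}$-, and $C_4$-free, these rim $C_5$'s are its only forbidden induced subgraphs, so the task is precisely to destroy all of them (while, for \uied{}, not creating new ones).

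For \uivd{} this is clean: a set $D$ of vertices meets every transversal of $R_1\times\cdots\times R_5$ if and only if $D$ contains some entire $R_i$, and deleting a whole rim module leaves a blow-up of the unit interval graph $W_5-r_i$ (the hub joined to a $P_4$), which stays a unit interval graph since adding true twins preserves the class. Hence the optimum is $\min_i |R_i|$, computable in linear time once the twin classes are known. For \uied{} I would first enumerate the minimal edge-deletion sets that turn $W_5$ itself into a unit interval graph; a short case check (deleting a spoke leaves the rim $C_5$, deleting only rim edges raises the independence number of the hub's neighborhood and creates a claw, and two far-apart deletions create a $C_4$) shows these are exactly ``a spoke $h r_i$ together with one incident rim edge $r_i r_{i\pm 1}$,'' of size two. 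Lifting such a pattern to the blow-up amounts to fully cutting the join between $R_i$ and $R_{i+1}$ and disconnecting $H$ from one of these two endpoints, at cost $a_i a_{i+1} + h\cdot\min(a_i,a_{i+1})$ with $a_j := |R_j|$; the minimum over the five rim joins is a candidate optimum, and the resulting graph is again a blow-up of the verified unit interval graph, hence feasible.

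The main obstacle is the matching lower bound. Because edge deletions can create claws and $C_4$'s, a solution must simultaneously destroy every rim $C_5$ and avoid introducing new obstructions, and a priori it need not respect the module structure. I would first show that an optimal solution deletes no edge inside a module and touches only the rim joins and spokes (deleting within a clique never removes a rim $C_5$ and can only hurt), and then set up a charging argument: destroying every transversal forces the surviving rim-join edges to admit no cyclic compatible selection, which already costs a full cut in the cheapest direction, while the claw that a mere cut produces at the hub forces a further $h\cdot\min(a_i,a_{i+1})$ spoke deletions. Making this count tight, rather than merely order-of-magnitude, is the delicate point; once it is established, all quantities (twin classes, \phcag{} recognition, and the closed-form minima) are computed in $O(n+m)$ time, giving the stated bound.
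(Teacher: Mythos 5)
Your skeleton coincides with the paper's: split connected \{claw, $S_3$, $\overline{S_3}$, $C_4$\}-free graphs into \phcag{s} and fat $W_5$'s via the characterization theorem, handle \phcag{s} by the dedicated linear-time algorithms, delete a smallest rim clique for \uivd{} on a fat $W_5$, and for \uied{} cut the cheapest rim join plus the spokes to its cheaper endpoint. The vertex-deletion half of your argument is complete and correct, and your classification of the inclusion-minimal edge-deletion sets of $W_5$ itself (a spoke together with an incident rim edge) is also right. The gap is exactly where you place it: you never prove the matching lower bound for \uied{} on a fat $W_5$, and the charging argument you sketch is unlikely to close it. Two concrete problems: (i) your preliminary step, that an optimal solution deletes no edge inside a module and touches only rim joins and spokes, is itself an exchange-type claim requiring proof, not an observation; and (ii) even granting it, neither ``destroying every transversal already costs a full cut of some join'' nor the further count of $|H|\cdot\min(a_i,a_{i+1})$ spoke deletions follows from local counting: a solution could a priori delete partial joins at several places, and different hub vertices could disconnect themselves from different sides of a cut join ($R_i$ for some hub vertices, $R_{i+1}$ for others). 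Ruling such mixed patterns out requires analyzing exactly which claws, $C_4$'s, and holes they create, which is the delicate interaction you defer.

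The paper closes this gap with a short exchange argument that makes the charging unnecessary. If $u,v$ lie in the same twin class of $G$ and $\underline G$ is a maximum spanning unit interval subgraph with $|N_{\underline G}[u]|\ge|N_{\underline G}[v]|$, then rewiring $v$ so that its closed neighborhood becomes $N_{\underline G}[u]$ yields another spanning subgraph of $G$ (legal because $N_G[v]=N_G[u]$) that is still a unit interval graph (adding a true twin preserves the class) and has at least as many edges. Iterating over all pairs produces a maximum spanning unit interval subgraph in which every twin class of $G$ remains a twin class; such a subgraph is a blow-up of a spanning unit interval subgraph of $W_5$, so the optimization reduces to the edge-weighted problem on $W_5$ itself, where your classification of minimal deletion sets immediately gives the value $\min_i\bigl(a_i a_{i+1} + |H|\cdot\min(a_i,a_{i+1})\bigr)$, and your step (i) comes out for free. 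Replacing your charging sketch with this exchange argument would complete the proof along essentially the paper's lines.
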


We remark that the techniques we developed in previous work
\cite{cao-16-almost-interval-recognition} can also be used to derive
Theorems~\ref{thm:alg-1} and \ref{thm:alg-2}.  Those techniques,
designed for interval graphs, are nevertheless far more complicated
than necessary when applied to unit interval graphs.  The approach we
used in the current work, i.e., based on structural properties of
\phcag{s}, is tailored for unit interval graphs, hence simpler and
more natural.  Another benefit of this approach is that it enables us
to devise a parameterized algorithm for the general modification
problem to unit interval graphs, which allows all three types of
operations.
This formulation generalizes all the three single-type modifications,
and is also natural from the viewpoint of the aforementioned
applications for de-noising data, where different types of errors are
commonly found coexisting.  Indeed, the assumption that the input data
contain only a single type of errors is somewhat counterintuitive.
Formally, given a graph~$G$, the \emph{unit interval editing } problem
asks whether there are a set~$V_-$ of at most $k_1$ vertices, a
set~$E_-$ of at most~$k_2$ edges, and a set~$E_+$ of at most~$k_3$
non-edges, such that the deletion of $V_-$ and $E_-$ and the addition
of $E_+$ make $G$ a unit interval graph.  We show that it is FPT,
parameterized by the total number of allowed operations, $k := k_1 +
k_2 + k_3$.
\begin{theorem}\label{thm:alg-interval-editing}
  The unit interval editing problem can be solved in time $2^{{O}(k
    \log{k})}\cdot (n + m)$.
\end{theorem}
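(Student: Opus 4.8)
The plan is to follow the same two-phase strategy used for the deletion versions, but with a more careful second phase to accommodate the interaction of the three operations. In the first phase I would apply bounded-depth branching to eliminate every induced claw, $S_3$, $\overline{S_3}$, $C_4$, and $C_5$. Each such obstruction spans at most six vertices, and hence can be destroyed in one of a constant number of ways: deleting one of its vertices (charged to $k_1$), deleting one of its edges (charged to $k_2$), or adding one of its non-edges (charged to $k_3$); we branch over all applicable choices. Since every branch decrements one of the three budgets, the search tree has depth at most $k = k_1 + k_2 + k_3$ and constant branching factor, producing $2^{O(k)}$ leaves---so this phase is \emph{not} the bottleneck. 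At each leaf we are left with a graph whose every connected component is \{claw, $S_3$, $\overline{S_3}$, $C_4$, $C_5$\}-free, and therefore a \phcag{} by the structural result highlighted in the introduction. One subtlety that must be tracked throughout is that an edge addition can create a fresh small obstruction; I would handle this by re-running (or interleaving) the branching so that the leaves are genuinely obstruction-free, which only inflates the constant in the base of the exponent.

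It remains to solve \uie{} when every component is a \phcag{}; here the only forbidden subgraphs that can survive are holes $C_\ell$ with $\ell \ge 6$, and these are arbitrarily long, so we cannot afford to branch on them directly. This is exactly where the structural theory of \phcag{s} is needed, and where the slightly super-exponential factor will appear. The target class, the unit interval graphs, is precisely the chordal \phcag{s}, and is characterised by a linear \emph{indifference ordering} of its vertices in which every closed neighbourhood is contiguous. My plan is to exploit this: in any solution $(V_-, E_-, E_+)$ the set $U$ of vertices incident to a modification has size $O(k)$, and the resulting unit interval graph induces a linear order on the surviving vertices. I would \emph{guess the relative order of the $O(k)$ vertices of $U$} within this final indifference ordering; there are $2^{O(k \log k)}$ such orders. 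The point is that the vertices \emph{not} in $U$ keep the local structure inherited from the input \phcag{}, so once the skeleton order of $U$ is fixed, the placement of the remaining vertices---and hence whether the guessed skeleton can be completed to a genuine unit interval model within the residual budget---should be forced, and checkable in linear time by a sweep of the circular-arc model that cuts it open at the position dictated by the skeleton. Combining the two phases yields running time $2^{O(k)} \cdot 2^{O(k \log k)} \cdot O(n + m) = 2^{O(k \log k)} \cdot (n + m)$, as claimed.

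The hard part will be this second phase: making precise the claim that fixing the relative order of the $O(k)$ affected vertices forces the rest of the model, and that the forced completion---including deciding which surviving long holes are severed by which operation---can be verified in linear time. Two points deserve the most care. First, vertex deletions cannot be enumerated naively, since there are $\binom{n}{\le k_1}$ candidate sets; the argument must instead show that the deleted vertices are determined by the guessed skeleton together with the demand that every wrap-around hole of the circular model be cut. Second, edge additions interact with the properness and Helly conditions, so I must verify that completing a skeleton never reintroduces a claw, $S_3$, $\overline{S_3}$, or short hole; the linear-time certifying recognition for \phcag{s} cited earlier is the natural tool for this final check.
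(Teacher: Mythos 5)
Your first phase is fine, but your second phase has a genuine gap, and it sits exactly where the real difficulty of this theorem lies. After your phase one, the graph is a \phcag{}, but it can still contain holes of every length from $6$ up to $k_3+3$, each of which can be repaired by edge additions alone; so your second phase must genuinely handle all three operations. Your plan for it---guess the relative order of the $O(k)$ modified vertices in the final indifference ordering and argue that the rest is ``forced''---is circular as stated: the skeleton is an ordering of the set $U$ of affected vertices, but $U$ itself is unknown, and there are $n^{\Theta(k)}$ candidates for it, so enumerating skeletons is not a $2^{O(k\log k)}$-size guess at all. The subsequent claim that the completion is forced and checkable in linear time is precisely the hard, unproven step; you flag it yourself, but flagging it does not close it. Note that the paper explicitly concedes it was \emph{not} able to show that \uie{} is even polynomial-time solvable on \phcag{s}, so your second phase, as described, demands something stronger than what the paper achieves, with no argument supplied.

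The paper's route around this obstacle is different, and it is where its $2^{O(k\log k)}$ factor actually comes from: in the \emph{first} phase it branches not only on the constant-size obstructions but also on every hole of length $\ell \le k_3+3$ (delete one of its $\ell$ vertices, delete one of its $\ell$ edges, or add one of the $\ell$ chords $h_i h_{i+2}$), giving $3\ell = O(k)$ branches per node and tree depth $k$, hence $O((3k_3+21)^k) = 2^{O(k\log k)}$ leaves. In the resulting reduced graph every hole has length at least $k_3+4$, too long to be filled with $k_3$ additions, and Lemma~\ref{lem:no-addition} then shows that any inclusion-wise minimal editing set adds no edges at all. This collapses the second phase to a pure deletion problem (vertices and edges) on a \phcag{}, which Lemma~\ref{lem:mixed-hole-covers} solves in $O(m)$ time by a single sweep of the arc model: for each candidate point, delete the $p=k_1$ arcs nearest to it together with the edges crossing it. To rescue your outline you would either have to prove your forcing claim---in effect solving \uie{} on \phcag{s} in FPT time with the addition budget still active---or adopt the paper's device of pushing all addition-fixable holes into the branching phase, accepting the $k^{O(k)}$ branching factor there.
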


By and large, our algorithm for unit interval editing again uses the
two-phase approach.  However, we are not able to show that it can be
solved in polynomial time on \phcag{s}.  Therefore, in the first
phase, we use brute force to remove not only claws, $S_3$'s,
$\overline{S_3}$'s, and $C_4$'s, also all holes of length at most $k_3
+ 3$.  The high exponential factor in the running time is due to
purely this phase. After that, every hole has length at least $k_3 +
4$, and has to be fixed by deleting a vertex or edge.  We manage to
show that an inclusion-wise minimal solution of this reduced graph
does not add edges, and the problem can then be solved in linear time.

The study of general modification problems was initiated by Cai
\cite{cai-96-hereditary-graph-modification}, who observed that the
problem is FPT if the objective graph class has a finite number of
minimal forbidden induced subgraphs.  More challenging is thus to
devise parameterized algorithms for those graph classes whose minimal
forbidden induced subgraphs are infinite.  Prior to this paper, the
only known nontrivial graph class on which the general modification
problem is FPT is the chordal graphs \cite{cao-16-chordal-editing}.
Theorem~\ref{thm:alg-interval-editing} extends this territory by
including another well-studied graph class.  As a corollary,
Theorem~\ref{thm:alg-interval-editing} implies the fixed-parameter
tractability of the unit interval edge editing problem, which allows
both edge operations but not vertex deletions
\cite{burzyn-06-NPC-edge-modification}.  To see this we can simply try
every combination of $k_2$ and $k_3$ as long as $k_2+k_3$ does not
exceed the given bound.

\paragraph{Organization.} The rest of the paper is organized as
follows.  Section~\ref{sec:structures} presents combinatorial and
algorithmic results on \{claw, $S_3, \overline{S_3}$, $C_4$\}-free
graphs.  Sections~\ref{sec:vertex-deletion} and
\ref{sec:edge-deletion} present the algorithms for \uivd\ and \uied\
respectively
(Theorems~\ref{thm:alg-1}--\ref{thm:vertex-deletion-phcag}).
Section~\ref{sec:general} extends them to solve the general editing
problem (Theorem~\ref{thm:alg-interval-editing}).
Section~\ref{sec:remarks} closes this paper by discussing some
possible improvement and new directions.  The appendix provides a
brief overview of related graph classes as well as their
characterizations by forbidden induced subgraphs.

\section{\{Claw, $S_3, \overline{S_3}$, $C_4$\}-free graphs}\label{sec:structures}
All graphs discussed in this paper are undirected and simple.  A graph
$G$ is given by its vertex set $V(G)$ and edge set $E(G)$, whose
cardinalities will be denoted by $n$ and $m$ respectively.  All input
graphs in this paper are assumed to be nontrivial ($n > 1$) and
connected, hence $n = O(m)$.
For $\ell\ge 4$, we use $C_\ell$ to denote a hole on $\ell$ vertices;
if we add a new vertex to a $C_\ell$ and make it adjacent to no or all
vertices in the hole, then we end with a $C_\ell^*$ or $W_\ell$,
respectively.  For a hole $H$, we have $|V(H)| = |E(H)|$, denoted by
$|H|$.  The complement graph $\overline G$ of a graph $G$ is defined
on the same vertex set $V(G)$, where a pair of vertices $u$ and $v$ is
adjacent if and only if $u v \not\in E(G)$; e.g., $\overline{W_5} =
C^*_5$, and depicted in Fig.~\ref{fig:complement-c-6} is the
complement of $C_6$.

An \emph{interval graph} is the intersection graph of a set of
intervals on the real line.  A natural way to extend interval graphs
is to use arcs and a circle in the place of intervals and the real
line, and the intersection graph of arcs on a circle is a {\em
  circular-arc graph}.  The set of intervals or arcs is called an {\em
  interval model} or {\em arc model} respectively, and it can be
specified by their $2 n$ endpoints.  In this paper, all intervals and
arcs are closed, and no distinct intervals or arcs are allowed to
share an endpoint in the same model; these restrictions do not
sacrifice any generality.  In a {\em unit interval model} or a {\em
  unit arc model}, every interval or arc has length one.  An interval
or arc model is \emph{proper} if no interval or arc in it properly
contains another interval or arc.  A graph is a {\em unit interval
  graph, proper interval graph, unit circular-arc graph}, or {\em
  proper circular-arc graph} if it has a unit interval model, proper
interval model, unit arc model, or proper arc model, respectively.
The forbidden induced subgraphs of unit interval graphs have been long
known.
\begin{theorem}[\cite{wegner-67-dissertation}]\label{thm:uig-fis}
  A graph is a unit interval graph if and only if it contains no claw,
  $S_3, \overline{S_3}$, or any hole.
\end{theorem}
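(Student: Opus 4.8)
The plan is to prove the two implications separately, with the forward (necessity) direction being routine and the converse (sufficiency) carrying all the content.

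For necessity I would argue by heredity: since unit interval graphs are closed under taking induced subgraphs, it suffices to check that none of the forbidden graphs is itself a unit interval graph. A hole is not chordal, hence not an interval graph, and \emph{a fortiori} not a unit interval graph. The tent $S_3$ and the net $\overline{S_3}$ are chordal but each carries an asteroidal triple (the three ``outer'', respectively pendant, vertices), so neither is even an interval graph. The claw $K_{1,3}$ \emph{is} an interval graph, but in any interval model of it the centre meets three pairwise-disjoint leaf intervals, whence the middle leaf is properly contained in the centre; no assignment of unit lengths can realize a proper containment, so the claw is not a unit interval graph. Thus every graph containing one of these as an induced subgraph is excluded.

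For sufficiency, let $G$ be \{claw, $S_3$, $\overline{S_3}$, hole\}-free, and build a unit interval model in three stages. First, hole-freeness says exactly that $G$ is chordal, so $G$ has a clique tree $T$ whose nodes are the maximal cliques and in which, for every vertex $v$, the cliques containing $v$ induce a subtree. The heart of the argument is to show that $T$ may be chosen to be a \emph{path}: if some minimal separator had three cliques branching off it, I would select representative vertices from the three branches and, together with a common neighbour in the separator, exhibit an induced claw, tent, or net, contradicting the hypotheses. A chordal graph admitting such a clique path is precisely an interval graph, the left-to-right order of the cliques along the path giving the order of the intervals. Second, I would upgrade this interval model to a \emph{proper} one using claw-freeness: were one interval $I_u$ to properly contain another $I_v$, then neighbours of $u$ lying strictly to the left and to the right of $I_v$, together with $v$, would form a claw centred at $u$; ruling this out lets one slide endpoints until no interval contains another. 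Third, I would equalize all lengths to obtain a \emph{unit} model, which a proper model permits precisely because it forbids the containments that obstruct rescaling.

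The main obstacle is the clique-path step, i.e.\ establishing asteroidal-triple-freeness from the three small forbidden subgraphs; this is the combinatorial core, essentially the relevant case of a Lekkerkerker--Boland-style analysis specialized to claw-free chordal graphs. By contrast, the passage from proper to unit lengths is comparatively routine. An alternative, fully self-contained route would bypass interval graphs altogether: construct a linear vertex order directly (for instance by LexBFS) and prove, using the absence of the forbidden subgraphs, that it satisfies the umbrella property---if $u$ and $w$ are adjacent and $u<v<w$ then $uv,vw\in E(G)$---so that placing the $i$-th vertex at position $i$ with a unit interval yields the desired model. Either way, the difficulty is concentrated in proving the ordering (equivalently, the interval) property, while reading off unit lengths is the easy part.
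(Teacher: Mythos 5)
The paper does not actually prove Theorem~\ref{thm:uig-fis}: it is quoted from Wegner's dissertation, and the only justification offered anywhere in the paper is a remark in the appendix, where the characterization is derived from the known list of minimal forbidden induced subgraphs of \emph{interval} graphs together with the observation that each such obstruction other than a hole, an $S_3$, or an $\overline{S_3}$ contains an induced claw (plus the fact, discussed in Section~2, that proper interval models can be made unit). Your proposal instead rebuilds sufficiency from first principles---chordality, clique tree, clique path, proper model, unit lengths---so it is a genuinely different route: the paper leans on the Lekkerkerker--Boland-type characterization of interval graphs as a black box, while you re-prove the fragment of it that is needed. Your necessity direction is fine (the asteroidal-triple argument for $S_3$ and $\overline{S_3}$, and the containment argument for the claw, are both correct).

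However, two steps of the sufficiency sketch are not yet proofs, and one of them is false as literally stated. The claim ``were one interval $I_u$ to properly contain another $I_v$, then neighbours of $u$ lying strictly to the left and to the right of $I_v$, together with $v$, would form a claw centred at $u$'' presupposes that such neighbours exist; they need not (for instance $u$ may simply have a long interval and few neighbours), in which case the containment yields no claw at all and no contradiction. The correct argument must split into cases: either the two flanking neighbours exist, giving a claw, or one side is empty, in which case the model is \emph{modified} (shrink $I_u$ or stretch $I_v$); and then one must show this modification preserves the represented graph and that the process terminates without creating new containments---this is exactly the nontrivial content of Roberts' theorem, and ``slide endpoints until no interval contains another'' does not discharge it. Second, the clique-path step, which you correctly identify as the combinatorial core, is named rather than executed: choosing ``representative vertices from the three branches'' of a branching separator and always landing on a claw, tent, or net requires picking the representatives carefully (closest to the separator, or via shortest paths) and a full case analysis on their adjacencies to the separator. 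So the architecture is right and classical, but as written the proposal has a concrete broken step and defers its hardest lemma.
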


Clearly, any (unit/proper) interval model can be viewed as a
(unit/proper) arc model leaving some point uncovered, and hence all
(unit/proper) interval graphs are always (unit/proper) circular-arc
graphs.  A unit interval/arc model is necessarily proper, but the
other way does not hold true in general.  A well-known result states
that a proper interval model can always be made unit, and thus these
two graph classes coincide
\cite{roberts-69-indifference-graphs,wegner-67-dissertation}.\footnote{The
  reason we choose ``unit'' over ``proper'' in the title of this paper
  is twofold.  On the one hand, the applications we are interested in
  are more naturally represented by unit intervals.  On the other
  hand, we want to avoid the use of ``proper interval subgraphs,''
  which is ambiguous.}  This fact will be heavily used in the present
paper; e.g., most of our proofs consist in modifying a proper arc
model into a proper interval model, which represents the desired unit
interval graph.  On the other hand, it is easy to check that the $S_3$
is a proper circular-arc graph but not a unit circular-arc graph.
Therefore, the class of unit circular-arc graphs is a proper subclass
of proper circular-arc graphs.  

An arc model is {\em Helly} if every set of pairwise intersecting arcs
has a common intersection.  A {circular-arc graph} is \emph{proper
  Helly} if it has an {arc model} that is both proper and
Helly.

\begin{theorem}[\cite{tucker-74-structures-cag, lin-13-nhcag-and-subclasses, cao-15-nhcag}]\label{thm:phcag}
  A graph is a \phcag{} if and only if it contains no claw, $S_3,
  \overline{S_3}$, $W_4$, $W_5$, $\overline{C_6}$, or $C_\ell^*$ for
  $\ell \ge 4$.
\end{theorem}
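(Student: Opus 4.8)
The plan is to prove both implications: necessity by exhibiting obstructions, and sufficiency by constructing a proper Helly arc model. Throughout I will use two elementary features of such a model $\mathcal{A}$: \emph{properness} means no arc of $\mathcal{A}$ contains another, so around any point of the circle the arcs through it are linearly ordered by their endpoints; and the \emph{Helly} property means that the arcs of any clique (any set of pairwise intersecting arcs) share a common point. I will also repeatedly invoke the following covering lemma, which needs no Helly assumption: \emph{in any circular-arc model, the arcs corresponding to an induced hole cover the whole circle}, for otherwise cutting the circle at an uncovered point turns those arcs into intervals and yields an interval model of the hole, contradicting that a hole is not an interval graph.

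For necessity I would show that none of the listed graphs admits a proper Helly model. The infinite family is easiest: in $C_\ell^*$ the $\ell$ arcs of the induced hole already cover the circle by the covering lemma, leaving no room for the arc of the extra vertex, which must be disjoint from all of them; hence $C_\ell^*$ is not even a circular-arc graph. (The same covering argument shows that any $C_\ell^*$-free graph that contains a hole must be connected, with the hole dominating every vertex---a fact I will reuse in the sufficiency proof.) The six finite obstructions I would dispatch by short case analyses combining the two features above: for $W_4$ and $W_5$ the rim is a hole whose arcs cover the circle, and an arc of the hub meeting every rim arc is forced either to contain a rim arc (violating properness) or to witness a clique with empty common intersection (violating Helly); for the prism $\overline{C_6}$ and for $S_3,\overline{S_3}$ one places the arcs of each triangle at their forced common point and checks that the remaining adjacencies cannot be realized without a containment or a Helly violation. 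The point worth stressing is that $S_3$ and $\overline{S_3}$ \emph{are} proper circular-arc graphs, so here it is the Helly requirement, not properness, that rules them out.

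For sufficiency, let $G$ be free of all the listed graphs. If $G$ is chordal then it contains no hole, so by Theorem~\ref{thm:uig-fis} it is \{claw, $S_3,\overline{S_3}$\}-free chordal, hence a unit interval graph; a unit interval model is proper and, being one-dimensional, automatically Helly, so reading it as an arc model with an uncovered point exhibits $G$ as a \phcag{}. Otherwise $G$ has a hole $H$; by the observation above $G$ is connected and $H$ dominates $G$, so every vertex outside $H$ has a neighbor on $H$. The next step is to pin down each external neighborhood on $H$: claw-freeness forbids an independent triple inside $N(u)\cap V(H)$, and together with the exclusion of $S_3,\overline{S_3},W_4,W_5$ this forces $N(u)\cap V(H)$ to be a short contiguous arc of the hole. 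I would then lay the arcs of $H$ around the circle so that consecutive ones overlap and jointly cover it, and insert each external vertex as an arc spanning exactly the portion of the circle dictated by its contiguous hole-neighborhood, resolving ties among vertices with equal or nested hole-neighborhoods by the local structure of $G$. Finally I would verify that the resulting model is proper and Helly, the latter amounting to checking that every clique of $G$ has a common point on the circle.

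The main obstacle is precisely this last verification in the hole case: establishing that the local, per-vertex placements fit together into a single globally proper model, and then that it is Helly. Properness can be organized through the circular analogue of the straight-enumeration characterization of proper interval graphs, but the Helly property is the delicate part, since proper models need not be Helly (witness $S_3$). This is where the remaining obstructions do the real work: I expect to show that a clique whose arcs have no common point would, after tracing how its members sit relative to the dominating hole, contain one of $W_4$, $W_5$, $\overline{C_6}$, $S_3$, or $\overline{S_3}$. Handling external vertices whose hole-arcs overlap heavily---so as to preserve properness and the common-point property simultaneously---is the step I would expect to require the most care, most likely an induction on $|V(G)|$ or a careful amortized placement rather than a one-shot construction.
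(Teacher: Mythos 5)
Your proposal and the paper diverge at the outset: the paper does not prove Theorem~\ref{thm:phcag} from first principles at all. It is a cited result \cite{tucker-74-structures-cag, lin-13-nhcag-and-subclasses, cao-15-nhcag}, and the appendix sketches the intended derivation: start from Tucker's forbidden-induced-subgraph characterization of proper circular-arc graphs, combine it with Corollary~5 of \cite{lin-13-nhcag-and-subclasses} (a proper circular-arc graph that is not a \phcag{} must contain a $W_4$ or an $S_3$), and then perform routine containment checks ($S^*_3$ contains an $S_3$; each of $\overline{F_1}, \overline{F_2}, \overline{F_6}, \overline{F_7}$, $\overline{C_{2\ell}}$, and $\overline{C^*_{2\ell-1}}$ contains a $W_4$) to see that the merged list of obstructions reduces to exactly the families in the statement. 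All of the heavy lifting is thus delegated to two known theorems, whereas you set out to prove both directions directly, which amounts to re-proving a restricted form of Tucker's theorem together with the Helly analysis of \cite{lin-13-nhcag-and-subclasses}.

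Your necessity direction is essentially sound: the covering lemma is correct, the $C_\ell^*$ argument is complete (such a graph is not even a circular-arc graph), the finite cases, though only asserted as ``short case analyses,'' are doable, and you are right that for $S_3$ and $\overline{S_3}$ the obstruction is the interplay of properness with the Helly requirement, since both are proper circular-arc graphs. The genuine gap is the sufficiency direction. Everything after ``I would then lay the arcs of $H$ around the circle'' is a plan, not a proof: you do not establish that the hole-neighborhood of an outside vertex is contiguous (this itself requires the excluded subgraphs, and ``short'' needs to be made precise), you do not specify how ties and nestings among overlapping hole-neighborhoods are resolved, and you explicitly defer the verification that the resulting model is simultaneously proper and Helly --- which you correctly identify as the delicate point, precisely because proper-but-not-Helly models of graphs such as $S_3$ exist. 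That verification is not a finishing touch; it is the entire content of the theorem, and nothing in the proposal indicates how a Helly violation in a putative proper model would be traced back to an induced $W_4$, $W_5$, $\overline{C_6}$, $S_3$, or $\overline{S_3}$. As written, the ``if'' direction remains unproven. If you want a self-contained treatment, the modular route the paper points to is far shorter: quote Tucker's characterization, prove or quote the $W_4$/$S_3$ dichotomy for proper circular-arc graphs admitting no proper Helly model, and finish with the containment checks.
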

The following is immediate from Theorems~\ref{thm:uig-fis}
and~\ref{thm:phcag}.
\begin{corollary}\label{lem:chordal-phcag}
  If a \phcag{} is chordal, then it is a unit interval graph.
\end{corollary}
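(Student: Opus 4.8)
The plan is to combine the two forbidden-induced-subgraph characterizations supplied by Theorems~\ref{thm:uig-fis} and~\ref{thm:phcag} and read the conclusion off directly, since the corollary is asserted to be immediate from them. By Theorem~\ref{thm:uig-fis}, a graph is a unit interval graph precisely when it excludes the three fixed graphs---the claw, $S_3$, and $\overline{S_3}$---together with every hole $C_\ell$ for $\ell \ge 4$. So my target reduces to verifying that a chordal \phcag{} excludes each item on this list.

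First I would invoke Theorem~\ref{thm:phcag}: if $G$ is a \phcag{}, then it contains no claw, $S_3$, or $\overline{S_3}$, since these three graphs appear verbatim in the (longer) forbidden list of that theorem, alongside $W_4$, $W_5$, $\overline{C_6}$, and the family $C_\ell^*$. This disposes of the three fixed forbidden subgraphs of Theorem~\ref{thm:uig-fis} at once.

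Next I would use the hypothesis that $G$ is chordal. By definition a chordal graph contains no hole, i.e., no induced $C_\ell$ with $\ell \ge 4$, and this accounts for the entire remaining (infinite) part of the unit interval forbidden list. Having verified that $G$ avoids the claw, $S_3$, $\overline{S_3}$, and all holes, I would appeal once more to Theorem~\ref{thm:uig-fis}, now in its ``if'' direction, to conclude that $G$ is a unit interval graph.

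There is essentially no obstacle here: the argument is purely a comparison of the two forbidden-subgraph lists, observing that the unit interval list is exactly $\{$claw, $S_3$, $\overline{S_3}\}$ augmented by all holes, that the \phcag{} hypothesis rules out the first three, and that chordality rules out the rest. The only point worth stating explicitly is that the additional obstructions in Theorem~\ref{thm:phcag} ($W_4$, $W_5$, $\overline{C_6}$, and $C_\ell^*$) play no role, since they need not be excluded---none of them belongs to the unit interval forbidden list.
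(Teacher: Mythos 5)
Your proof is correct and is precisely the argument the paper intends: the paper declares the corollary ``immediate from Theorems~\ref{thm:uig-fis} and~\ref{thm:phcag},'' and your write-up is exactly that comparison of the two forbidden-induced-subgraph lists, with the claw, $S_3$, and $\overline{S_3}$ excluded by the \phcag{} hypothesis and all holes excluded by chordality. Nothing is missing; you have simply made explicit what the paper leaves implicit.
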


From Theorems~\ref{thm:uig-fis} and \ref{thm:phcag} one can also
derive the following combinatorial result.  But since we will prove a
stronger result in Theorem~\ref{thm:characterization} that implies it,
we omit its proof here.
\begin{proposition}\label{lem:phcag}
  Every connected \{claw, $S_3, \overline{S_3}$, $C_4$, $C_5$\}-free graph is a
  \phcag{}.
\end{proposition}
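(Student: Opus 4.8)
The plan is to invoke the forbidden-induced-subgraph characterization of \phcag{s} from Theorem~\ref{thm:phcag}: it suffices to show that a connected \{claw, $S_3, \overline{S_3}, C_4, C_5$\}-free graph~$G$ contains none of a claw, $S_3$, $\overline{S_3}$, $W_4$, $W_5$, $\overline{C_6}$, or $C_\ell^*$ ($\ell \ge 4$) as an induced subgraph. Three of these, the claw, $S_3$, and $\overline{S_3}$, are forbidden by assumption, so the whole task reduces to excluding $W_4$, $W_5$, $\overline{C_6}$, and the family $C_\ell^*$ using only $C_4$- and $C_5$-freeness together with connectivity.

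First I would dispose of $W_4$, $W_5$, $\overline{C_6}$, and the short members $C_4^*, C_5^*$, none of which needs connectivity. The rim of $W_4$ induces a $C_4$ and the rim of $W_5$ induces a $C_5$; the hole of $C_4^*$ (resp.\ $C_5^*$) is already an induced $C_4$ (resp.\ $C_5$); and $\overline{C_6}$ contains an induced $C_4$, since $C_6$ contains an induced $2K_2$ (two opposite edges) whose complement within those four vertices is a $C_4$. As $G$ is $C_4$- and $C_5$-free, none of these can occur.

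The main case, and the only place where connectivity is essential, is to rule out an induced $C_\ell^*$ with $\ell \ge 6$, i.e.\ an induced hole $H = v_1 \cdots v_\ell$ together with a vertex $u$ having no neighbour on $H$. Assuming such a configuration exists, I would use connectivity to take a shortest path from $u$ to $V(H)$; since $u$ has no neighbour on $H$ this path has length at least two, and its last two vertices lying outside $H$ supply a vertex $w \notin V(H)$ adjacent to $H$ together with a neighbour $y$ of $w$ satisfying $y \notin V(H)$ and $N(y) \cap V(H) = \emptyset$. The crux is a short analysis of $N(w) \cap V(H)$. Claw-freeness forbids an isolated neighbour (a neighbour $v_i$ of $w$ whose two hole-neighbours miss $w$ would give a claw on $\{w, v_{i-1}, v_i, v_{i+1}\}$), so the neighbours of $w$ on $H$ form maximal runs of consecutive vertices, each of length at least two. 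If some run has length at least three and contains $v_a, v_{a+1}, v_{a+2}$, then $\{w, y, v_a, v_{a+2}\}$ induces a claw centred at $w$. Otherwise every run has length exactly two; fixing one run $\{v_a, v_{a+1}\}$, its maximality yields $w \not\sim v_{a-1}$ and $w \not\sim v_{a+2}$, and then $\{y, w, v_{a-1}, v_a, v_{a+1}, v_{a+2}\}$ induces the net $\overline{S_3}$, namely the triangle $w v_a v_{a+1}$ with the three pendants $y$, $v_{a-1}$, and $v_{a+2}$. Either outcome contradicts the hypotheses, so no such $u$ exists.

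I expect this last case to be the main obstacle. The short configurations fall out immediately from $C_4$/$C_5$-freeness, whereas excluding the long $C_\ell^*$ genuinely needs both the connectedness of~$G$ (a disjoint union of a hole and an isolated vertex is itself a \{claw, $S_3, \overline{S_3}, C_4, C_5$\}-free graph, so the hypothesis of connectivity cannot be dropped) and the attachment analysis above. The only delicate verifications there are that the six-vertex set really induces the net, i.e.\ that no stray hole-edges appear among $v_{a-1}, v_a, v_{a+1}, v_{a+2}$ and that $y$ is adjacent to nothing in $H$; both hold because $H$ is an induced hole of length at least six and $y$ has no neighbour on $H$.
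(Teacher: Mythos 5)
Your proof is correct, and it is the direct derivation that the paper alludes to but never writes out: the paper states that Proposition~\ref{lem:phcag} "can be derived from Theorems~\ref{thm:uig-fis} and~\ref{thm:phcag}" and then omits the proof, instead obtaining the proposition as a corollary of the stronger Theorem~\ref{thm:characterization} (every connected $\cal F$-free graph is a fat $W_5$ or a \phcag{}), whose proof is the $O(m)$-time certifying algorithm of Fig.~\ref{fig:alg-recognition}. Your route works entirely at the level of forbidden induced subgraphs: you kill $W_4$, $W_5$, $\overline{C_6}$, $C_4^*$, $C_5^*$ by locating an induced $C_4$ or $C_5$ inside each, and you kill $C_\ell^*$ for $\ell\ge 6$ by the connectivity argument (shortest path to the hole, then extracting a claw or a net from the attachment of its last outside vertex $w$). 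That last argument is essentially the one embedded in steps~4.1--4.4 of the paper's algorithm, though organized differently: the paper cases on whether the attaching vertex has one neighbor, two consecutive neighbors, or two nonadjacent neighbors on the hole, whereas you case on maximal runs of consecutive neighbors (an isolated neighbor gives a claw centered on the hole; a run of length at least three gives a claw centered at $w$ using $y$; runs of length exactly two give the net $\overline{S_3}$)---the two analyses are equivalent. What the difference buys: your proof is self-contained, purely combinatorial, and shorter, and because you may use $C_5$-freeness from the start (the rim of a $W_5$ is an induced $C_5$), the fat-$W_5$ alternative never arises; the paper's detour through Theorem~\ref{thm:characterization} is heavier but yields strictly more, namely the linear-time certifying algorithm and the fat-$W_5$ dichotomy for graphs that are only $\cal F$-free, both of which are needed elsewhere (e.g., in Theorem~\ref{thm:vertex-deletion-phcag} and in the branching analyses). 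The delicate points you flag are indeed the right ones, and your verifications of them (non-adjacency of hole vertices at distance two and three, which uses $\ell\ge 6$, and the emptiness of $N(y)\cap V(H)$ from the shortest-path choice) are sound.
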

Note that in Proposition~\ref{lem:phcag}, as well as most
combinatorial statements to follow, we need the graph to be connected.
Circular-arc graphs are not closed under taking disjoint unions.  If a
(proper Helly) circular-arc graph is not chordal, then it is
necessarily connected.  In other words, a disconnected (proper)
circular-arc graph must be a (unit) interval graph.

Proposition~\ref{lem:phcag} can be turned into an algorithmic
statement.  We say that a recognition algorithm (for a graph class) is
\emph{certifying} if it provides a minimal forbidden induced subgraph
of the input graph $G$ when $G$ is determined to be not in this class.
Linear-time certifying algorithms for recognizing \phcag{s} have been
reported by Lin et al.~\cite{lin-13-nhcag-and-subclasses} and Cao et
al.~\cite{cao-15-nhcag}, from which one can derive a linear-time
algorithm for detecting an induced claw, $\overline{S_3}$, $S_3$,
$C_4$, or $C_5$ from a graph that is not a \phcag{}.

This would suffice for us to develop most of our main results.
Even so, we would take pain to prove slightly stronger results (than
Proposition~\ref{lem:phcag}) on $\cal F$-free graphs, where $\cal F$
denotes the set \{claw, $S_3, \overline{S_3}$, $C_4$\}.  The purpose
is threefold.  First, they enable us to answer the question asked by
van 't Hof and Villanger~\cite{villanger-13-pivd}, i.e., the
complexity of \uivd\ on $\cal F$-free graphs, thereby more accurately
delimiting the complexity border of the problem.  Second, as we will
see, the disposal of $C_5$'s would otherwise dominate the second phase
of our algorithm for \uied{}, so excluding them enables us to obtain
better exponential dependency on $k$ in the running time.  Third, the
combinatorial characterization might be of its own interest.

A \emph{(true) twin class} of a graph $G$ is an inclusion-wise maximal
set of vertices in which all have the same closed neighborhood.  A
graph is called a {\em fat $W_5$} if it has precisely six twin classes
and it becomes a $W_5$ after we remove all but one vertices from each
twin class.\footnote{Interestingly, in the study of proper
  circular-arc graphs that are chordal, Bang-Jensen and
  Hell~\cite{bang-jensen-94-chordal-proper-cag} showed that if a
  connected \{claw, $\overline{S_3}$, $C_4, C_5\}$-free graph contains
  an $S_3$, then it must be a fat $S_3$, which is defined analogously
  as a fat $W_5$.}  By definition, vertices in each twin class induce
a clique.  The five cliques corresponding to hole in the $W_5$ is the
{\em fat hole}, and the other clique is the {\em hub}, of this fat
$W_5$.
%
%

\begin{theorem}\label{thm:characterization}
  Let $G$ be a connected graph.  
  \begin{enumerate}[(1)]
  \item If $G$ is $\cal F$-free, then it is either a fat $W_5$ or a
    \phcag{}.
  \item In $O(m)$ time we can either detect an induced subgraph of $G$
    in $\cal F$, partition $V(G)$ into six cliques constituting a fat
    $W_5$, or build a proper and Helly arc model for $G$.
  \end{enumerate}
\end{theorem}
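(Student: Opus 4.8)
The plan is to prove both parts through one structural analysis resting on Theorem~\ref{thm:phcag}. For part~(1), I would first note that $\cal F$-freeness already disposes of most of the obstructions in Theorem~\ref{thm:phcag}: each of $W_4$, $\overline{C_6}$, and $C_4^*$ contains an induced $C_4$ (the rim of $W_4$, a square side face of the prism $\overline{C_6}$, and the four-cycle of $C_4^*$). Hence an $\cal F$-free graph that is not a \phcag{} must contain an induced $W_5$ or an induced $C_\ell^*$ for some $\ell\ge 5$. So it suffices to prove that in the former case $G$ is a fat $W_5$, and that the latter case cannot occur.

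First I would isolate a lemma on how a vertex meets a hole $H=C_\ell$ with $\ell\ge 5$ in an $\cal F$-free graph. A vertex adjacent to exactly one vertex of $H$ creates a claw, and a vertex adjacent to two vertices of $H$ at distance two creates a $C_4$; so the neighborhood in $H$ of any vertex is a union of ``runs'' of consecutive vertices, each of length at least two. The crucial second half of the lemma is that no edge can join a vertex $w$ meeting $H$ to a vertex $w'$ missing $H$ entirely: taking any run $r_a,\dots,r_b$ of $w$, if it has length two then $\{w,r_a,r_b\}$ with the three private neighbors $w'$, $r_{a-1}$, $r_{b+1}$ forms an $\overline{S_3}$, and if it has length at least three then $\{w;\,r_a,r_{a+2},w'\}$ is a claw. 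Since $G$ is connected, this forces $V(G)=N[V(H)]$, and in particular rules out any induced $C_\ell^*$ with $\ell\ge5$. Thus a non-\phcag{} $\cal F$-free graph contains an induced $W_5$.

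The heart of the argument is then that a connected $\cal F$-free graph containing an induced $W_5$, with hub $h$ and rim $H=r_1\cdots r_5$, is a fat $W_5$. I would classify every vertex $v$ by the pair consisting of its neighborhood in $H$ (a run of length $2,3,4$, or $5$, by the lemma applied to the five-hole $H$) together with whether $v\sim h$. Claw and $C_4$ already kill most combinations; the two genuinely delicate ones are that a run of length two missing $h$ yields an $S_3$ with triangle $\{r_a,r_{a+1},h\}$ and independent set $\{v,r_{a-1},r_{a+2}\}$, and that a run of length four together with $h$ hides a non-obvious $C_4$ (for the run $r_1r_2r_3r_4$, the four-cycle $r_1\,v\,r_4\,r_5$). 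After this, the only surviving patterns are ``adjacent to $h$ and to a consecutive triple $r_{i-1}r_ir_{i+1}$'' and ``adjacent to $h$ and to all of $H$'', which are precisely the closed neighborhoods in $H$ of $r_i$ and of $h$. A further, more routine, round of the same reasoning (two same-pattern vertices that were non-adjacent would span a $C_4$) upgrades ``same pattern'' to ``true twins'', so $V(G)$ splits into exactly six twin classes whose quotient is $W_5$, i.e.\ $G$ is a fat $W_5$. I expect this exhaustive but delicate case analysis---catching the hidden $C_4$'s, invoking $S_3$ and $\overline{S_3}$ where claw and $C_4$ do not suffice, and then globalizing the twin property---to be the main obstacle.

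For part~(2) I would run a linear-time certifying recognition algorithm for \phcag{s}. If it returns a proper and Helly arc model, we are in the third case. Otherwise it returns a minimal forbidden induced subgraph $F$ from the list of Theorem~\ref{thm:phcag}: if $F$ is a claw, $S_3$, or $\overline{S_3}$ we report it; if $F$ is $W_4$, $\overline{C_6}$, or $C_4^*$ we extract and report its induced $C_4$ in constant time; if $F=C_\ell^*$ with $\ell\ge5$ we run a breadth-first search from the vertex missing the hole towards the hole and extract, in $O(m)$ time, the $\overline{S_3}$ or claw guaranteed by the attachment lemma; and if $F=W_5$ we make the classification of part~(1) constructive---marking the neighbors of the six base vertices in $O(m)$ time so each vertex's pattern is read off in $O(1)$---emitting the witnessed small forbidden subgraph whenever a vertex falls outside the six admissible patterns, and otherwise outputting the six cliques after an $O(m)$ verification that they form a fat $W_5$. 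Every branch runs in $O(n+m)=O(m)$ time.
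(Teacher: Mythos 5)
Your proposal is correct and follows essentially the same route as the paper: both invoke the certifying \phcag{} recognition algorithm, reduce $W_4$, $\overline{C_6}$, and $C_4^*$ to an induced $C_4$, dispose of $C_\ell^*$ ($\ell \ge 5$) via the shortest-path/attachment argument yielding a claw or $\overline{S_3}$, and settle the $W_5$ case by classifying every remaining vertex against the six base vertices (with the same $S_3$ extraction for a length-two run missing the hub, the same hidden $C_4$ for four rim neighbors, and the same claw/$C_4$ checks enforcing the six clique classes). The only difference is presentational: you prove the structural assertion (1) first and then make it constructive, whereas the paper states the $O(m)$ algorithm and derives (1) from its correctness.
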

\begin{figure}[h]
\setbox4=\vbox{\hsize28pc \noindent\strut
\begin{quote}
  \vspace*{-5mm}

  {\bf Algorithm} {\bf recognize-$\cal F$-free}($G$)
  \\
  {\bf Input}: a connected graph $G$.
  \\
  {\bf Output}: a proper and Helly arc model, a subgraph in $\cal F$,
  or six cliques making a fat $W_5$.
  \begin{tabbing}
    AAA\=aaA\=Aaa\=MMMMAAAAAAAAAAAAa\=A \kill
    0.\> call the recognition algorithm for \phcag{s} \cite{cao-15-nhcag};
    \\
    1.\> {\bf if} a proper and Helly arc model $\cal A$ is found {\bf then 
      return} it;
    \\
    2.\> {\bf if} a claw, $\overline{S_3}$, or $S_3$ is found {\bf then return} it;
    \\
    3.\> {\bf if} a $W_4$, $C^*_4$, or $\overline{C_6}$ is found {\bf then 
      return} a $C_4$ contained in it;
    \\
    4.\> {\bf if} a $C^*_\ell$ with hole $H$ and isolated vertex $v$ is
    found {\bf then}
    \\
    4.1.\>\> use breadth-first search to find a shortest path $v \cdots x y h_i$ 
    from $v$ to $H$;
    \\
    4.2.\>\> {\bf if} $y$ has a single neighbor $h_i$ in $H$ {\bf then return} 
    claw $\{h_i, y, h_{i-1}, h_{i+1}\}$; 
    \\
    4.3.\>\> {\bf if} $y$ has only two neighbors on $H$ that are consecutive, 
    say, $\{h_i, h_{i+1}\}$ {\bf then}
    \\
    \>\>\> {\bf return} $\overline{S_3}$ $\{x, y, h_{i-1}, h_{i}, h_{i+1}, h_{i+2}\}$; 
    \\
    4.4.\>\> {\bf return} claw $\{y, x, h_j, h_{j'}\}$, where $h_j, h_{j'}$ are two
    nonadjacent vertices in $N[y]\cap H$;
    \\
    \comment{\;The outcome of step~0 must be a $W_5$; let it be $H$ 
      and $v$.   All subscripts of $h_i$ and $K_i$ are modulo $5$.} 
    \\
    5. \> $K_0\leftarrow\{h_0\}$;$K_1\leftarrow\{h_1\}$;$K_2\leftarrow\{h_2\}$;
    $K_3\leftarrow\{h_3\}$;$K_4\leftarrow\{h_4\}$;$K_v\leftarrow\{v\}$;
    \\
    6.\> {\bf for each} vertex $x$ not in the $W_5$ {\bf do}
    \\
    6.1.\>\> {\bf if} $x$ is not adjacent to $H$ {\bf then} similar as step~4 
    ($H$ and $x$ make a $C^*_5$);
    \\
    6.2.\>\> {\bf if} $x$ has a single neighbor $h_i$ in $H$ {\bf then return} 
    claw    $\{h_i, x, h_{i-1}, h_{i+1}\}$; 
    \\
    6.3.\>\> {\bf if} $x$ is only adjacent to  $h_i, h_{i + 1}$ in $H$ {\bf then}
    \\
    \>\>\>{\bf if} $x v\in E(G)$ {\bf then return} claw $\{v, h_{i - 1}, h_{i + 2}, x\}$;
    \\
    \>\>\> {\bf else return} $S_3$ $\{x, h_i, h_{i + 1}, h_{i - 1}, v, h_{i + 2}\}$;
    \\
    6.4.\>\> {\bf if} $x$ is adjacent to $h_{i - 1}, h_{i + 1}$ but not $h_{i}$ {\bf 
      then return} $x h_{i - 1} h_i h_{i + 1}$ as a $C_4$;
    \\
    6.5.\>\> {\bf if} $x$ is adjacent to all vertices in $H$ {\bf then}
    \\
    \>\>\> {\bf if} $x y\not\in E(G)$ for some $y\in K_v$ or $K_i$
    {\bf then return} $x h_{i - 1} y h_{i + 1}$ as a $C_4$; 
    \\
    \>\>\> {\bf else} add $x$ to $K_v$;
    \>\comment{$x$ is adjacent to all vertices in the six cliques.}
    \\
    6.6.\>\> {\bf else} \>\> \comment{\;Hereafter 
      $|N(x)\cap H| = 3$; let them be $h_{i - 1}, h_i, h_{i + 1}$.}
    \\
    \>\>\> {\bf if} $x y\not\in E(G)$ for some $y\in K_i$ or $K_v$
    {\bf then return} $x h_{i - 1} y h_{i + 1}$ as a $C_4$; 
    \\
    \>\>\> {\bf if} $x y\not\in E(G)$ for some $y\in K_{i-1}$ 
    {\bf then return} claw $\{v, y, h_{i + 2}, x\}$;
    \\
    \>\>\> {\bf if} $x y\not\in E(G)$ for some $y\in K_{i+1}$
    {\bf then return} claw $\{v, h_{i - 2}, y, x\}$;
    \\
    \>\>\> {\bf if} $x y\in E(G)$ for some $y\in K_{i-2}$
    {\bf then return} $x h_{i + 1} h_{i + 2} y$ as a $C_4$; 
    \\
    \>\>\> {\bf if} $x y\in E(G)$ for some $y\in K_{i+2}$
    {\bf then return} $x h_{i - 1} h_{i - 2} y$ as a $C_4$; 
    \\
    \>\>\> {\bf else} add $x$ to $K_i$;
    \>\comment{$K_v, K_{i-1}, K_i, K_{i+1}\subseteq N(x)$ and $K_{i-2}, 
      K_{i+2}\cap N(x)=\emptyset$.}
    \\
    7.\> {\bf return} the six cliques.
  \end{tabbing}  

\end{quote} \vspace*{-6mm} \strut} $$\boxit{\box4}$$
\vspace*{-9mm}
\caption{Recognizing $\cal F$-free graphs.}
\label{fig:alg-recognition}
\end{figure}
\begin{proof}
  We prove only assertion (2) using the algorithm described in
  Fig.~\ref{fig:alg-recognition}, and its correctness implies
  assertion (1).  The algorithm starts by calling the certifying
  algorithm of Cao et al.~\cite{cao-15-nhcag} for recognizing proper
  Helly circular-arc graphs (step 0).  It enters one of steps~1--4,
  or~6 based on the outcome of step~0.  Here the subscripts of
  vertices in a hole $C_\ell$ should be understood to be modulo
  $\ell$.

  If the condition of any of steps~1--4 is satisfied, then either a
  proper and Helly arc model or a subgraph in $\cal F$ is returned.
  The correctness of steps~1--3 is straightforward.  Step~4.1 can find
  the path because $G$ is connected; possibly $v = x$, which is
  irrelevant in steps~4.2--4.4.  Note also that $\ell > 4$ in step~4.

  By Theorem~\ref{thm:phcag}, the algorithm passes steps~1--4 only
  when the outcome of step~0 is a $W_5$; let $H$ be its hole and let
  $v$ be the other vertex.  Steps 5--7 either detect an induced
  subgraph of $G$ in $\cal F$ or partition $V(G)$ into six cliques
  constituting a fat $W_5$.  Step~6 scans vertices not in the $W_5$
  one by one, and proceeds based on the adjacency between $x$ and $H$.
  In step~6.1, $H$ and $x$ make a $C^*_5$, which means that we can
  proceed exactly the same as step~4.  Note that the situation of
  step~6.4 is satisfied if $x$ is adjacent to four vertices of $H$.
  If none of the steps~6.1 to~6.5 applies, then $x$ has precisely
  three neighbors in $H$ and they have to be consecutive.  This is
  handled by step~6.6.

  Steps~0 and~4 take $O(m)$ time.  Steps~1, 2, 3, 5, and~7 need only
  $O(1)$ time.  {If the condition in step~6.1 is true, then it takes
    $O(m)$ time but it always terminates the algorithm after applying
    it.  Otherwise, step~6.1 is never called, and the rest of step~6
    scans the adjacency list of each vertex once, and hence takes
    $O(m)$ time in total.}  Therefore, the total running time of the
  algorithm is $O(m)$.  This concludes the proof.
\end{proof}

Implied by Theorem~\ref{thm:characterization}, a connected \{claw,
$S_3, \overline{S_3}$, $C_4$, $W_5$\}-free graph is a \phcag{}, which in turns
implies Proposition~\ref{lem:phcag}.  At this point a natural question
appealing to us is the relation between connected \{claw, $S_3, \overline{S_3}$,
$C_4$, $C_5$\}-free graphs and unit Helly circular-arc graphs.  Recall
that the class of unit interval graphs is a subclass of unit Helly
circular-arc graphs, on which we have a similar statement as
Corollary~\ref{lem:chordal-phcag}, i.e., a unit Helly circular-arc
graph that is chordal is a unit interval graph.  However, a connected
\{claw, $S_3, \overline{S_3}$, $C_4$, $C_5$\}-free graph that is not a unit Helly
circular-arc graph can be constructed as follows: Starting from a
$C_\ell$ with $\ell \ge 6$, for each edge $h_i h_{i+1}$ in the hole
add a new vertex $v_i$ and two new edges $v_i h_i, v_i h_{i+1}$.
(This is actually the CI($\ell,1$) graph defined by
Tucker~\cite{tucker-74-structures-cag}; see also
\cite{lin-13-nhcag-and-subclasses}.)  Therefore,
Proposition~\ref{lem:phcag} and Theorem~\ref{thm:characterization} are
the best we can expect in this sense.


Note that a $C_4$ is a \phcag{}.  Thus, the algorithm of
Theorem~\ref{thm:characterization} is not yet a certifying algorithm
for recognizing $\cal F$-free graphs.  To detect an induced $C_4$ from
a \phcag{}, we need to exploit its arc model.  If a \phcag{} $G$ is
not chordal, then the set of arcs for vertices in a hole necessarily
covers the circle, and it is minimal.  Interestingly, the converse
holds true as well---note that this is not true for chordal graphs.
\begin{proposition}\cite{lin-13-nhcag-and-subclasses, cao-15-nhcag}
  \label{lem:non-chordal-model}
  Let $G$ be a proper Helly circular-arc graph.  If $G$ is not
  chordal, then at least four arcs are needed to cover the whole
  circle in any arc model for $G$.
\end{proposition}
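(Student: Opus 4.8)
The plan is to work throughout in a fixed proper and Helly arc model $\mathcal{A}$ of $G$ on a circle $C$, and to rule out, one size at a time, the possibility of covering $C$ with one, two, or three arcs. Since $G$ is not chordal it contains a hole $H$, and cutting $C$ at any point missed by all arcs of $H$ would unroll those arcs into intervals and realize the induced cycle $H$ as an interval graph, which is impossible; hence the arcs of $H$ already cover $C$, a cover exists, and it remains to show that no cover uses fewer than four arcs. The single-arc case is immediate: an arc covering all of $C$ properly contains every other arc, and $G$ has at least four vertices, contradicting properness.

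The crux is the two-arc case. Suppose arcs $A$ and $A'$ together cover $C$; I would first argue that the vertex $u$ represented by $A$ is universal. Indeed, any arc $B$ disjoint from $A$ lies in the closed complementary arc $\overline{C\setminus A}$; but $A\cup A'=C$ forces $A'$ to contain $C\setminus A$, hence, being closed, to contain its closure, so $B\subseteq A'$. As $A'$ meets $A$ while $B$ does not, $B\neq A'$, and $B\subsetneq A'$ violates properness. Therefore no arc is disjoint from $A$, i.e.\ $u$ is adjacent to every other vertex. Now $u$ cannot lie on $H$ (a vertex of an induced cycle of length at least four has only two neighbors on it), so $\{u\}\cup V(H)$ induces a wheel $W_{|H|}$. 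This is a $W_4$ or a $W_5$ when $|H|\in\{4,5\}$, and contains a claw (take $u$ together with three pairwise non-adjacent hole vertices) when $|H|\ge 6$; each of these is forbidden in a \phcag{} by Theorem~\ref{thm:phcag}, a contradiction.

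For three arcs I would pass to a minimal subcover. If it uses at most two arcs we are done by the previous cases, so assume all three, say $A_1,A_2,A_3$, are needed. Properness forces them to pairwise intersect: if two were disjoint, the single remaining arc would have to fill both gaps between the two disjoint arcs, which it can do only by containing one of them. Pairwise intersection lets the Helly property supply a common point $p$; writing each $A_i$ as an arc around $p$ and letting $A_s,A_t$ attain the farthest reach to the two sides of $p$, the fact that no proper arc is all of $C$ forces $s\neq t$ and $A_s\cup A_t=C$. Thus two of the three arcs already cover $C$, contradicting minimality and reducing to the impossible two-arc case. Hence every cover uses at least four arcs.

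I expect the two-arc case to be the real obstacle: it is where properness must be converted into the combinatorial statement that a covering pair produces a universal vertex, after which the forbidden-subgraph characterization of Theorem~\ref{thm:phcag} finishes the job. The Helly property enters only in the three-arc case, where it is exactly what collapses a putative minimal triple cover back to a pair. A minor point to verify carefully is that cutting the circle at an uncovered point genuinely linearizes the hole, and that the complementary-arc containments above are correctly stated between closed arcs.
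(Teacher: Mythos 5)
Your proof proves a weaker statement than the one asserted. The proposition claims the four-arc bound for \emph{any} arc model of $G$, but you fix a proper and Helly model at the outset and then use both properties essentially at every step: properness kills the one-arc cover, produces the universal vertex in the two-arc case, and forces pairwise intersection in the three-arc case, while the Helly property supplies the common point that collapses a putative triple cover to a pair. A \phcag{} can, however, have arc models that are \emph{not} proper (the appendix notes explicitly that ``an arc model for a \phcag{} may not be proper''; e.g., a non-chordal \phcag{} with true twins admits a model in which one twin's arc is strictly inside the other's), and in such a model none of your properness-based steps is available. This is a genuine gap relative to the statement as written: the ``any model'' version rests on the nontrivial structural theorem of the cited references \cite{lin-13-nhcag-and-subclasses, cao-15-nhcag} that every arc model of a non-chordal \nhcag{} (hence of a non-chordal \phcag{}) is automatically normal and Helly; your restriction quietly sidesteps exactly that theorem. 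Note also that the paper itself offers no proof to compare against --- it imports the proposition from those references.

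On the positive side, the restricted statement you do prove (at least four arcs are needed in every \emph{proper and Helly} model) is precisely the form in which the paper ever applies the proposition --- all invocations (Lemma~\ref{lem:greedy-hole}, Lemma~\ref{lem:hole-cover}, Lemma~\ref{lem:edge-hole-cover}, Lemma~\ref{lem:no-addition}) concern a proper and Helly model --- and for that version your three-case analysis is sound: the arcs of a hole must cover the circle (else cutting at an uncovered point realizes the hole as an interval graph); a two-arc cover forces a universal vertex, which together with a hole yields a $W_4$, a $W_5$, or a claw, contradicting Theorem~\ref{thm:phcag}; and a three-arc cover collapses, via pairwise intersection, Helly, and the farthest-reach argument, to the impossible two-arc case. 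To make the proof match the stated proposition, you would need either to invoke the model-uniqueness result of the cited papers or to weaken the proposition to proper and Helly models, which would in fact suffice for everything the paper does with it.
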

Proposition~\ref{lem:non-chordal-model} forbids among others two arcs
from having two-part intersection.\footnote{A model having no such
  intersection is called \emph{normal}; see the appendix for more
  discussion.}
\begin{corollary}\label{cor:non-chordal-model}
  Let $G$ be a proper Helly circular-arc graph that is not chordal and
  let $\cal A$ be an arc model for $G$.  A set of arcs inclusion-wise
  minimally covers the circle in $\cal A$ if and only if the vertices
  represented by them induce a hole of $G$.
\end{corollary}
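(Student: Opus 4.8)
The plan is to prove the two implications separately, using Proposition~\ref{lem:non-chordal-model} --- and the normality it forces, namely that no two arcs of $\mathcal A$ can meet in two separate pieces --- as the main tool. For the ``if'' direction, suppose the vertices $v_1,\dots,v_\ell$ induce a hole, with corresponding arcs $A_1,\dots,A_\ell$ (indices modulo $\ell$); then $\ell\ge 4$, consecutive arcs intersect, and non-consecutive ones are disjoint. First I would show these arcs cover the whole circle: if some point $p$ were uncovered, then cutting the circle open at $p$ would turn $A_1,\dots,A_\ell$ into intervals on a line, exhibiting an interval model of the hole --- impossible, since a hole is not chordal and hence not an interval graph. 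For minimality, fix $A_i$ and look at its two hole-neighbors $A_{i-1},A_{i+1}$; because $\ell\ge 4$ the vertices $v_{i-1}$ and $v_{i+1}$ are non-adjacent, so $A_{i-1}\cap A_{i+1}=\emptyset$ and they leave two gaps between them. As $A_i$ is a single arc meeting both of these disjoint arcs, it must span one gap entirely, and every other hole-arc is disjoint from $A_i$ and hence from that gap; so the gap is covered by $A_i$ alone, and deleting $A_i$ uncovers it. Thus the cover is inclusion-wise minimal.

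For the ``only if'' direction, let $\{A_1,\dots,A_\ell\}$ inclusion-wise minimally cover the circle; by Proposition~\ref{lem:non-chordal-model} we have $\ell\ge 4$. I would order the arcs by the cyclic order of their endpoints and prove that consecutive arcs overlap while non-consecutive arcs are disjoint, which forces the $v_i$ to induce $C_\ell$, a hole. Non-consecutive arcs must be disjoint: if two of them intersected, then --- their intersection being a single piece by normality --- every arc lying strictly between them in the cyclic order would be squeezed into the union of the two, hence redundant, contradicting minimality. Consecutive arcs must then overlap, since, the non-consecutive ones being pairwise disjoint, the arcs form a cyclic chain and any gap between two cyclically consecutive arcs would be left uncovered. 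Reading off the adjacencies among $v_1,\dots,v_\ell$ then yields exactly an induced cycle of length $\ell\ge 4$.

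The hard part will be the ``only if'' direction, and specifically making the two overlap claims rigorous on the circle rather than on a line: I must track the cyclic order of the $2\ell$ endpoints, rule out wrap-around configurations, and argue that a minimal cover behaves like a \emph{proper} system of arcs even though $\mathcal A$ is an arbitrary (not necessarily proper) model for $G$. The leverage here is again Proposition~\ref{lem:non-chordal-model}: normality excludes two-part intersections, and the impossibility of covering the circle with fewer than four arcs rules out the containments that would otherwise break the endpoint ordering. Combined with minimality, these force the arc ``sandwiched'' between any two overlapping non-consecutive arcs to be swallowed by their union, which is precisely the contradiction that drives the argument and completes the equivalence.
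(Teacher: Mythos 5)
Your proposal is correct and follows exactly the route the paper intends: the paper states this corollary without a written proof, treating it as immediate from Proposition~\ref{lem:non-chordal-model} (at least four arcs are needed to cover the circle, which in particular forbids two-part intersections, i.e., forces normality), and your two-direction argument is precisely the fleshing-out of that derivation, with minimality supplying the non-nestedness needed to align the cyclic endpoint orders in the ``only if'' direction. The one small imprecision is attributing the exclusion of containments to the four-arc bound rather than to inclusion-wise minimality itself (a nested arc is redundant); the four-arc bound is instead what rules out two arcs jointly covering the circle, so that the union of two intersecting arcs is again an arc.
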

Therefore, to find a shortest hole from a \phcag{}, we may work on an
arc model of it, and find a minimum set of arcs covering the circle in
the model.  This is another important step of our algorithm for the
\uie{} problem.  It has the detection of $C_4$'s as a special case,
because a $C_4$, if existent, must be the shortest hole of the graph.
\begin{lemma}\label{lem:shortest-hole}
  There is an $O(m)$-time algorithm for finding a shortest hole of a
  \phcag{}.
\end{lemma}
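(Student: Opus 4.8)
The plan is to reduce the search for a shortest hole to the problem of finding a minimum-cardinality set of arcs that covers the circle, and then to solve this circle-covering problem in linear time by exploiting properness. First I would build a proper and Helly arc model $\mathcal{A}$ of $G$ in $O(m)$ time, using the algorithm of Theorem~\ref{thm:characterization} (or the cited recognition algorithm). The reduction rests on Corollary~\ref{cor:non-chordal-model}: when $G$ is not chordal, a set of arcs \emph{inclusion-wise minimally} covers the circle if and only if the corresponding vertices induce a hole. Since a minimum-cardinality cover is in particular inclusion-wise minimal, such a cover corresponds precisely to a shortest hole. The chordal (equivalently, hole-free) case is folded in cleanly: if the arcs of $\mathcal{A}$ do not cover the whole circle, then cutting the circle at an uncovered point turns $\mathcal{A}$ into an interval model, so $G$ is an interval graph and has no hole; otherwise I compute a minimum cover and return the induced subgraph on its vertices exactly when it is a hole (a cycle of length at least four). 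By Corollary~\ref{cor:non-chordal-model} the minimum cover is a hole precisely when $G$ is non-chordal, which by Corollary~\ref{lem:chordal-phcag} is exactly when a hole exists; so this test is both correct and sufficient.

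The structural leverage comes from properness. Because no arc of $\mathcal{A}$ contains another, reading the $2n$ endpoints clockwise shows that the counterclockwise (left) endpoints occur in the same cyclic order as the clockwise (right) endpoints. The model already supplies these endpoints sorted, so for each arc $A$ I can define its \emph{successor} $\sigma(A)$ to be the arc overlapping the clockwise end of $A$ whose own clockwise end reaches farthest around the circle. The matching cyclic order of the two endpoint sequences makes $\sigma$ monotone and lets me compute it for all arcs by a single two-pointer sweep in $O(n)$ time. A greedy cover that starts from a fixed arc and repeatedly follows $\sigma$ until it wraps past the starting arc is then optimal among all covers containing that arc, by the standard greedy argument for interval covering once the circle is cut open at a point.

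The step I expect to be the crux is obtaining a \emph{linear} bound for the global minimum over all starting arcs, since naively running the greedy from each of the $n$ arcs costs $O(n^2)$. I would resolve this again through the monotonicity of $\sigma$: conceptually unroll the circle into a line of twice the circumference with the arcs duplicated, so that ``covering the circle starting at $A$'' becomes ``advancing one full circumference from the left end of $A$ by iterating $\sigma$.'' As the starting arc moves clockwise, the arc at which the walk completes one turn moves clockwise monotonically, so a sliding two-pointer (an amortized walk whose frontier never retreats) evaluates all starting arcs with $O(n)$ total pointer movement and returns the minimum cover size together with a witnessing set of arcs. The delicate points are that $\sigma$ really is monotone and that this monotonicity survives the wrap-around; both hinge on properness (no containments) together with Proposition~\ref{lem:non-chordal-model}, which guarantees at least four arcs are needed and thereby rules out the degenerate two-arc covers (with two-part intersection) that would otherwise corrupt the successor structure.

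Putting the pieces together, building the model is $O(m)$, the endpoints arrive sorted, and the cover computation plus the hole-verification are $O(n) = O(m)$, so the whole procedure runs in $O(m)$ time. Reporting the vertices represented by the arcs of the minimum cover yields a shortest hole, which establishes the lemma.
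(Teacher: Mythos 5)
Your reduction is sound as far as it goes: shortest hole $=$ minimum-cardinality circle cover (both directions from Corollary~\ref{cor:non-chordal-model}, with the chordal case handled correctly), the successor function $\sigma$ can indeed be computed for all arcs in $O(n)$ by a two-pointer sweep thanks to properness, and the greedy walk by iterating $\sigma$ is optimal among covers containing a fixed starting arc --- this last fact is essentially the paper's Lemma~\ref{lem:greedy-hole}. The gap is exactly at the step you yourself flag as the crux, and your proposed resolution does not close it. The monotonicity you invoke (reaches, and hence completion positions, move clockwise as the starting arc does) is true but insufficient: the greedy walks from distinct starting arcs are distinct paths in the functional graph of $\sigma$, and in general they share no common ``frontier'' that a sliding pointer could advance. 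Knowing \emph{where} the walk from $A_{i+1}$ completes does not tell you \emph{how many steps} it takes without tracing it, and each start has a different completion target ($\ccp{A_{i+1}}$ plus one circumference), so you can neither reuse the previous walk nor memoize ``steps to completion'' at merge points. Since every walk has length $\mathrm{OPT}$ or $\mathrm{OPT}+1$, tracing all $n$ starts costs $\Theta(n\cdot\mathrm{OPT})$, which is $\Theta(n^2)$ already for a single long hole ($\mathrm{OPT}=\Theta(n)$); nothing in your sketch explains how the pointer movement is bounded by $O(n)$ in total.

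The paper needs two further ideas precisely here, and you would need them (or a substitute such as a winding-number analysis of the $\sigma$-cycle, or binary lifting at the cost of an extra $\log$ factor). First, it does not try all $n$ starts: since every hole must meet $K_{\cal A}(\alpha)$ for a fixed point $\alpha$ (Corollary~\ref{cor:non-chordal-model}), it suffices to start walks only from the arcs in one such clique. Second --- and this is the heart of the proof --- it runs these walks \emph{interleaved in round-robin order} during a single sweep of the $2n$ endpoints, so that each arc is appended to at most one walk; walks whose desired successor is unavailable are \emph{dropped}, and an exchange argument based on Lemma~\ref{lem:greedy-hole} shows that a dropped walk is dominated by its predecessor, while the round-robin discipline keeps all surviving walks within one step of each other in length, so the first walk to close up is a shortest hole. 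Without this interleaving-plus-drop mechanism (or an equivalent), your algorithm is correct but only gives an $O(n\cdot\mathrm{OPT}+m)$ bound, not the claimed $O(m)$.
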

Before proving Lemma~\ref{lem:shortest-hole}, we need to introduce
some notation.  In an interval model, the interval $I(v)$ for vertex
$v$ is given by $[\lp{v}, \rp{v}]$, where \lp{v} and \rp{v} are its
\emph{left and right endpoints} respectively.  It always holds $\lp{v}
< \rp{v}$.  In an arc model, the arc $A(v)$ for vertex $v$ is given by
$[\ccp{v}, \cp{v}]$, where $\ccp{v}$ and $\cp{v}$ are its
\emph{counterclockwise and clockwise endpoints} respectively.  All
points in an arc model are assumed to be nonnegative; in particular,
they are between $0$ (inclusive) and $\ell$ (exclusive), where $\ell$
is the perimeter of the circle.  We point out that possibly
$\ccp{v}>\cp{v}$; such an arc $A(v)$ necessarily passes through the
point $0$.  Note that rotating all arcs in the model does not change
the intersections among them.  Thus we can always assume that a
particular arc contains or avoids the point $0$.  We say that an arc
model (for an $n$-vertex circular-arc graph) is \emph{canonical} if
the perimeter of the circle is $2 n$, and every endpoint is a
different integer in $\{0, 1, \ldots, 2 n - 1\}$.  Given an arc model,
we can make it canonical in linear time: We sort all $2 n$ endpoints
by radix sort, and replace them by their indices in the order.

Each point $\alpha$ in an interval model $\cal I$ or arc model $\cal
A$ defines a clique, denoted by $K_{\cal I}(\alpha)$ or $K_{\cal
  A}(\alpha)$ respectively, which is the set of vertices whose
intervals or arcs contain $\alpha$.  There are at most $2 n$ distinct
cliques defined as such.  If the model is Helly, then they include all
maximal cliques of this graph \cite{gavril-74-algorithms-cag}.  Since
the set of endpoints is finite, for any point $\rho$ in an interval or
arc model, we can find a small positive value $\epsilon$ such that
there is no endpoint in $[\rho-\epsilon, \rho)\cup (\rho,
\rho+\epsilon]$,---in other words, there is an endpoint in
$[\rho-\epsilon, \rho+\epsilon]$ if and only if $\rho$ itself is an
endpoint.  Note that the value of $\epsilon$ should be understood as a
function, depending on the interval/arc model as well as the point
$\rho$, instead of a constant.

Let $G$ be a non-chordal graph and let $\cal A$ be a proper and Helly
arc model for $G$.  If $u v\in E(G)$, then exactly one of $\ccp{v}$
and $\cp{v}$ is contained in $A(u)$
(Proposition~\ref{lem:non-chordal-model}).  Thus, we can define a
``left-right relation'' for each pair of intersecting arcs, which can
be understood from the viewpoint of an observer placed at the center
of the model.  We say that arc $A(v)$ intersects arc $A(u)$ from the
left when $\cp{v}\in A(u)$, denoted by $v\rightarrow u$.  Any set of
arcs whose union is an arc not covering the circle (the corresponding
vertices induce a connected unit interval graph) can be ordered in a
unique way such that $v_i \rightarrow v_{i+1}$ for all $i$.  From it
we can find the leftmost (most counterclockwise) and rightmost (most
clockwise) arcs.

For any vertex $v$, let $h(v)$ denote the length of the shortest holes
through $v$, which is defined to be $+\infty$ if no hole of $G$
contains $v$.  The following is important for the proof of
Lemma~\ref{lem:shortest-hole}.

\begin{lemma}\label{lem:greedy-hole}
  Let $\cal A$ be a proper and Helly arc model for a non-chordal graph
  $G$.  Let $v_1, v_2, \ldots, v_p$ be a sequence of vertices such
  that for each $i = 2, \ldots,p$, the arc $A(v_i)$ is the rightmost
  of all arcs containing \cp{v_{i - 1}}.  If $v_1$ is contained in
  some hole and $v_i v_1\not\in E(G)$ for all $2 < i \le p$, then
  there is a hole of length $h(v_1)$ containing $v_1, v_2, \ldots,
  v_p$ as consecutive vertices on it.
\end{lemma}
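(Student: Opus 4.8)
The plan is to run an exchange argument: start from a shortest hole through $v_1$ and rewrite it, one vertex at a time, into a hole of the same length whose first $p$ vertices are exactly $v_1, \ldots, v_p$. Fix a shortest hole $H$ through $v_1$, of length $q := h(v_1)$, and orient it clockwise as $u_1 u_2 \cdots u_q$ with $u_1 = v_1$ and $u_i \rightarrow u_{i+1}$; this is the same clockwise direction in which the greedy sequence advances, since $A(v_i) \ni \cp{v_{i-1}}$ means $v_{i-1} \rightarrow v_i$. I would cut the circle at $\ccp{v_1}$ and read positions clockwise, so that $A(u_q)$ is the unique arc of $H$ straddling the cut. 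The induction is on the first position at which $H$ disagrees with the greedy sequence: if $u_i = v_i$ for all $i < j$ but $u_j \neq v_j$ for some $j \le p$, I would replace $u_j$ by $v_j$ and argue that the result is again a hole of length $q$ through $v_1$. Each such step lengthens the agreeing prefix, so after at most $p$ steps the hole agrees with $v_1, \ldots, v_p$.

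For the exchange step, both $A(u_j)$ and $A(v_j)$ contain $\cp{u_{j-1}} = \cp{v_{j-1}}$, and $v_j$ is by definition the rightmost arc containing this point, so $\cp{v_j} \ge \cp{u_j}$; properness (no arc contains another) then forces $\ccp{v_j} \ge \ccp{u_j}$ as well. Writing $H^{*} := u_1 \cdots u_{j-1}\, v_j\, u_{j+1} \cdots u_q$, I first check that $v_j$ overlaps both neighbours: $A(v_j) \ni \cp{u_{j-1}}$ gives the overlap with $u_{j-1}$, while $\ccp{v_j} \le \cp{u_{j-1}} \le \cp{u_j} \le \cp{v_j}$ shows $\cp{u_j} \in A(v_j)$, and as also $\cp{u_j} \in A(u_{j+1})$ the arcs $A(v_j)$ and $A(u_{j+1})$ overlap. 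Non-adjacency of $v_j$ to the earlier vertices $u_i$ ($i \le j-2$) is immediate, because $A(u_i) \cap A(u_j) = \emptyset$ gives $\cp{u_i} < \ccp{u_j} \le \ccp{v_j}$, and by the hypothesis $v_j v_1 \notin E(G)$ the arc $A(v_j)$ does not wrap back across the cut.

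The one real obstacle is to rule out a forward chord, i.e. to show $v_j$ is not adjacent to $u_{j+2}, \ldots, u_{q}$. The only way this can fail is an overshoot $\cp{v_j} \ge \ccp{u_{j+2}}$, and I would show it contradicts the minimality of $q$. In that case $u_1, \ldots, u_{j-1}, v_j, u_{j+2}, \ldots, u_q$ already covers the circle using only $q-1$ arcs. (The boundary case $j = p = q-1$, in which $u_{j+2} = u_1$, is excluded outright by the hypothesis $v_p v_1 \notin E(G)$, so $A(u_q)$ survives in this shorter cover.) Since $u_2$ and $u_q$ are non-adjacent in $H$, the wrap-around arc $A(u_q)$ stops short of $A(u_2)$, so there is a sliver of the circle immediately clockwise of $\ccp{v_1}$ that is covered by $A(v_1)$ but by no other arc of this cover; in particular $A(v_j)$ misses it, as $\ccp{v_j} \ge \ccp{u_j} \ge \ccp{u_2}$ and $A(v_j)$ does not wrap. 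Hence $v_1$ is essential in this $(q-1)$-arc cover, so some inclusion-wise minimal subcover contains $v_1$; by Corollary~\ref{cor:non-chordal-model} that subcover induces a hole through $v_1$ of length at most $q-1$, contradicting $q = h(v_1)$. Therefore $\cp{v_j} < \ccp{u_{j+2}}$, no forward chord exists, and $H^{*}$ is a hole of length $q$.

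Putting the steps together, the induction terminates with a hole of length $h(v_1)$ in which $v_1, v_2, \ldots, v_p$ appear consecutively, which is exactly the assertion. I expect the chord-elimination above to be the crux: it is the only place where both the minimality of $h(v_1)$ and the hypothesis $v_i v_1 \notin E(G)$ are genuinely used, and it hinges on the slightly delicate observation that in a proper Helly model the wrap-around arc of a hole always leaves a sliver next to $\ccp{v_1}$ that only $A(v_1)$ covers.
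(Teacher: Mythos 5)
Your proof is correct and is essentially the paper's own argument: the same one-vertex exchange of the greedy $v_j$ for the hole's $u_j$, driven by the same ingredients---the clockwise dominance $\cp{v_j}\ge\cp{u_j}$ obtained from rightmost-ness plus properness, Corollary~\ref{cor:non-chordal-model} converting minimal circle covers into holes, and the hypothesis $v_jv_1\notin E(G)$ keeping $A(v_1)$ essential in the modified cover so that the minimality of $h(v_1)$ yields the contradiction. The differences are purely organizational (induction on the agreeing prefix instead of a minimal counterexample, and explicit chord-checking with the corollary applied to a $(q-1)$-arc cover in the overshoot case, instead of the paper's single application to the swapped $q$-arc cover where cardinality forces the whole set to be a hole), up to one citation-level detail: at $j=2$ the non-wrapping of $A(v_2)$ must be derived from Proposition~\ref{lem:non-chordal-model} (two arcs cannot cover the circle, no two-part intersections), not from the hypothesis, which only covers $j>2$.
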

\begin{proof} 
  Suppose that there is no such a hole, then there exists a smallest
  number $i$ with $2\le i\le p$ such that no hole of length $h(v_1)$
  contains $v_1, v_2, \ldots, v_i$.  By assumption, there is a hole
  $v_1 \cdots v_{i - 1} u_i u_{i+1}\cdots u_{h(v_1)}$ of length
  $h(v_1)$ with $u_i \ne v_i$; let it be $H$.  In case that $i = 2$,
  we assume that $H$ is given in the way that $v_1 \rightarrow u_2$.
  By Corollary~\ref{cor:non-chordal-model}, the set of arcs for $H$
  cover the circle in $\cal A$.  Since $v_{i - 1} \rightarrow u_i$ and
  by the assumption on $v_i$, the arc $A(v_i)$ covers $[\cp{v_{i -1}},
  \ccp{u_{i+1}}]$ (note that $i < h(v_1)$ as otherwise $v_i\rightarrow
  v_1$).  Therefore, the arcs for $V(H)\setminus \{u_i\}\cup \{v_i\}$
  cover the circle as well.  By Corollary~\ref{cor:non-chordal-model},
  a subset of these vertices induces a hole; since $v_i$ is not
  adjacent to $v_1$ from the left, this subset has to contain $v_1$.
  But a hole containing $v_1$ cannot be shorter than $H$, and hence it
  contains $v_1, \ldots, v_i$, a contradiction.  Therefore, such an
  $i$ does not exist, and there is a hole of length $h(v_1)$
  containing $v_1, v_2, \ldots, v_p$.

  Since $v_j$ and $v_{j+1}$ are adjacent for all $1\le j\le p-1$,
  vertices $v_1, v_2, \ldots, v_p$ have to be consecutive on this
  hole.
\end{proof}

\begin{figure}[h!]
\setbox4=\vbox{\hsize28pc \noindent\strut
\begin{quote}
  \vspace*{-5mm}

  {\bf Algorithm} {\bf shortest-hole}($G, \cal A$)
  \\
  {\bf Input}: a proper and Helly arc model $\cal A$ for a non-chordal
  graph $G$.
  \\
  {\bf Output}: a shortest hole of $G$.
  \begin{tabbing}
    AAA\=aaA\=Aaa\=MMMMMMAAAAAAAAAAAAAAAAAAAAAAAAA\=A \kill
    0.\> make $\cal A$ canonical where $0$ is $\ccp{v}$ for some $v$; 
    \\
    1.\> {\bf for} $i = 1,\ldots, \cp{v} - 1$ {\bf do}
    \\
    1.1.\>\>{\bf if} $i$ is $\ccp{x}$ {\bf then} create a new array $\{x\}$;
    \\
    \textbackslash\!\textbackslash\; {\it these arrays are 
      circularly linked so that the next of the last array is the 
      first one.}
    \\
    2.\> $w \leftarrow \bot$; $U \leftarrow$ the first array; 
    \\
    3.\>{\bf for} $i = \cp{v} + 1,\ldots, 2n - 1$ {\bf do}
    \\
    3.0.\>\> $z \leftarrow$ the last vertex of $U$;
    \\
    3.1.\>\> {\bf if} $i$ is $\ccp{x}$ {\bf then} $w \leftarrow
    x$; {\bf continue}$^\dag$;
    \\
    3.2.\>\>{\bf if} $i \ne \cp{z}$ {\bf then continue};
    \\
    3.3.\>\>{\bf if} $w = \bot$ {\bf then} delete $U$; $U \leftarrow$ 
    the next array of $U$; 
    \\
    3.4.\>\>{\bf if} $w \ne \bot$ {\bf then} append $w$ to $U$; 
    $w \leftarrow \bot$; $U \leftarrow$ the next array of $U$; 
    \\
    4.\> {\bf for each} $U$ till the last array {\bf do}
    \\
    4.1.\>\> {\bf if} the first and last vertices of $U$ are adjacent {\bf then 
      return} $U$;
    \\
    5.\> {\bf return} $U\cup \{v\}$.  \>\>\>\comment{$U$ is the last array.}
    \\[2mm]
    \dag: i.e., ignore the rest of this iteration of the for-loop 
    and proceed to the next iteration.
  \end{tabbing}  

\end{quote} \vspace*{-6mm} \strut} $$\boxit{\box4}$$
\vspace*{-9mm}
\caption{Finding a shortest hole in a \phcag{}.}
\label{fig:alg-shortest-hole}
\end{figure}
Let $\alpha$ be any fixed point in a proper and Helly arc model.
According to Corollary~\ref{cor:non-chordal-model}, every hole needs
to visit some vertex in $K_{\cal A}(\alpha)$.  Therefore, to find a
shortest hole in $G$, it suffices to find a hole of length $\min\{h(x)
: x\in K_{\cal A}(\alpha)\}$.
\begin{proof}[Proof of Lemma~\ref{lem:shortest-hole}]
  The algorithm described in Fig.~\ref{fig:alg-shortest-hole} finds a
  hole of length $\min\{h(x) : x\in K_{\cal A}(\cp{v} + 0.5)\}$.  Step
  1 creates $|K_{\cal A}(\cp{v} + 0.5)|$ arrays, each starting with a
  distinct vertex in $K_{\cal A}(\cp{v} + 0.5)$, and they are ordered
  such that their (counter)clockwise endpoints are increasing.  The
  main job of this algorithm is done in step~3.  During this step, $w$
  is the new vertex to be processed, and $U$ is the current array.
  Each new vertex is added to at most one array, while each array is
  either dropped or extended.  We use $\bot$ as a dummy vertex, which
  means that no new vertex has been met after the last one has been
  put into the previous array.  Step~3.1 records the last scanned arc.
  Once the clockwise endpoint of the last vertex $z$ of the current
  array $U$ is met, $w$ is appended to $U$ (step~3.4); note that
  $A(w)$ is the most clockwise arc that contains \cp{z}.  On the other
  hand, if $w = \bot$, then we drop this array from further
  consideration (step~3.3).  If after step~3, one of the arrays
  already induces a hole (i.e., the first and last vertices are
  adjacent), then it is returned in step~4.1.  Otherwise, $U$ does not
  induces a hole, and step~5 returns the hole induced by $U$ and $v$.

  We now verify the correctness of the algorithm.  It suffices to show
  that the length of the found hole is $\min\{h(x) : x\in K_{\cal
    A}(\cp{v} + 0.5)\}$.  The following hold for each array $U$:
  \begin{enumerate}[(1)]
  \item For any pair of consecutive vertices $u, w$ of $U$, the arc
    $A(w)$ is the rightmost of all arcs containing \cp{u}.
  \item At the end of the algorithm, if $U$ is not dropped, then $0 <
    \cp{z} < \cp{v}$, where $z$ is the last vertex of $U$.
  \end{enumerate}
  Let $v_1, v_2, \ldots, v_p$ be the vertices in $U$.  From (1) and
  (2) it can be inferred that for $1< i< p$, the vertex $v_i$ is
  adjacent to only $v_{i-1}, v_{i+1}$ in $U$, while $v_1$ and $v_p$
  may or may not be adjacent.  If $v_1 v_p\in E(G)$, then vertices in
  $U$ induce a hole of $G$; otherwise $U \cup\{v\}$ induces a hole of
  $G$.  By Lemma~\ref{lem:greedy-hole}, this hole has length $h(v_1)$.

  Some of the $|K_{\cal A}(\cp{v} + 0.5)|$ arrays created in step~1
  may be dropped in step~3.  Note that step~3 processes arrays in a
  circular order starting from the first one, and each array is either
  deleted (step~3.3) or extended by adding one vertex (step~3.4).
  \begin{enumerate}[(1)]
    \setcounter{enumi}{2}
  \item At any moment of the algorithm, the sizes of any two arrays
    differ by at most one.  In particular, at the end of step~3, if
    the current array $U$ is not the first, then $U$, as well as all
    the succeeding arrays, has one less element than its
    predecessor(s).
  \end{enumerate}
  This ensures that the hole returned in step~4.1 or 5 is the shortest
  among all the holes decided by the remaining arrays after step~3.

  It remains to argue that for any array $U$ deleted in step~3.3, the
  found hole is not longer than $h(x)$, where $x$ is the first vertex
  of $U$.  All status in the following is referred at the moment
  before $U$ is deleted (i.e., before step~3.3 that deletes $U$).  Let
  $z$ be the last vertex of $U$.  Let $U'$ be the array that is
  immediately preceding $U$, and let $z'$ be its last vertex.  Note
  that there is no arc with a counterclockwise endpoint between
  $\cp{z'}$ and $\cp{z}$, as otherwise $w\ne \bot$ and $U$ would not
  be deleted.  Therefore, any arc intersecting $A(z)$ from the right
  also intersects $A({z'})$ from the right.  By
  Lemma~\ref{lem:greedy-hole}, there is a hole $H$ that has length
  $h(x)$ and contains $U\cup \{z'\}$.  We find a hole through $U'$ of
  the same length as follows.  If $U$ is not the first array, then
  $|U'| = |U| + 1$, and we replace $U\cup \{z'\}$ by $U'$.  Otherwise,
  $|U'| = |U|$, and we replace $U\cup \{z'\}$ by $\{v\}\cup U'$.  It
  is easy to verify that after this replacement, $H$ remains a hole of
  the same length.

  We now analyze the running time of the algorithm.  Each of the $2 n$
  endpoints is scanned once, and each vertex belongs to at most one
  array.  Using a linked list to store an array, the addition of a new
  vertex can be implemented in constant time.  Using a circularly
  linked list to organize the arrays, we can find the next array or
  delete the current one in constant time.  With the (endpoints of)
  all arcs given, the adjacency between any pair of vertices can be
  checked in constant time, and thus step~4 takes $O(n)$ time.  It
  follows that the algorithm can be implemented in $O(m)$ time.
\end{proof}

\section{Vertex deletion}\label{sec:vertex-deletion}

We say that a set $V_-$ of vertices is a \emph{hole cover} of $G$ if
$G - V_-$ is chordal.  The hole covers of proper Helly circular-arc
graphs are characterized by the following lemma.
\begin{lemma}\label{lem:hole-cover}
  Let $\cal A$ be a proper and Helly arc model for a non-chordal graph
  $G$.  A set $V_- \subseteq V(G)$ is a hole cover of $G$ if and only if
  it contains $K_{\cal A}(\alpha)$ for some point $\alpha$ in $\cal A$.
\end{lemma}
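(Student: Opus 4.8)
The plan is to prove the two implications separately, exploiting a single geometric observation---that ``cutting'' a proper arc model open at an uncovered point turns the surviving arcs into genuine intervals---together with the hole-characterization of inclusion-wise minimal circle covers supplied by Corollary~\ref{cor:non-chordal-model}. Neither direction requires much machinery beyond these two facts, so the work is mostly in setting up the correspondence carefully.

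For the sufficiency direction, I would assume $V_- \supseteq K_{\cal A}(\alpha)$ for some point $\alpha$. By the definition of $K_{\cal A}(\alpha)$, \emph{every} arc of $\cal A$ that contains $\alpha$ belongs to a vertex of $V_-$; hence no arc of a vertex in $V(G)\setminus V_-$ contains $\alpha$ (recall arcs are closed and no two share an endpoint, so this covers the case where $\alpha$ is an endpoint as well). Cutting the circle open at $\alpha$ then straightens the arcs for $V(G)\setminus V_-$ into intervals on a line without disturbing any of their pairwise intersections, exhibiting an interval model for $G - V_-$. Since interval graphs are chordal, $G - V_-$ is chordal and $V_-$ is a hole cover. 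This half uses neither properness nor the Helly property of $\cal A$.

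For the necessity direction I would argue the contrapositive. Suppose $V_-$ contains $K_{\cal A}(\alpha)$ for \emph{no} point $\alpha$; equivalently, every point of the circle lies in some arc of a vertex of $V(G)\setminus V_-$, so the arcs for $V(G)\setminus V_-$ together cover the whole circle. I then pick an inclusion-wise minimal subset $S\subseteq V(G)\setminus V_-$ whose arcs still cover the circle. The key point is that ``$S$ covers the circle but no proper subset of $S$ does'' is an intrinsic property of $S$ that makes no reference to arcs outside $S$; therefore $S$ is an inclusion-wise minimal cover inside the full model $\cal A$ of the non-chordal graph $G$, exactly the situation to which Corollary~\ref{cor:non-chordal-model} applies. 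That corollary tells us $S$ induces a hole of $G$, and since $S\subseteq V(G)\setminus V_-$ this hole survives intact in $G - V_-$, so $G - V_-$ is not chordal and $V_-$ is not a hole cover.

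The two halves are short, and I expect the only delicate point to be the bookkeeping in the necessity direction: I must ensure that the minimal cover $S$ extracted from the arcs of $V(G)\setminus V_-$ is genuinely inclusion-wise minimal in the \emph{ambient} model $\cal A$, so that Corollary~\ref{cor:non-chordal-model}---stated for $G$ and its model $\cal A$ rather than for an induced sub-model---can be invoked directly. This is true precisely because minimality of a covering family is absolute, but it is the step most in need of an explicit justification; the rest reduces to the uncontroversial geometric fact that avoiding a point lets one cut the circle.
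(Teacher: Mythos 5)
Your proposal is correct and follows essentially the same route as the paper's own proof: the sufficiency direction cuts (the paper rotates) the model at $\alpha$ to obtain an interval model for $G - V_-$, and the necessity direction extracts an inclusion-wise minimal circle cover from the arcs of $V(G)\setminus V_-$ and invokes Corollary~\ref{cor:non-chordal-model} to produce a surviving hole. Your extra remark that minimality of a covering family is intrinsic (so the corollary applies in the ambient model $\cal A$) is a valid justification of a step the paper leaves implicit.
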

\begin{proof}
  For any vertex set $V_-$, the subgraph $G - V_-$ is also a proper
  Helly circular-arc graph, and the set of arcs $\{A(v) : v\in
  V(G)\setminus V_-\}$ is a proper and Helly arc model for $G - V_-$.
  For the ``if'' direction, we may rotate $\cal A$ to make $\alpha =
  0$, and then setting $I(v) = A(v)$ for each $v\in V(G)\setminus V_-$
  gives a proper interval model for $G - V_-$.  For the ``only if''
  direction, note that if there is no $\alpha$ with $K_{\cal
    A}(\alpha)\subseteq V_-$, then we can find a minimal set $X$ of
  vertices from $V(G)\setminus V_-$ such that $\bigcup_{v\in X} A(v)$
  covers the whole circle in $\cal A$.  According to
  Corollary~\ref{cor:non-chordal-model}, $X$ induces a hole of $G$,
  which remains in $G - V_-$.
\end{proof}

Noting that any ``local'' part of a proper and Helly arc model
``behaves similarly'' as an interval model, Lemma~\ref{lem:hole-cover}
is an easy extension of the clique separator property of interval
graphs \cite{fulkerson-65-interval-graphs}.  On the other hand, to get
a unit interval graph out of a fat $W_5$, it suffices to delete a
smallest clique from the fat hole.  Therefore,
Theorem~\ref{thm:characterization} and Lemma~\ref{lem:hole-cover}
imply the following linear-time algorithm.
\begin{corollary}\label{lem:vertex-deletion-phcag}
  The \uivd{} problem can be solved in $O(m)$ time (1) on \phcag{s}
  and (2) on $\cal F$-free graphs.
\end{corollary}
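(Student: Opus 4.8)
The plan is to reduce \uivd{} on these two graph families to a pure \emph{hole cover} computation, exploiting that here ``unit interval'' collapses to ``chordal.'' The key observation is that both \phcag{s} and $\cal F$-free graphs are hereditary, so for any $V_-\subseteq V(G)$ the induced subgraph $G-V_-$ stays in the same class. Combined with Corollary~\ref{lem:chordal-phcag}, this shows that on a \phcag{} $G$, a set $V_-$ solves \uivd{} (i.e.\ $G-V_-$ is a unit interval graph) \emph{if and only if} $G-V_-$ is chordal, i.e.\ $V_-$ is a hole cover of $G$. Hence for part~(1) it suffices to compute a \emph{minimum} hole cover and compare its size with the budget~$k$.

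For part~(1) I would first build a proper and Helly arc model $\cal A$ for $G$ in $O(m)$ time, using the algorithm of Theorem~\ref{thm:characterization} (whose output in this case is a model). If the arcs of $\cal A$ do not cover the circle, then cutting at any uncovered point yields a proper interval model, so $G$ is already a unit interval graph and the minimum solution is empty. Otherwise $G$ is non-chordal and Lemma~\ref{lem:hole-cover} applies: a set is a hole cover exactly when it contains some $K_{\cal A}(\alpha)$, so a minimum hole cover has size $\min_{\alpha}|K_{\cal A}(\alpha)|$, the smallest clique defined by a point of the model. This minimum is obtained by a single circular sweep of the $2n$ endpoints, maintaining the number of currently open arcs and recording the least value attained between consecutive endpoints, which runs in $O(n)=O(m)$ time. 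The answer is ``yes'' precisely when this minimum is at most~$k$.

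For part~(2) I would run the algorithm of Theorem~\ref{thm:characterization}; since $G$ is $\cal F$-free it returns either a proper and Helly arc model, reducing to part~(1), or a partition of $V(G)$ into the six cliques of a fat $W_5$ with fat-hole cliques $K_0,\ldots,K_4$ and hub $K_v$. In the latter case I claim the minimum solution has size $\min_{0\le i\le 4}|K_i|$. For optimality, if every fat-hole class $K_i$ retained a surviving vertex $h_i\notin V_-$, then $\{h_0,\ldots,h_4\}$ would induce a $C_5$ in $G-V_-$ (consecutive classes are completely joined, non-consecutive classes have no edges between them), contradicting that $G-V_-$ is hole-free; hence any solution must delete some entire $K_i$, costing at least $\min_i|K_i|$. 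For sufficiency, deleting a smallest fat-hole clique entirely leaves a ``fat fan,'' obtained from the five-vertex fan (a single vertex adjacent to all four vertices of an induced $P_4$) by substituting a clique for each vertex; the five-vertex fan contains no claw, $S_3$, $\overline{S_3}$, or hole, hence is a unit interval graph by Theorem~\ref{thm:uig-fis}, and adding true twins (duplicating the corresponding unit intervals) preserves this property. Thus the residual graph is a unit interval graph, and we again compare $\min_i|K_i|$ with~$k$.

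The routine parts are the endpoint sweep and the two budget comparisons; the step demanding the most care is the fat $W_5$ analysis, where I must argue \emph{both} that no solution can avoid erasing a whole fat-hole class (the ``one representative per class induces a $C_5$'' argument) \emph{and} that erasing a single smallest such class already suffices (closure of unit interval graphs under true twins). Assembling the pieces, every branch---model construction, sweep, or fat-$W_5$ bookkeeping---runs in $O(m)$, which yields the claimed bound.
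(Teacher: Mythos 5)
Your proposal follows the paper's own route almost verbatim: part (1) is Lemma~\ref{lem:hole-cover} plus Corollary~\ref{lem:chordal-phcag} together with a sweep over the $2n$ endpoints to minimize $|K_{\cal A}(\alpha)|$, and part (2) is the paper's remark that for a fat $W_5$ it suffices to delete a smallest clique of the fat hole (your $C_5$ lower bound and fat-fan/true-twin upper bound are exactly the details the paper leaves implicit). So in substance this is the same proof, just spelled out.

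One step, however, is not justified as written: in part (1) you infer that if the arcs of $\cal A$ cover the circle then $G$ is non-chordal, which is what licenses the appeal to Lemma~\ref{lem:hole-cover}. This implication fails for arbitrary proper and Helly arc models: two closed arcs can jointly cover the circle while having a nonempty (two-part) intersection, giving a proper and Helly model of $K_2$---a chordal graph---whose arcs cover the circle. Properness and the Helly condition alone do not exclude such models; what excludes them is normality, and Proposition~\ref{lem:non-chordal-model} guarantees that only under the hypothesis that $G$ is already non-chordal, so invoking it here would be circular. On such a model your sweep returns $\min_\alpha |K_{\cal A}(\alpha)| \ge 1$, and your algorithm would answer ``no'' on a chordal instance with $k = 0$. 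The fix is cheap and is what the paper implicitly does (compare the openings of the proofs of Theorems~\ref{thm:algs-uivd} and~\ref{thm:edge-deletion-phcag}): first test in $O(m)$ time whether $G$ is already a unit interval graph (equivalently, whether the \phcag{} $G$ is chordal), and only when it is not, conclude via Proposition~\ref{lem:non-chordal-model} that every model for $G$ is well-behaved and run the sweep.
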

We are now ready to prove the main results of this section.
\begin{theorem}\label{thm:algs-uivd}
  There are an $O(6^k\cdot m)$-time parameterized algorithm for the
  \uivd{} problem and an $O(n m)$-time approximation algorithm of
  approximation ratio $6$ for its minimization version.
\end{theorem}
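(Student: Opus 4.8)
The plan is to combine bounded-depth branching with the polynomial-time solvability on $\cal F$-free graphs, exactly as furnished by Theorem~\ref{thm:characterization} and Corollary~\ref{lem:vertex-deletion-phcag}. The point to exploit is that every member of $\cal F$ has at most six vertices: the claw and $C_4$ have four, while $S_3$ and $\overline{S_3}$ have six. For the parameterized algorithm I would branch on these small obstructions. At each search-tree node apply Theorem~\ref{thm:characterization}(2) in $O(m)$ time; if it returns an induced subgraph $F\in\cal F$, then any solution must delete at least one vertex of $F$, so I recurse on the at most six instances obtained by deleting a single vertex of $F$ and decreasing the budget by one. If instead it certifies that the (current component of the) graph is $\cal F$-free---by producing a fat $W_5$ partition or a proper and Helly arc model---then the residual instance is solved exactly in $O(m)$ time by Corollary~\ref{lem:vertex-deletion-phcag}, and I test whether its optimum fits the remaining budget. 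Since the branching factor is at most six and the depth at most $k$, the tree has at most $6^k$ leaves, each handled in $O(m)$ time, giving $O(6^k\cdot m)$. Deletions may disconnect the graph, but Theorem~\ref{thm:characterization} is simply applied componentwise.

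For the approximation algorithm I would replace branching by greedy packing. In Phase~A, repeatedly invoke Theorem~\ref{thm:characterization}(2); whenever it returns some $F\in\cal F$, add all of its (at most six) vertices to the solution and delete them, stopping once the remaining graph $G'$ is $\cal F$-free. Because each found obstruction is deleted in its entirety, the collected obstructions $F_1,\dots,F_t$ are pairwise vertex-disjoint, and Phase~A deletes at most $6t$ vertices. In Phase~B, I solve \uivd{} exactly on $G'$ via Corollary~\ref{lem:vertex-deletion-phcag}, obtaining an optimal deletion set $D$ with $G'-D$ a unit interval graph (a chordal \phcag{}, hence unit interval by Corollary~\ref{lem:chordal-phcag}, or a fat $W_5$ handled likewise). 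Then $G-\big(\bigcup_i V(F_i)\cup D\big)=G'-D$ is unit interval, so the output is valid and has size at most $6t+|D|$. Each round of Phase~A deletes at least four vertices, so there are $O(n)$ rounds of cost $O(m)$, and Phase~B costs $O(m)$; using $n=O(m)$ for connected inputs, the total is $O(nm)$.

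The approximation ratio rests on the additive lower bound $\mathrm{OPT}(G)\ge t+|D|$, and I expect this to be the main obstacle in the analysis. Let $S$ be an optimum solution. Since $F_1,\dots,F_t$ are vertex-disjoint and unit interval graphs are hereditary, $S$ must contain at least one vertex of each $F_i$, hence at least $t$ vertices inside $\bigcup_i V(F_i)$. On the other hand $(G-S)[V(G')\setminus S]$ is an induced subgraph of the unit interval graph $G-S$, hence unit interval, so $S\cap V(G')$ is a \uivd{} solution for $G'$ and thus has size at least $|D|=\mathrm{OPT}(G')$. As $\bigcup_i V(F_i)$ and $V(G')$ are disjoint, these contributions add, giving $|S|\ge t+|D|$. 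Consequently the output has size at most $6t+|D|\le 6(t+|D|)\le 6\,\mathrm{OPT}(G)$. The delicate point is precisely this additive bound: the weaker estimate $\mathrm{OPT}(G)\ge\max(t,|D|)$ only yields ratio~$7$, so one must argue that an optimum pays for the disjoint packing and for the hole cover in disjoint regions of the graph.
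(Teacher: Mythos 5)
Your proposal is correct and follows essentially the same route as the paper: branch on the at-most-six vertices of an obstruction returned by Theorem~\ref{thm:characterization}(2) with base case Corollary~\ref{lem:vertex-deletion-phcag} for the FPT algorithm, and greedily delete whole obstructions followed by an exact solve on the residual $\cal F$-free graph for the approximation. Your explicit additive lower bound $\mathrm{OPT}(G)\ge t+|D|$ (via disjointness of the packed obstructions and heredity applied to $S\cap V(G')$) is exactly the argument the paper compresses into ``the approximation ratio is clearly $6$,'' so it is a welcome elaboration rather than a deviation.
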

\begin{proof}
  Let ($G, k$) be an instance of \uivd{}; we may assume that $G$ is
  not a unit interval graph and $k > 0$.  The parameterized algorithm
  calls first Theorem~\ref{thm:characterization}(2) to decide whether
  it has an induced subgraph in $\cal F$, and then based on the
  outcome, it solves the problem by making recursive calls to itself,
  or calling the algorithm of
  Corollary~\ref{lem:vertex-deletion-phcag}.  If an induced subgraph
  $F$ in $\cal F$ is found, it calls itself $|V(F)|$ times, each with
  a new instance ($G - v, k - 1$) for some $v\in V(F)$; since we need
  to delete at least one vertex from $V(F)$, the original instance
  ($G, k$) is a yes-instance if and only if at least one of the
  instances ($G - v, k - 1$) is a yes-instance.  Otherwise, $G$ is
  $\cal F$-free and the algorithm calls
  Corollary~\ref{lem:vertex-deletion-phcag} to solve it.  The
  correctness of the algorithm follows from discussion above and
  Corollary~\ref{lem:vertex-deletion-phcag}.  On each subgraph in
  $\cal F$, which has at most $6$ vertices, at most $6$ recursive
  calls are made, all with parameter value $k - 1$.  By
  Theorem~\ref{thm:characterization}, each recursive call is made in
  $O(m)$ time; each call of Corollary~\ref{lem:vertex-deletion-phcag}
  takes $O(m)$ time.  Therefore, the total running time is $O(6^k\cdot
  m)$.

  The approximation algorithm is adapted from the parameterized
  algorithm as follows.  For the subgraph $F$ found by
  Theorem~\ref{thm:characterization}, we delete all its vertices.  We
  continue the process until the remaining graph is $\cal F$-free, and
  then we call Corollary~\ref{lem:vertex-deletion-phcag} to solve it
  optimally.  Each subgraph in $\cal F$ has $4$ or $6$ vertices, and thus
  at most $n/4$ such subgraphs can be detected and deleted, each
  taking $O(m)$ time, hence $O(n m)$ in total.
  Corollary~\ref{lem:vertex-deletion-phcag} takes another $O(m)$ time.
  The total running time is thus $O(n m) + O(m) = O(nm)$, and the
  approximation ratio is clearly $6$.
\end{proof}

\section{Edge deletion}\label{sec:edge-deletion}
Inspired by Lemma~\ref{lem:hole-cover}, one may expect a similarly
nice characterization---being ``local'' to some point in an arc model
for $G$---for a minimal set of edges whose deletion from a
\phcag{}~$G$ makes it chordal.  This is nevertheless not the case; as
shown in Fig.~\ref{fig:lem:edge-hole-cover}, they may behave in a very
strange or pathological way.
\begin{SCfigure}[][h]
  \centering
    \includegraphics{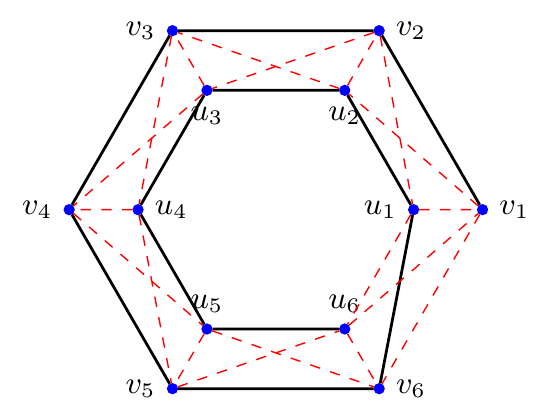} 
  \caption{\small The set of all 30 edges (both solid and
    dashed) spans a \phcag{}.  After the set of 19 dashed edges
    deleted (we rely on the reader to verify its minimality), the
    remaining graph on the 11 solid edges is a unit interval graph.
    \\Note that four edges would suffice, e.g., $\{u_2 u_3, u_2 v_3,
    v_2 u_3, v_2 v_3\}$.}
  \label{fig:lem:edge-hole-cover}
\end{SCfigure}

Recall that $v\rightarrow u$ means arc $A(v)$ intersecting arc $A(u)$
from the left, or $\cp{v}\in A(u)$.  For each point~$\alpha$ in a
proper and Helly arc model $\cal A$, we can define the following set
of edges:
\begin{equation}
    \label{eq:2}
    \overrightarrow E_{\cal A}(\alpha) = \{v u: v\in K_{\cal A}(\alpha),\, u\not\in K_{\cal A}(\alpha), v\rightarrow u\}.
\end{equation}
One may symmetrically view $\overrightarrow E_{\cal A}(\alpha)$ as
$\{v u: v\not\in K_{\cal A}(\beta),\, u\in K_{\cal A}(\beta),
v\rightarrow u\}$, where $\beta := \max\{\cp{x}: x\in K_{\cal
  A}(\alpha) \} + \epsilon$.  It is easy to verify that the following
gives a proper interval model for $G - \overrightarrow E_{\cal A}(0)$:
\begin{equation}
  \label{eq:cag-to-im}
  I(v) := 
  \begin{cases}
    [\ccp{v}, \cp{v} + \ell] & \text{if } v\in K_{\cal A}(0),
    \\
    [\ccp{v}, \cp{v}] & \text{otherwise},
  \end{cases}
\end{equation}
where $\ell$ is the perimeter of the circle in $\cal A$; see
Fig.~\ref{fig:arc-model-to-interval}.  For an arbitrary point
$\alpha$, the model $G - \overrightarrow E_{\cal A}(\alpha)$ can be
given analogously, e.g., we may rotate the model first to make $\alpha
= 0$.
\begin{figure*}[h]
  \centering\small
  \subfloat[An arc model for $G$.]{\label{fig:cut-points}
    \includegraphics{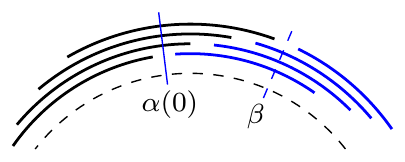} 
  }
  \qquad
  \subfloat[The interval model for $G - \protect\overrightarrow E_{\cal A}(\alpha)$ given by \eqref{eq:cag-to-im}.]{\label{fig:after-cut}
    \includegraphics{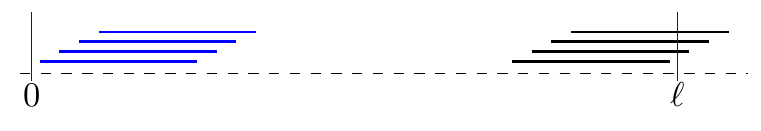} 
  }
  \caption{Illustration for Proposition~\ref{lem:edge-hole-cover-1}.}
  \label{fig:arc-model-to-interval}
\end{figure*}

\begin{proposition}\label{lem:edge-hole-cover-1}
  Let $\cal A$ be a proper and Helly arc model for a non-chordal graph
  $G$.  For any point $\alpha$ in $\cal A$, the subgraph $G -
  \overrightarrow E_{\cal A}(\alpha)$ is a unit interval graph.
\end{proposition}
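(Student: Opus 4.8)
The plan is to verify directly that the assignment~\eqref{eq:cag-to-im} does what the surrounding text asserts, and then invoke the equivalence of proper and unit interval graphs. By rotating $\cal A$ we may assume $\alpha = 0$ (rotation preserves all intersections, hence both $G$ and the set $\overrightarrow E_{\cal A}(0)$); we may further assume $0$ is not an endpoint, so that $v\in K_{\cal A}(0)$ precisely when $A(v)$ wraps around $0$, i.e.\ $\ccp{v} > \cp{v}$. It then suffices to establish two things about the intervals $I(v)$ defined by~\eqref{eq:cag-to-im}: first, that their intersection graph is exactly $G - \overrightarrow E_{\cal A}(0)$; and second, that $\{I(v)\}$ is a \emph{proper} interval model. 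Granting both, $G - \overrightarrow E_{\cal A}(0)$ is a proper interval graph, and since every proper interval model can be made unit, it is a unit interval graph.

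For the first task I would argue by cases on the membership of the two endpoints of an edge $uv\in E(G)$ in $K_{\cal A}(0)$, throughout using normality (Proposition~\ref{lem:non-chordal-model}): for adjacent $u,v$, exactly one of $\ccp{u},\cp{u}$ lies in $A(v)$ and exactly one of $\ccp{v},\cp{v}$ lies in $A(u)$. When $u,v\notin K_{\cal A}(0)$ neither arc wraps around $0$, so $I(u)=A(u)$ and $I(v)=A(v)$ meet on the line exactly as the arcs meet on the circle, and the edge survives. When $u,v\in K_{\cal A}(0)$ both intervals contain the point $\ell$, so they intersect and the edge again survives. The decisive case is $v\in K_{\cal A}(0)$, $u\notin K_{\cal A}(0)$: here normality forces exactly one of $v\rightarrow u$ and $u\rightarrow v$. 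If $v\rightarrow u$ (so $\cp{v}\in A(u)$), I would deduce $\cp{u} < \ccp{v}$, whence $I(u)=[\ccp{u},\cp{u}]$ lies entirely to the left of $I(v)=[\ccp{v},\cp{v}+\ell]$ and the edge is deleted; if instead $u\rightarrow v$, then $\ccp{v}\in A(u)$ places the left endpoint of $I(v)$ inside $I(u)$, so the edge survives. This matches~\eqref{eq:2} exactly: the deleted edges are precisely those with $v\in K_{\cal A}(0)$, $u\notin K_{\cal A}(0)$, and $v\rightarrow u$.

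For properness I would first record that each $I(v)$ has the same length as the arc $A(v)$; for a wraparound arc the unrolled interval $[\ccp{v},\cp{v}+\ell]$ has length $\ell+\cp{v}-\ccp{v}$, which is exactly the circular length of $A(v)$. If $u,v$ are both in or both outside $K_{\cal A}(0)$, then $I(u)\subseteq I(v)$ translates, endpoint by endpoint, into $A(u)\subseteq A(v)$, contradicting the properness of $\cal A$. The subtle case, which I expect to be the main obstacle, is again the mixed one: there $I(v)$ has right endpoint $\cp{v}+\ell > \ell > \cp{u}$, so $I(v)$ cannot sit inside $I(u)$, while $I(u)\subseteq I(v)$ would force $\ccp{u}\ge\ccp{v}$, and this inequality combined with the fact that $A(u)$ does not wrap (as $u\notin K_{\cal A}(0)$) gives $A(u)=[\ccp{u},\cp{u}]\subseteq[\ccp{v},\ell)\subseteq A(v)$, once more contradicting properness. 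Hence no proper containment occurs, $\{I(v)\}$ is a proper interval model, and the proof is complete.

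The only genuine work is the bookkeeping in the mixed case for both tasks: one must use normality to pin down which single endpoint of each arc falls inside the other, and must keep the wraparound orientation straight when comparing arc positions against the cut point $0$. Everything else is routine once the length-preservation observation is in hand.
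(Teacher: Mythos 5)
Your proposal is correct and takes essentially the same route as the paper: the paper's entire proof consists of exhibiting the unrolled model \eqref{eq:cag-to-im} and asserting it is ``easy to verify'' to be a proper interval model for $G - \overrightarrow E_{\cal A}(0)$, and your two verification tasks (identifying the intersection graph and checking properness), followed by the appeal to the equivalence of proper and unit interval models, are exactly that verification spelled out. One small addition would make it airtight: besides the case analysis on edges $uv\in E(G)$, note that no \emph{new} intersections can arise, which is immediate because reduction modulo $\ell$ maps each $I(v)$ onto $A(v)$, so intersecting intervals always project to intersecting arcs.
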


The other direction is more involved and very challenging.  A unit
interval graph $\underline G$ is called a \emph{spanning unit interval
  subgraph} of $G$ if $V(\underline G) = V(G)$ and $E(\underline
G)\subseteq E(G)$; it is called {\em maximum} if it has the largest
number of edges among all spanning unit interval subgraphs of $G$.  To
prove that all maximum spanning unit interval subgraphs have a certain
property, we use the following argument by contradiction.  Given a
spanning unit interval subgraph $\underline G$ not having the
property, we locally modify a unit interval model $\cal I$ for
$\underline G$ to a {\em proper} interval model $\cal I'$ such that
the represented graph $\underline G'$ satisfies $E(\underline
G')\subseteq E(G)$ and $|E(\underline G')| > |E(\underline G)|$.
Recall that we always select $\epsilon$ in a way that there cannot be
any endpoint in $[\rho - \epsilon, \rho)\cup (\rho, \rho + \epsilon]$,
and thus an arc covering $\rho + \epsilon$ or $\rho - \epsilon$ must
contain $\rho$.

\begin{lemma}\label{lem:edge-hole-cover}
  Let $\cal A$ be a proper and Helly arc model for a non-chordal graph
  $G$.  For any maximum spanning unit interval subgraph $\underline G$
  of $G$, the deleted edges, $E(G)\setminus E(\underline G)$, are
  $\overrightarrow E_{\cal A}(\rho)$ for some point $\rho$ in $\cal
  A$.
\end{lemma}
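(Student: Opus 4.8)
We have a proper Helly arc model $\cal A$ for a non-chordal graph $G$. Proposition~\ref{lem:edge-hole-cover-1} says that for any point $\alpha$, deleting $\overrightarrow E_{\cal A}(\alpha)$ yields a unit interval graph. The converse we must prove: every *maximum* spanning unit interval subgraph arises this way. So we want to show that if $\underline G$ is maximum, then $E(G) \setminus E(\underline G) = \overrightarrow E_{\cal A}(\rho)$ for some $\rho$.

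**The structure.** Two things to establish. First, $\underline G$ deletes at least one edge across every "cut" — equivalently, for every point in the circle, the set of arcs covering it cannot all remain, else we'd have a hole (Corollary~\ref{cor:non-chordal-model}), contradicting that $\underline G$ is an interval graph. Second, and harder, a maximum $\underline G$ deletes *exactly* the edges $\overrightarrow E_{\cal A}(\rho)$ for a single well-chosen $\rho$ — i.e. it makes a single "clean cut" and deletes nothing extraneous.

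**Approach.** Let me think about the contradiction argument the paper sets up. Given a maximum spanning unit interval subgraph $\underline G$ with unit interval model $\cal I$, suppose its deleted set is not of the form $\overrightarrow E_{\cal A}(\rho)$.

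Let me plan the proof:

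I would first fix a unit interval model $\cal I$ for $\underline G$ and relate it back to the arc model $\cal A$. Since $\underline G$ is an interval graph, there is a linear order of the maximal cliques (the consecutive-arrangement / clique-path order). Reading this order and comparing it to the circular order of arcs in $\cal A$, I expect to identify a single point $\rho$ on the circle where $\cal I$ "breaks" the circular structure of $\cal A$ — intuitively, $\cal I$ is obtained from $\cal A$ by cutting the circle at $\rho$ and unrolling it, and $\overrightarrow E_{\cal A}(\rho)$ is precisely the set of arc-pairs whose intersection straddles $\rho$. The candidate $\rho$ is the natural point where the leftmost and rightmost cliques of $\cal I$ sit. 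Then I must show $E(G) \setminus E(\underline G) = \overrightarrow E_{\cal A}(\rho)$.

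The natural line is the local-modification argument the text foreshadows. Suppose for contradiction that some edge $uv \in \overrightarrow E_{\cal A}(\rho)$ is *kept* by $\underline G$ (so $\underline G$ deletes fewer / different edges), or that $\underline G$ deletes some edge *not* in $\overrightarrow E_{\cal A}(\rho)$. In either case I would locally perturb the unit interval model $\cal I$ by sliding one interval's endpoint past a nearby endpoint (using the $\epsilon$-neighborhood convention quoted just before the lemma, so that crossing $\rho\pm\epsilon$ flips exactly one adjacency). The goal is to produce a proper interval model $\cal I'$ whose represented graph $\underline G'$ still satisfies $E(\underline G') \subseteq E(G)$ but with $|E(\underline G')| > |E(\underline G)|$, contradicting maximality. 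Recovering a proper model is legitimate because proper interval graphs equal unit interval graphs, so $\underline G'$ is again a unit interval graph. The containment $E(\underline G') \subseteq E(G)$ is guaranteed because the only new adjacency introduced corresponds to an arc-pair that *does* intersect in $\cal A$ (we only reconnect a pair that $G$ already has), while the direction of the slide is chosen to add rather than remove an edge.

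**The main obstacle.** I expect the genuine difficulty to lie in (i) pinning down the correct candidate point $\rho$ from an arbitrary maximum $\underline G$ — i.e. showing that the interval order of $\cal I$ really is a single unrolling of $\cal A$ rather than some more scrambled order compatible only coincidentally — and (ii) executing the local modification so that it simultaneously increases the edge count and preserves the subgraph constraint $E(\underline G') \subseteq E(G)$. The subtlety is that an interval graph has many inequivalent interval models, so $\cal I$ need not look like a literal cut of $\cal A$; I would need the left-right relation $v \to u$ and the rightmost-arc structure from Lemma~\ref{lem:greedy-hole} to force the correspondence. I would likely proceed by taking the pair $u,v$ witnessing the discrepancy, locating the extreme (leftmost/rightmost) intervals in $\cal I$, and arguing that failing to cut exactly at $\rho$ leaves an "uncovered" improving move — either an edge of $G$ that could be added without creating a claw, $S_3$, $\overline{S_3}$, or hole, or a needlessly deleted edge that can be restored. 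Turning this improving move into a valid proper interval model, and checking no forbidden induced subgraph (Theorem~\ref{thm:uig-fis}) is created, is where the bulk of the careful case analysis will reside.
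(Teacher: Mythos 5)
Your proposal lands in the right family of techniques --- relating a unit interval model $\cal I$ of $\underline G$ to the arc model $\cal A$ and contradicting maximality by rebuilding a proper interval model with more edges --- but both places where you yourself locate the difficulty are left unresolved, and the paper's resolutions are not the ones you sketch. First, the paper never establishes (and does not need) that $\cal I$ is an ``unrolling'' of $\cal A$ cut at a single point; that picture is precisely what Fig.~\ref{fig:lem:edge-hole-cover} warns can fail, and your suggestion to read $\rho$ off ``where the leftmost and rightmost cliques of $\cal I$ sit'' presupposes it, so it begs the question. Instead, the cut point is found through a vertex $v$ satisfying condition (C1) or (C2) --- all edges between $N_{\underline G}[v]$ and one of $K_{\cal A}(\ccp{v}-\epsilon)$, $K_{\cal A}(\cp{v}+\epsilon)$ are deleted --- whose existence is itself a nontrivial argument about the leftmost interval $I(v_0)$ of $\cal I$ that already invokes maximality; the candidate point is then $\rho=\ccp{v}$. (Your preliminary observation is also garbled as stated: in a \emph{spanning} subgraph all arcs/vertices remain; what must be broken at every minimal covering set of arcs is an edge, via Corollary~\ref{cor:non-chordal-model}.)

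Second, and more importantly, the contradiction is not obtained by a single improving slide. Writing $E_-$ for $E(G)\setminus E(\underline G)$, maximality together with Proposition~\ref{lem:edge-hole-cover-1} gives $|E_-|\le|\overrightarrow E_{\cal A}(\rho)|$, so if $\overrightarrow E_{\cal A}(\rho)\subseteq E_-$ the two sets are equal by counting alone; the only real case is that $\underline G$ keeps some edge $uw$ crossing the cut, with $u\in K_{\cal A}(\ccp{v})$ and $w\in K_{\cal A}(\cp{v}+\epsilon)$. Keeping such an edge admits no obvious one-interval improvement: it may be ``paid for'' by deletions elsewhere, so no local move is forced. What the paper proves instead (its Claims 1 and 2) is that such a $u$ must have lost all its edges to $N_{\underline G}[v]$, and consequently strictly more edges incident to $u$ lie in $E_-$ than in $\overrightarrow E_{\cal A}(\rho)$; summing over $K_{\cal A}(\ccp{v})$ yields $|E_-|>|\overrightarrow E_{\cal A}(\rho)|$, a global counting contradiction. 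The model-rebuilding you anticipate does occur, but only inside the proof of Claim 1, and it resets the intervals of an entire vertex set $X$ (with new endpoints transplanted from the arc model), not a single endpoint slid past another. Without the reduction to the crossing-edge case and without this per-vertex counting, your sketch has no mechanism to convert ``$E_-$ differs from $\overrightarrow E_{\cal A}(\rho)$'' into an edge-count increase, so the core step of the proposal is a genuine gap.
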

\begin{figure*}[t]
  \centering\small
  \subfloat{\label{fig:a}
    \includegraphics{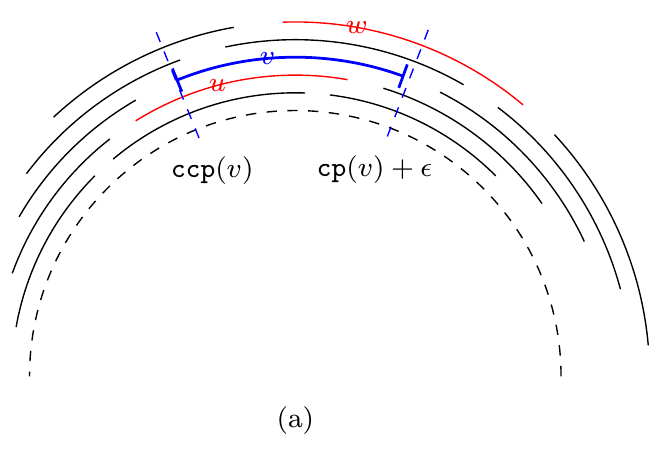} 
  }

  \subfloat[]{\label{fig:b}
    \includegraphics{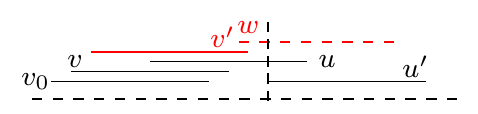} 
  }
  \qquad
  \subfloat[]{\label{fig:c}
    \includegraphics{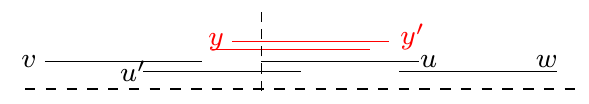} 
  }

  \subfloat[]{\label{fig:d}
    \includegraphics{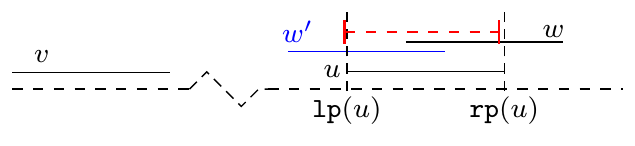} 
  }
  \qquad
  \subfloat[]{\label{fig:e}
    \includegraphics{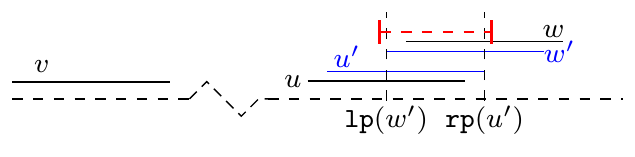} 
  }
  \caption{Illustration for the proof of Lemma~\ref{lem:edge-hole-cover}.}
  \label{fig:minimum-edge-hole-cover}
\end{figure*}

\begin{proof}
  Let $\cal I$ be a unit interval model for $\underline G$, and let
  $E_- = E(G)\setminus E(\underline G)$, i.e., the set of deleted
  edges from $G$.  We find first a vertex $v$ satisfying at least one
  of the following conditions.

  \begin{enumerate}[(C1)]
  \item The sets $N_{\underline G}[v]$ and $K_{\cal A}(\ccp{v} -
    \epsilon)$ are disjoint and all edges between them are in $E_-$.
  \item The sets $N_{\underline G}[v]$ and $K_{\cal A}(\cp{v} +
    \epsilon)$ are disjoint and all edges between them are in $E_-$.
  \end{enumerate}
  Recall that a vertex $u\in K_{\cal A}(\ccp{v} - \epsilon)$ if and
  only if $u \rightarrow v$, and a vertex $w\in K_{\cal A}(\cp{v} +
  \epsilon)$ if and only if $v \rightarrow w$; see Fig.~\ref{fig:a}.
  These two conditions imply $N_{\underline G}[v]\subseteq K_{\cal
    A}(\cp{v})$ and $N_{\underline G}[v]\subseteq K_{\cal A}(\ccp{v})$
  respectively: All edges between $v$ itself, which belongs to
  $N_{\underline G}[v]$, and $K_{\cal A}(\ccp{v} - \epsilon)$ or
  $K_{\cal A}(\cp{v} + \epsilon)$ are in $E_-$.

  Let $I(v_0)$ be the leftmost interval in $\cal I$.  Note that by
  Proposition~\ref{lem:non-chordal-model}, arcs for $N_{\underline
    G}[v_0]$ cannot cover the whole circle.  If $N_{\underline
    G}[v_0]$ is a separate component in $\underline G$, then we can
  take $v$ to be the vertex from $N_{\underline G}[v_0]$ with the
  leftmost arc; it satisfies condition (C1).  Otherwise, let $I(u)$ be
  the last interval containing $\rp{v_0}$ and let $I(u')$ be the next
  interval (i.e., $I(u')$ is the leftmost interval that does not
  intersect $I(v_0)$).  See Fig.~\ref{fig:b}.  Intervals $I(u)$ and
  $I(u')$ intersect ($N_{\underline G}[v_0]$ is not isolated), but
  $v_0 u'\not\in E(G)$ because $\underline G$ is maximum: Moving
  $I(v_0)$ to the right to intersect $I(u')$ would otherwise make a
  unit interval model that represents a subgraph of $G$ having one
  more edge than $\underline G$.
  \begin{itemize}
  \item If $u'\rightarrow u$, then $u'\rightarrow u\rightarrow v_0$.
    We argue by contradiction that there cannot be vertices $v'\in
    K_{\cal A}(\cp{v_0} + \epsilon)$ and $w\not\in N_G(v_0)$ with $u
    v', v' w\in E(\underline G)$.  If $u v'\in E(\underline
    G)\subseteq E(G)$, then by the position of $A(v_0)$, we must have
    $u\rightarrow v'$ and $v_0\rightarrow v'$.  Then $v_0 u' \not\in
    E(G)$ excludes the possibility $u'\rightarrow v'$; on the other
    hand, $v'\rightarrow u'$ is excluded by
    Proposition~\ref{lem:non-chordal-model}: The arcs for $u', u, v'$
    cannot cover the whole circle.  Therefore, $v' u'\not\in E(G)$,
    and likewise $u w\not\in E(G)$.  They cannot be in $E(\underline
    G)$ either, but this is impossible; See Fig.~\ref{fig:b}.  Let $v$
    be the vertex in $N_{\underline G}[v_0]$ such that $A(v)$ is the
    rightmost; it satisfies condition (C2).
  \item If $u\rightarrow u'$, then $v_0\rightarrow u\rightarrow u'$.
    Similarly as above, the vertex $v$ in $N_{\underline G}[v_0]$ such
    that $A(v)$ is the leftmost satisfies condition (C1).
  \end{itemize}

  Noting that conditions (C1) and (C2) are symmetric, we assume that the
  vertex $v$ found above satisfies condition (C2), and a symmetric
  argument would apply to condition (C1).
  Note that by the selection of $v$, which has the leftmost arc among
  $N_{\underline G}[v_0]$, we have $N_{\underline G}[v] \subseteq
  N_{G}[v']$ for every $v'$ satisfying $\lp{v'} < \lp{v}$; thus
  setting their intervals to be $[\lp{v} + \epsilon, \rp{v} +
  \epsilon]$ would produce another unit interval model for $\underline
  G$.  In the rest of the proof we may assume without loss of
  generality that $I(v)$ is the first interval in $\cal I$.

  Since the model $\cal A$ is proper and Helly, no arc in $\cal A$ can
  contain both $\ccp{v}$ and $\cp{v} + \epsilon$.  In other words,
  $K_{\cal A}(\ccp{v})$ and $K_{\cal A}(\cp{v} + \epsilon)$ are
  disjoint, and $\overrightarrow E_{\cal A}(\ccp{v})$ comprises
  precisely edges between them; see Fig.~\ref{fig:a}.
  By Proposition~\ref{lem:edge-hole-cover-1}, $|E_-| \le
  |\overrightarrow E_{\cal A}(\ccp{v})|$.  If $K_{\cal A}(\ccp{v})$
  is not adjacent to $K_{\cal A}(\cp{v} + \epsilon)$ in $\underline
  G$, then $\overrightarrow E_{\cal A}(\ccp{v})\subseteq E_-$, and
  they have to be equal.  In this case the proof is complete:
  $\rho = \ccp{v}$.  
  We are hence focused on edges between $K_{\cal A}(\ccp{v})$ and
  $K_{\cal A}(\cp{v} + \epsilon)$.

  \begin{claim}
    Let $u\in K_{\cal A}(\ccp{v})$.  If $u$ is adjacent to $K_{\cal
      A}(\cp{v} + \epsilon)$ in $\underline G$, then $I(u)$ does not
    intersect the intervals for $N_{\underline G}[v]$.
  \end{claim}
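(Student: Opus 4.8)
The plan is to argue by contradiction: from an unwanted intersection I will extract a local surgery on $\cal I$ that produces a proper interval model of a \emph{strictly larger} subgraph of $G$, contradicting the maximality of $\underline G$.

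First I would dispose of the case $u=v$. Since $v\in N_{\underline G}[v]$ while $N_{\underline G}[v]$ and $K_{\cal A}(\cp{v}+\epsilon)$ are disjoint by condition (C2), the vertex $v$ itself is not adjacent in $\underline G$ to $K_{\cal A}(\cp{v}+\epsilon)$; hence the hypothesis forces $u\ne v$, and therefore $u\rightarrow v$. Fix a neighbor $w\in K_{\cal A}(\cp{v}+\epsilon)$ of $u$ in $\underline G$, so that $v\rightarrow w$ and $uw\in E(\underline G)\subseteq E(G)$. The key preliminary is a \emph{clique observation}. For every $x\in N_{\underline G}[v]$ with $x\ne v$ we have $xv\in E(G)$, so $A(x)$ meets $A(v)$; were $v\rightarrow x$, then $x\in K_{\cal A}(\cp{v}+\epsilon)$, contradicting (C2), so $x\rightarrow v$ and $x\in K_{\cal A}(\ccp{v})$. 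Thus $N_{\underline G}[v]\cup\{u\}\subseteq K_{\cal A}(\ccp{v})$, and since all these arcs contain the single point $\ccp{v}$ they pairwise intersect, i.e.\ $N_{\underline G}[v]\cup\{u\}$ induces a clique of $G$. In particular $u$ is adjacent in $G$ to every vertex of $N_{\underline G}[v]$, although in $\underline G$ it is adjacent to none of them (at least not to $v$). Applying (C2) once more to the \emph{undeleted} edge $uw$ gives $u\notin N_{\underline G}[v]$, whence $I(u)\cap I(v)=\emptyset$.

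Next I would record the arc geometry of the triple $u,v,w$. From $u\rightarrow v$, $v\rightarrow w$ and $uw\in E(G)$, the alternative $w\rightarrow u$ would make the three arcs $A(u),A(v),A(w)$ cover the whole circle, which Proposition~\ref{lem:non-chordal-model} forbids; hence $u\rightarrow w$, and the endpoints obey $\ccp{u}<\ccp{v}<\ccp{w}\le\cp{u}\le\cp{v}\le\cp{w}$. This ordering is precisely what I will use to certify that a rightward shift of $v$ can create only edges already present in $G$. With these facts in hand, the contradiction proceeds as follows: assuming $I(u)$ meets $I(x)$ for some $x\in N_{\underline G}[v]$, the leftmost-ness of $I(v)$ and $I(u)\cap I(v)=\emptyset$ give $\lp{u}>\rp{v}\ge\lp{x}$, so $u$ sits just to the right of the clique $N_{\underline G}[v]$, adjacent in $\underline G$ to some of its members but not to $v$, while simultaneously reaching the farther interval $I(w)$. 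I would then slide $I(v)$ rightward toward $I(u)$: the vertex $v$ has no neighbor to its left (there is none), the new edge $uv$ is legitimate since $uv\in E(G)$, and the clique structure guarantees that every edge $v$ gains to a member of $N_{\underline G}[v]\cup\{u\}$ lies in $E(G)$; one checks that the resulting proper interval model realizes a subgraph of $G$ with strictly more edges than $\underline G$.

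The hard part is exactly this surgery. Sliding $I(v)$ far enough to catch $I(u)$ risks two things: it may drop the \emph{leftmost} members of the clique $N_{\underline G}[v]$, costing edges, and it may sweep across intervals lying strictly between $\rp{v}$ and $\lp{u}$, thereby manufacturing an edge $vz$ with $z\notin N_G[v]$ and violating $E(\underline G')\subseteq E(G)$. The crux is to use the endpoint ordering $\ccp{u}<\ccp{v}\le\cp{u}\le\cp{v}$ (which pins down how $N_G[v]$ and $N_G[u]$ overlap) together with the fact that $N_{\underline G}[v]\cup\{u\}$ is a $G$-clique, so as to choose the shift---possibly carrying a few clique intervals along with $I(v)$---so that no spurious edge appears while the net edge count strictly increases. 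This bookkeeping, rather than any single clever idea, is where the real work lies; the clique observation and the $u\rightarrow w$ geometry are the tools that make it go through.
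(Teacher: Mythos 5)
Your setup is sound and your overall strategy (contradict the maximality of $\underline G$ by modifying the model) is the same as the paper's; so are your preliminary observations that $u\ne v$, that $N_{\underline G}[v]\cup\{u\}\subseteq K_{\cal A}(\ccp{v})$ induces a clique of $G$, and that $u\rightarrow w$. The gap is that the construction you defer---``slide $I(v)$ rightward, possibly carrying a few clique intervals along''---is not unfinished bookkeeping: it cannot be completed in that form. Let $u'$ be the vertex of $N_{\underline G}[v]$ with the rightmost interval, and let $I(y)$ be the leftmost interval disjoint from $I(v)$. Work under your contradiction hypothesis, so $I(u)$ meets $I(u')$ and hence $\lp{u}\le\rp{u'}\le\rp{v}+1$; also $\lp{y}\le\lp{u}$ since $I(u)$ is itself disjoint from $I(v)$. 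If $y=u$, a minimal slide of $I(v)$ to touch $I(y)$ gains the edge $uv\in E(G)$ and (using $\lp{y}\le\rp{v}+1$) loses no edge, so maximality is contradicted immediately; this is the only case your slide handles. Otherwise $y\ne u$, and the same minimal-slide argument forces $vy\not\in E(G)$, with $\rp{v}<\lp{y}<\lp{u}$. Now take any unit interval $[a,a+1]$ for $v$ that intersects $I(u)$ while retaining even one neighbor $x\in N_{\underline G}(v)$: then $a\le\rp{x}\le\rp{v}+1<\lp{y}+1$ and $a+1\ge\lp{u}>\lp{y}$, so it intersects $I(y)$ and creates the edge $vy\notin E(G)$. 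Carrying clique intervals along does not help, because the blocking vertex $y$ is not in $N_G[v]$: it is $I(y)$ itself that must move, and moving $y$ requires re-embedding its neighborhood consistently, and so on.

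That cascade is precisely what the paper's proof does and what is missing from yours. It re-embeds the \emph{entire} prefix $X=\{x:\lp{x}<\lp{w}\}$, where $w$ is the vertex of $K_{\cal A}(\cp{v}+\epsilon)$ with the leftmost interval: since every vertex of $X$ is adjacent to $u$ or $u'$, Proposition~\ref{lem:non-chordal-model} guarantees the arcs for $X$ do not cover the circle, so they can be transplanted into the interval model anchored at $\lp{w}$ (via $\lpp{x}=\lp{w}-(\ccp{w}-\ccp{x})$). This makes the new model realize $G[X]$ exactly on $X$---which is what simultaneously adds $uv$ and keeps $vy$ a non-edge. The right endpoints of $X\cap N_G(w)$ are then extended in a chained way, $\rpp{u_i}=\max\{\rpp{u_{i-1}}+\epsilon,\rp{u_i}\}$, and an induction using $u_{i-1}\rightarrow u_i$ verifies that all edges of $\underline G$ between $X$ and $V(G)\setminus X$ survive while no edge outside $E(G)$ appears. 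Without this global re-embedding and its verification, the claim is not proved.
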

  \begin{proof}
    Recall that $uv\notin E(\underline G)$ by condition (C2) and the
    fact that $u$ is adjacent to $K_{\cal A}(\cp{v} + \epsilon)$ in
    $\underline G$.
    Let $u'$ be the vertex in $N_{\underline G}[v]$ with the rightmost
    interval, and let $I(y)$ be the leftmost interval not intersecting
    $I(v)$.  Note that $v y\not\in E(G)$: Otherwise moving $I(v)$ to
    the right to intersect $I(y)$ would make a unit interval model
    that represents a subgraph of $G$ with one more edge than
    $\underline G$.  Suppose to the contrary of this claim that $I(u)$
    intersects some interval for $N_{\underline G}[v]$, then it
    intersects $I(u')$.  See Fig.~\ref{fig:c}.  Since $v$ satisfies
    condition (C2) and by Proposition~\ref{lem:non-chordal-model},
    $u'\rightarrow v$ and $y \rightarrow u, u'$.

    Let $w$ be the vertex in $K_{\cal A}(\cp{v} + \epsilon)$ that has
    the leftmost interval.  Since $v$ satisfies condition (C2),
    $\lp{w} > \rp{u'}$; by the assumption that $u$ is adjacent to
    $K_{\cal A}(\cp{v} + \epsilon)$ in $\underline G$, we have $\lp{w}
    < \rp{u}$.  By Proposition~\ref{lem:non-chordal-model}, $y w
    \not\in E(G)$.
    Let $A(y')$ be the leftmost arc such that $y'\in K_{\cal
      A}(\ccp{u})$ and $I(y')$ lies in ($\rp{v}, \lp{w}$); this vertex
    exists because $y$ itself is a candidate for it.
    Again by Proposition~\ref{lem:non-chordal-model}, $y' w \not\in
    E(G)$.

    Let $X$ denote the set of vertices $x$ with $\lp{x} < \lp{w}$.  We
    make a new interval model by resetting intervals for these
    vertices.  Since every vertex in $X$ is adjacent to at least one
    of $u$ and $u'$, by Proposition~\ref{lem:non-chordal-model}, the
    union of arcs for $X$ do not cover the circle in $\cal A$.  These
    arcs can thus be viewed as a proper interval model for $G[X]$.
    The new intervals for $X$, adapted from these arcs, are formally
    specified as follows.  The left endpoint of each $x\in X$ is set
    as $\lpp{x} = \lp{w} - (\ccp{w} - \ccp{x})$.
    For each vertex $x\in X\setminus N_G(w)$, we set 
    \[
    I'(x) = [\lp{w} - (\ccp{w} - \ccp{x}), \lp{w} - (\ccp{w} -
    \cp{x})].
    \]
    By the selection of $w$, we have $x\rightarrow w$ for all $x\in
    X\cap N_G(w)$.  Arcs for $X\cap N_G(w)$ are thus pairwise
    intersecting; by Proposition~\ref{lem:non-chordal-model}, they
    cannot cover the whole circle.
    We can thus number vertices in $X\cap N_G(w)$ as $u_1,
    \ldots, u_p$ such that $u_i \rightarrow u_{i + 1}$ for each $i =
    1, \ldots, p - 1$.  Their right endpoints are set as
    \begin{align*}
      \rpp{u_1} &= \max\big\{ \lp{w} + \epsilon, \rp{u_1}\big\},
      \qquad\text{and}
      \\
      \rpp{u_i} &= \max\big\{ \rpp{u_{i - 1}} + \epsilon,
      \rp{u_i}\big\} \quad \text{ for } i = 2, \ldots, p.
    \end{align*}

    Let $\cal I'$ denote the resulting new interval model.  To see
    that $\cal I'$ is proper, note that (a) no new interval can
    contain or be contained by an interval $I(z)$ for $z\in
    V(G)\setminus X$; and (b) the left and right endpoints of the
    intervals for $X$ have the same ordering as the counterclockwise
    and clockwise endpoints of the arcs $\{A(x): x\in X\}$, hence
    necessarily proper.  
    Let $G'$ denote the proper interval graph represented by $\cal
    I'$.  We want to argue that $E(\underline G) \subset E(G')
    \subseteq E(G)$, which would contradict that $\underline G$ is a
    maximum unit interval subgraph of $G$, and conclude the proof of
    this claim.

    By construction, $G' - X$ is the same as $\underline G - X$, while
    $G'[X]$ is the same as $G[X]$.  Thus, we focus on edges between
    $X$ and $V(G)\setminus X$, which are all incident to $X\cap
    N_G(w)$.
    For each $i = 1, \ldots, p$, we have $\rpp{u_i} \ge \rp{u_i}$, and
    thus $E(\underline G)\subseteq E(G')$; on the other hand, they are
    not equal because $uv \in E(G')\setminus E(\underline G)$.  We
    show by induction that for every $i = 1, \ldots, p$, the edges
    incident to $u_i$ in $G'$ is a subset of $G$.  The base case is
    clear: $N_{G'}(u_1)\setminus X$ is either $\{w\}$ or
    $N_{\underline G}(u_1)\setminus X$.  For the inductive step, if
    $\rpp{u_i} = \rp{u_i}$, then $N_{G'}(u_i)\setminus X =
    N_{\underline G}(u_i)\setminus X\subseteq N_G(u_i)$; otherwise,
    $N_{G'}(u_i)\setminus X\subseteq N_{\underline G}(u_{i -
      1})\setminus X$, which is a subset of $N_G(u_i)$ because $u_{i -
      1}\rightarrow u_i$.  This verifies $E(G') \subseteq E(G)$.
    \renewcommand{\qedsymbol}{$\lrcorner$}
  \end{proof}

  We consider then edges deleted from each vertex in $K_{\cal
    A}(\ccp{v})$, i.e., edges in $E_-$ that is incident to $K_{\cal
    A}(\ccp{v})$.
  \begin{claim}
    Let $u\in K_{\cal A}(\ccp{v})$.  If $u$ is adjacent to $K_{\cal
      A}(\cp{v} + \epsilon)$ in $\underline G$, then there are
    strictly more edges incident to $u$ in $E_-$ than in
    $\overrightarrow E_{\cal A}(\ccp{v})$ (i.e., between $u$ and
    $K_{\cal A}(\cp{v} + \epsilon)$.).
  \end{claim}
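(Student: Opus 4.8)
The plan is to turn the claim into the single inequality $t>s$ and then to produce the strict surplus from the maximality of $\underline G$. Let $B(u)$ be the set of edges of $G$ joining $u$ to $K_{\cal A}(\cp{v}+\epsilon)$; these are exactly the edges incident to $u$ in $\overrightarrow E_{\cal A}(\ccp{v})$, so $|B(u)|$ is the quantity on the right-hand side of the claim. Let $s\ge 1$ be the number of them that survive in $\underline G$ (positive by the hypothesis that $u$ is adjacent to $K_{\cal A}(\cp{v}+\epsilon)$ in $\underline G$). The edges incident to $u$ in $E_-$ are then the $|B(u)|-s$ deleted edges inside $B(u)$ together with the $t$ deleted edges joining $u$ to $V(G)\setminus K_{\cal A}(\cp{v}+\epsilon)$; hence the claim is precisely $t>s$.

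To bound $t$ from below I would use the previous claim. As $K_{\cal A}(\ccp{v})$ is a clique of $G$ containing $u$ and, by condition~(C2), all of $N_{\underline G}[v]$, the vertex $u$ is adjacent in $G$ to every vertex of $N_{\underline G}[v]$; but by the previous claim $I(u)$ meets no interval of $N_{\underline G}[v]$, so every such edge is deleted. Thus all $|N_{\underline G}[v]|$ edges from $u$ into $N_{\underline G}[v]$ lie in $E_-$ and are counted by $t$, giving $t\ge |N_{\underline G}[v]|$ (in particular $uv\in E_-$). It therefore suffices to establish the strict inequality $|N_{\underline G}[v]|>s$.

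The key observation towards this is that each surviving neighbour mirrors a deletion at $v$. Let $w$ be a surviving neighbour of $u$ in $K_{\cal A}(\cp{v}+\epsilon)$. Since every edge between $K_{\cal A}(\ccp{v})$ and $K_{\cal A}(\cp{v}+\epsilon)$ is oriented to the right, $u\rightarrow w$ gives $\ccp{w}\le\cp{u}$; together with $\cp{u}\le\cp{v}$ (from $u\rightarrow v$) and $\ccp{w}>\ccp{v}$ (as $w\notin K_{\cal A}(\ccp{v})$) this yields $\ccp{w}\in A(v)$, i.e.\ $v\rightarrow w$ and $vw\in E(G)$. By condition~(C2) the edge $vw$ lies in $E_-$. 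Hence the $s$ surviving neighbours of $u$ are $s$ distinct vertices of $K_{\cal A}(\cp{v}+\epsilon)$ to each of which $v$ has a deleted edge.

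Finally I would deduce $|N_{\underline G}[v]|>s$ from the maximality of $\underline G$ by an exchange of the same flavour as the previous claim. Recall $I(v)$ is the first interval of $\cal I$, so $v$ currently contributes only its $|N_{\underline G}[v]|-1$ edges to $N_{\underline G}[v]\setminus\{v\}$. I would relocate $v$ rightward, rebuilding the intervals of the prefix it crosses directly from the arcs of $\cal A$, so that $v$ regains its (at least $s$) deleted edges to the vertices $w$ while forfeiting only those $|N_{\underline G}[v]|-1$ edges; were $|N_{\underline G}[v]|\le s$ this would strictly increase the number of edges. As in the previous claim, Proposition~\ref{lem:non-chordal-model} and Corollary~\ref{cor:non-chordal-model} (no small family of arcs covers the circle) guarantee that the rebuilt intervals create no edge outside $E(G)$, so the outcome is a unit interval subgraph of $G$ with more edges than $\underline G$, contradicting maximality. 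I expect this rebuild---certifying that the reset prefix remains a proper, unit, and edge-subgraph model while the edge count strictly grows---to be the main obstacle, precisely the delicate bookkeeping already met in the proof of the previous claim.
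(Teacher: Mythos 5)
Your reduction is correct and is exactly the paper's: by Claim~1 and the fact that $K_{\cal A}(\ccp{v})$ is a clique containing $N_{\underline G}[v]$ (condition (C2)), all $|N_{\underline G}[v]|$ edges from $u$ into $N_{\underline G}[v]$ lie in $E_-$ but not in $\overrightarrow E_{\cal A}(\ccp{v})$, so the claim boils down to $s:=|N_{\underline G}[u]\cap K_{\cal A}(\cp{v}+\epsilon)| < |N_{\underline G}[v]|$. The gap is that the remaining step---deducing this inequality from the maximality of $\underline G$---is the entire substance of the claim, and you do not carry it out: you sketch a relocation of $I(v)$ that also ``rebuilds the intervals of the prefix it crosses,'' and you yourself flag its verification as the main obstacle. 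That verification is not deferred bookkeeping; it is where the proof lives, and the construction you gesture at does not obviously close.

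Concretely, two things go wrong. First, rebuilding a prefix of intervals (as in Claim~1) can destroy edges between rebuilt and untouched vertices---for instance the edge $yu$ for a left neighbour $y$ of $u$ dragged away by the rebuild. Under your hypothesis $|N_{\underline G}[v]|\le s$ the net gain at $v$ may be exactly one edge, so a single such loss voids the count; Claim~1's rebuild was rescued by its specific anchor at $\lp{w}$ and the monotonicity $\rpp{u_i}\ge \rp{u_i}$, and no analogue is available here. The paper avoids this entirely by moving \emph{only} the interval $I(v)$, so no edge not incident to $v$ can change. Second, even for moving $I(v)$ alone, showing that no edge outside $E(G)$ is created is not a routine appeal to ``no small family of arcs covers the circle'': a $\underline G$-neighbour $y$ of $u$ whose arc lies wholly on the far side of $A(u)$ from $A(v)$ need not be in $N_G(v)$, so parking $I(v)$ on top of $I(u)$ can create a forbidden edge $vy$. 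The paper therefore splits into two cases. When $u$ has surviving $K_{\cal A}(\cp{v}+\epsilon)$-neighbours meeting $I(u)$ from both sides, the placement $[\lp{u}-\epsilon,\rp{u}-\epsilon]$ works: every $\underline G$-neighbour $y$ of $u$ also meets such a neighbour $w$, and the chord $uw\in E(\underline G)\subseteq E(G)$ is what rules out, via Corollary~\ref{cor:non-chordal-model} combined with Proposition~\ref{lem:non-chordal-model}, the four arcs $A(v),A(u),A(y),A(w)$ covering the circle (remember $G$ may contain $C_4$'s, so the covering-bound alone is not a contradiction). When all surviving neighbours lie on one side, that placement fails, and a different one, $[\lp{w'}-\epsilon,\rp{u'}+\epsilon]$ with $w'$ and $u'$ extremal vertices of $K_{\cal A}(\cp{v})\cap K_{\cal I}(\rp{u})$ and $K_{\cal A}(\ccp{v})\cap K_{\cal I}(\lp{w})$, is required, together with a separate properness check. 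This case analysis, the explicit placements, and the chord argument are the actual proof; none of them appears in your proposal.
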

  \begin{proof}
    The vertices in $N_G(u)$ consists of three parts, $N_{\underline
      G}[v]$, those in $K_{\cal A}(\cp{v} + \epsilon)$, and others.
    By Claim~1, edges between $u$ and all vertices in $N_{\underline
      G}[v]$ are in $E_-$; they are not in $\overrightarrow E_{\cal
      A}(\ccp{v})$.  All edges between $u$ and $K_{\cal A}(\cp{v} +
    \epsilon)$ are in $\overrightarrow E_{\cal A}(\ccp{v})$.  In
    $\overrightarrow E_{\cal A}(\ccp{v})$ there is no edge between $u$
    and the other vertices.  Therefore, to show the claim, it suffices
    to show $|N_{\underline G}[u]\cap K_{\cal A}(\cp{v} + \epsilon)|
    \le |N_{\underline G}(v)| < |N_{\underline G}[v]| $.

    Let $u w$ be an edge of $\underline G$ with $w\in K_{\cal
      A}(\cp{v} + \epsilon)$.  Since $v$ satisfies condition (C2),
    both $uv$ and $v w$ are in $E_-$.  We show only the case that
    $I(w)$ intersects $I(u)$ from the right; a similar argument works
    for the other case: Note that all the intervals used below are
    disjoint from $I(v)$.

    Consider first that there exists another vertex $w' \in K_{\cal
      A}(\cp{v} + \epsilon)$ with interval $I(w')$ intersecting $I(u)$
    from the left.  See Fig.~\ref{fig:d}.  Any interval that
    intersects $I(u)$ necessarily intersects at least one of $I(w)$
    and $I(w')$, and thus by Proposition~\ref{lem:non-chordal-model},
    its vertex must be in $N_G(v)$.  As a result, setting $I(v)$ to
    $[\lp{u} - \epsilon, \rp{u} - \epsilon]$ gives another proper
    interval model and it represents a subgraph of $G$.  Since
    $\underline G$ is maximum, we can conclude $|N_{\underline G}[u]|
    \le |N_{\underline G}(v)| < |N_{\underline G}[v]|$.  In this case
    the proof of this claim is concluded.

    In the second case there is no vertex in $K_{\cal A}(\cp{v} +
    \epsilon)$ whose interval intersects $I(u)$ from the left.  Note
    that every interval intersecting $[\lp{w}, \rp{u}]$ represents a
    vertex in $N_G[v]$.  Let $w'$ be vertex in $K_{\cal A}(\cp{v})\cap
    K_{\cal I}(\rp{u})$ that has the leftmost interval, and let $u'$
    be vertex in $K_{\cal A}(\ccp{v})\cap K_{\cal I}(\lp{w})$ that has
    the rightmost interval; these two vertices exist because $u$ and
    $w$ are candidates for them, respectively.  See Fig.~\ref{fig:e}.
    There cannot be any vertex whose interval contains $[\lp{w'},
    \rp{u'}]$: Such a vertex, if it exists, is in $N_G(v)$, but then
    it contradicts the selection of $u'$ and $w'$.  Also by the
    selection of $u'$ and $w'$, no interval contains $[\lp{w'} -
    \epsilon, \rp{u'} + \epsilon]$.  Thus, setting $I(v)$ to $[\lp{w'}
    - \epsilon, \rp{u'} + \epsilon]$ gives another proper interval
    model and it represents a subgraph of $G$.
    Since $\underline G$ is maximum, we can conclude $|N_{\underline
      G}[u]\cap K_{\cal A}(\cp{v} + \epsilon)| \le |N_{\underline
      G}(v)| < |N_{\underline G}[v]| $.
    \renewcommand{\qedsymbol}{$\lrcorner$}    
  \end{proof}

  Therefore, for every vertex $u\in K_{\cal A}(\ccp{v})$, there are no
  less edges incident to $u$ in $E_-$ than in $\overrightarrow E_{\cal
    A}(\ccp{v})$.  Moreover, as there is at least one vertex in
  $K_{\cal A}(\ccp{v})$ adjacent to $K_{\cal A}(\cp{v} + \epsilon)$ in
  $\underline G$, (noting that no edge in $\overrightarrow E_{\cal
    A}(\ccp{v})$ is incident to two vertices in $K_{\cal
    A}(\ccp{v})$,) it follows that $ |E_-| > |\overrightarrow E_{\cal
    A}(\ccp{v})|, $ contradicting that $\underline G$ is a maximum
  unit interval subgraph of $G$.
\end{proof}

It is worth stressing that a thinnest place in an arc model with
respect to edges is not necessarily a thinnest place with respect to
vertices; see Fig.~\ref{fig:thin} for an example.  There is a linear
number of different places to check, and thus the edge deletion
problem can also be solved in linear time on proper Helly circular-arc
graphs.  The problem is also simple on fat $W_5$'s.
\begin{SCfigure}[][h]
  \centering\small
  \includegraphics{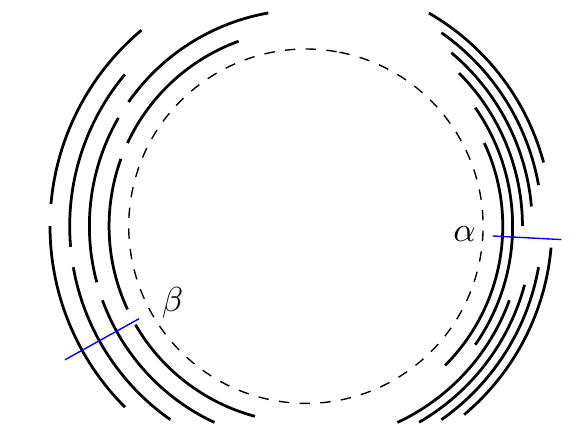}
  \caption{\small The thinnest points for vertices and edges are
    $\alpha$ and $\beta$ respectively:
    \\
    $K_{\cal A}(\alpha) = 2 < K_{\cal A}(\beta) = 3$; while
    \\
    $\protect\overrightarrow E_{\cal A}(\alpha) = 8 >
    \protect\overrightarrow E_{\cal A}(\beta) = 6$.}
  \label{fig:thin}
\end{SCfigure}

\begin{theorem}\label{thm:edge-deletion-phcag}
  The \uied{} problem can be solved in $O(m)$ time (1) on \phcag{s}
  and (2) on $\cal F$-free graphs.
\end{theorem}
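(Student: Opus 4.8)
The plan is to treat the two cases separately, reducing part~(2) to part~(1) via Theorem~\ref{thm:characterization}. For part~(1), let $G$ be a \phcag{} with a proper and Helly arc model $\cal A$. If $G$ is chordal then by Corollary~\ref{lem:chordal-phcag} it is already a unit interval graph and no deletion is needed, so I may assume $G$ is non-chordal. A minimum solution to \uied{} is precisely the complement of a maximum spanning unit interval subgraph $\underline G$, i.e.\ the set $E(G)\setminus E(\underline G)$ of smallest size. By Lemma~\ref{lem:edge-hole-cover} this set must equal $\overrightarrow E_{\cal A}(\rho)$ for some point $\rho$, while by Proposition~\ref{lem:edge-hole-cover-1} every $\overrightarrow E_{\cal A}(\alpha)$ does yield a unit interval graph. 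Hence the optimum value is exactly $\min_{\alpha}\,\lvert\overrightarrow E_{\cal A}(\alpha)\rvert$, and it suffices to compute this minimum (and compare it with the budget $k$) in $O(m)$ time.

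The main work of part~(1) is this $O(m)$ computation, carried out by a single sweep of the circle. The key observation is that, after making $\cal A$ canonical, an oriented edge $v\rightarrow u$ (so $\cp{v}\in A(u)$) belongs to $\overrightarrow E_{\cal A}(\alpha)$, per \eqref{eq:2}, exactly when $\alpha\in A(v)\setminus A(u)=[\ccp{v},\ccp{u})$; thus $\lvert\overrightarrow E_{\cal A}(\alpha)\rvert$ is simply the number of these circular arcs covering $\alpha$. I would first orient every edge of $G$ according to the left--right relation in $O(m)$ total time, recording at each vertex $w$ its out-degree $d^+(w)$ and in-degree $d^-(w)$. Sweeping $\alpha$ clockwise, the covering depth changes only at the points $\ccp{w}$, where it increases by $d^+(w)$ (arcs starting) and decreases by $d^-(w)$ (arcs ending); tracking the running minimum over the $2n$ gaps between consecutive endpoints then yields $\min_\alpha\lvert\overrightarrow E_{\cal A}(\alpha)\rvert$. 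Computing the depth at one starting gap and then applying the $O(n)$ net-change updates gives the bound in $O(m)$ time, and the optimal point $\rho$ can be retained so that the edge set itself is recovered in another $O(m)$ pass if required.

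For part~(2), I would run Theorem~\ref{thm:characterization}(2). If it returns a proper and Helly arc model, then $G$ is a \phcag{} and part~(1) applies. Otherwise it returns a partition of $V(G)$ into a hub clique $K_v$ and five fat-hole cliques $K_0,\dots,K_4$ (indices modulo $5$) forming a fat $W_5$. Here I would first settle the plain $W_5$: since its induced $C_5$ forces deleting a hole edge and its universal hub forbids keeping all five spokes (once a hole edge is gone the remaining subgraph of $C_5$ is proper and has an independent triple, which would form a claw with the hub), at least two edges must go; conversely deleting one spoke $v h_i$ together with one incident hole edge $h_i h_{i\pm1}$ produces a unit interval graph, so the optimum for $W_5$ is $2$. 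Lifting this to the fat $W_5$, deleting all edges between $K_i$ and $K_v$ and between $K_i$ and one neighbor $K_{i\pm1}$ detaches $K_i$ and leaves the true-twin blow-up of a valid $W_5$ solution, hence a unit interval graph, at cost $n_i\bigl(n_v+\min(n_{i-1},n_{i+1})\bigr)$ where $n_j:=|K_j|$; minimizing over $i$ gives the answer, computable in $O(m)$.

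The hard part will be the optimality direction for the fat $W_5$, namely showing that no cheaper solution exists. I expect to argue by a twin-exchange: because unit interval graphs are closed under the insertion and deletion of true twins, some maximum spanning unit interval subgraph can be taken to treat every twin class $K_j$ uniformly, which collapses the problem to the weighted $W_5$ analysis above; making this exchange precise (and in particular ruling out partial cuts between two cliques) is the only delicate step, whereas the circular sweep of part~(1) is routine once the covering-depth reformulation of $\overrightarrow E_{\cal A}(\alpha)$ is in hand.
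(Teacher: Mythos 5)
Your proposal is correct and follows essentially the same route as the paper: part (1) combines Proposition~\ref{lem:edge-hole-cover-1} and Lemma~\ref{lem:edge-hole-cover} with an $O(m)$ sweep over the $2n$ candidate points (your covering-depth formulation is just a rephrasing of the paper's incremental update at counterclockwise endpoints), and part (2) invokes Theorem~\ref{thm:characterization} and the same twin-exchange argument to reduce the fat $W_5$ to a weighted $W_5$, arriving at the same family of optimal cuts (cut between two adjacent fat-hole cliques plus the cut from one of them to the hub). One small wording slip: deleting the edges between $K_i$ and $K_v$ and between $K_i$ and one neighbor does not ``detach'' $K_i$ (it stays attached to its other neighbor), but the edge sets you minimize over coincide with the paper's, so nothing is affected.
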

\begin{proof}
  For (1), we may assume that the input graph $G$ is not an unit
  interval graph; according to Corollary~\ref{lem:chordal-phcag}, it
  is not chordal.  We build a proper and Helly arc model $\cal A$ for
  $G$; without loss of generality, assume that it is canonical.
  According to Lemma~\ref{lem:edge-hole-cover}, the problem reduces to
  finding a point $\alpha$ in $\cal A$ such that $\overrightarrow
  E_{\cal A}(\alpha)$ is minimized.  It suffices to consider the $2 n$
  points $i + 0.5$ for $i \in \{0, \ldots, 2n - 1\}$.  We calculate
  first $\overrightarrow E_{\cal A}(0.5)$, and then for $i= 1, \ldots,
  2n - 1$, we deduce $\overrightarrow E_{\cal A}(i + 0.5)$ from
  $\overrightarrow E_{\cal A}(i - 0.5)$ as follows.  If $i$ is a
  clockwise endpoint of some arc, then $\overrightarrow E_{\cal A}(i +
  0.5) = \overrightarrow E_{\cal A}(i - 0.5)$.  Otherwise, $i =
  \ccp{v}$ for some vertex $v$, then the difference between
  $\overrightarrow E_{\cal A}(i + 0.5)$ and $\overrightarrow E_{\cal
    A}(i - 0.5)$ is the set of edges incident to $v$.  In particular,
  $\{u v: u \rightarrow v\} = \overrightarrow E_{\cal A}(i -
  0.5)\setminus \overrightarrow E_{\cal A}(i + 0.5)$, while $\{u v: v
  \rightarrow u\} = \overrightarrow E_{\cal A}(i + 0.5)\setminus
  \overrightarrow E_{\cal A}(i - 0.5)$.  Note that the initial value
  $\overrightarrow E_{\cal A}(0.5)$ can be calculated in $O(m)$ time,
  and then each vertex and its adjacency list is scanned exactly once.
  It follows that the total running time is $O(m)$.

  For (2), we may assume that the input graph $G$ is connected, as
  otherwise we work on its components one by one.  According to
  Theorem~\ref{thm:characterization}(1), $G$ is either a proper Helly
  circular-arc graph or a fat $W_5$.  The former case has been
  considered above, and now assume $G$ is a fat $W_5$.  Let $K_0,
  \ldots, K_4$ be the five cliques in the fat hole, and let $K_5$ be
  the hub.  
  We may look for a maximum spanning unit interval subgraph
  $\underline G$ of $G$ such that $N_{\underline G}[u] = N_{\underline
    G}[v]$ for any pair of vertices $u,v$ in $K_i$, where $i \in \{0,
  \ldots, 5\}$.  We now argue the existence of such a subgraph.  By
  definition, $N_G[u] = N_G[v]$.  Let ${G'}$ be a maximum spanning
  unit interval subgraph of $G$ where $N_{G'}[u] \ne N_{G'}[v]$, and
  assume without loss of generality, $|N_{G'}[u]| \ge |N_{G'}[v]|$.
  We may change the deleted edges that are incident to $v$ to make
  another subgraph where $v$ has the same neighbors as $N_{G'}[u]$;
  this graph is clearly a unit interval graph and has no less edges
  than $G'$.  This operation can be applied to any pair of $u, v$ in
  the same twin class, and it will not violate an earlier pair.
  Repeating it we will finally end with a desired maximum spanning
  unit interval subgraph.  Therefore, there is always some $i \in \{0,
  \ldots, 4\}$ (all subscripts are modulo $5$) such that deleting all
  edges between cliques ${K_i}$ and ${K_{i+1}}$ together with edges
  between one of them and ${K_5}$ leaves a maximum spanning unit
  interval subgraph.  Once the sizes of all six cliques have been
  calculated, which can be done in $O(m)$ time, the minimum set of
  edges can be decided in constant time.  Therefore, the total running
  time is $O(m)$.  The proof is now complete.
\end{proof}
Indeed, it is not hard to see that in the proof of
Theorems~\ref{thm:edge-deletion-phcag}(2), \emph{every} maximum
spanning unit interval subgraph of a fat $W_5$ keeps its six twin
classes, but we are satisfied with the weaker statement that is
sufficient for our algorithm.  

Theorems~\ref{thm:edge-deletion-phcag} and \ref{thm:characterization}
already imply a branching algorithm for the \uied{} problem running in
time $O(9^k\cdot m)$.  Here the constant $9$ is decided by the $S_3$,
which has $9$ edges.  However, a closer look at it tells us that
deleting any single edge from an $S_3$ introduces either a claw or a
$C_4$, which forces us to delete some other edge(s).  The disposal of
an $\overline{S_3}$ is similar.  The labels for an $S_3$ and a
$\overline{S_3}$ used in the following proof are as given in
Fig.~\ref{fig:fis}.
\begin{proposition}\label{lem:4-suffice}
  Let $\underline G$ be a spanning unit interval subgraph of a graph
  $G$, and let $F = E(G)\setminus E(\underline G)$. 
  \begin{enumerate}[(1)]
  \item For an $S_3$ of $G$, there must be some $i = 1,2,3$ such that
    $F$ contains at least two edges from the triangle involving $u_i$.
  \item For an $\overline S_3$ of $G$, the set $F$ contains either an
    edge $u_i v_i$ for some $i = 1,2,3$, or at least two edges from
    the triangle $u_1 u_2 u_3$.
  \end{enumerate}
\end{proposition}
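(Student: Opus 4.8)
The plan is to prove the contrapositive of each part, using Theorem~\ref{thm:uig-fis} together with the fact that unit interval graphs are hereditary: since $\underline G$ is a unit interval graph, so is the subgraph it induces on the six vertices of the given $S_3$ (resp.\ $\overline{S_3}$), and hence that induced subgraph contains no claw, $S_3$, $\overline{S_3}$, or hole. As only edges of $F$ among these six vertices matter, this induced subgraph is exactly the forbidden configuration with $F$ removed. So it suffices to show: if $F$ avoids the stated pattern, then deleting it from the $S_3$ (resp.\ $\overline{S_3}$) still leaves one of the four forbidden induced subgraphs. Throughout I write the $S_3$ with its three degree-two \emph{tips} $u_1,u_2,u_3$ (a stable set) and a central triangle on the other three vertices; the triangle $T_i$ involving $u_i$ consists of $u_i$ and the two central vertices adjacent to it. The key bookkeeping fact is that the two \emph{spoke} edges at each $u_i$, together with the one central edge of $T_i$, partition the nine edges of the $S_3$ into $T_1,T_2,T_3$. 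Thus the negated hypothesis ``$|F\cap T_i|\le 1$ for every $i$'' means at most one edge is deleted from each $T_i$, and since the three central edges lie in distinct $T_i$, the set $F$ contains between $0$ and $3$ of them.

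For part~(1) I would branch on how many of the $T_i$ spend their single permitted deletion on a central edge. If this number is three, the bound forces that no spoke is deleted; then the three tips and three central vertices induce the \emph{rim} $C_6$ (the tip--center alternating $6$-cycle formed by the six spokes), whose only possible chords are the three central edges---now all gone---so it is an induced hole. If the number is two, the two tips whose arms lost a central edge, together with the three central vertices, induce a $C_5$ that does not use the third tip, and hence survives whatever spoke that third arm may lose. If the number is one, the tip of that arm together with all three central vertices induces a $C_4$, the deleted central edge being the one non-adjacent pair that is bridged through the tip; again the other two tips are untouched. In every subcase the witness is confined to the arms whose central edge was removed, which is precisely why immunity to spoke deletions elsewhere makes the $\le 1$-per-triangle bound enough.

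The remaining, and main, case of part~(1) is when $F$ contains no central edge: the central triangle is intact, and since each $T_i$ keeps its central edge, at most one spoke is deleted at each tip, so every tip retains at least one spoke. Here I would prove that the only claw-free graphs of this form are the $S_3$ itself (no spoke deleted) and the net $\overline{S_3}$ (each tip keeps exactly one spoke, with the three retained attachment points distinct), both of which are forbidden. The crux is the observation that if two tips are attached to a common central vertex $w$, then claw-freeness forces both tips to retain \emph{both} spokes: otherwise a tip missing its other spoke, the second tip, and a suitable central neighbour of $w$ form an independent triple in $N(w)$, i.e.\ a claw centered at $w$. A short count then shows that ``half-tips'' (those missing a spoke) can neither coexist with full tips nor share a central vertex with one another, leaving only the all-full ($S_3$) and all-half-with-distinct-points (net) configurations. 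This classification is the part that needs the most care, although each individual obstruction is only a four-vertex claw check.

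Part~(2) is the easy analogue, which I would dispatch in two lines. Assume $F$ contains no pendant edge $u_iv_i$ and at most one edge of the triangle $u_1u_2u_3$. If no triangle edge is deleted, the $\overline{S_3}$ survives intact and is forbidden. If exactly one triangle edge, say $u_au_b$, is deleted, then the third triangle vertex $u_c$, together with $u_a,u_b$ and its pendant $v_c$, induces a claw: $u_au_b$ is now a non-edge, and $v_c$ is adjacent to none of $u_a,u_b$. In either situation the induced subgraph on the six vertices fails to be a unit interval graph, contradicting the choice of $\underline G$ and completing the proof.
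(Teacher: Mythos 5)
Your proof is correct, but for part (1) it takes a genuinely different route from the paper. You reduce everything to a finite combinatorial check on the six vertices: by heredity, $\underline G$ restricted to the $S_3$'s vertex set is a unit interval graph, hence \{claw, $S_3$, $\overline{S_3}$, hole\}-free by Theorem~\ref{thm:uig-fis}; you then partition the nine edges into the three triangles $T_1,T_2,T_3$, and case-analyze on how many central edges lie in $F$, exhibiting an induced $C_6$, $C_5$, $C_4$ in the cases with three, two, one deleted central edges, and in the remaining case (central triangle intact, at most one spoke lost per tip) showing via your shared-center claw lemma that the surviving configuration is forced to be the $S_3$ itself or the net. I checked the details --- including the claw lemma (the ``suitable central neighbour'' always exists because each tip is non-adjacent in $G$ to exactly one central vertex) and the count ruling out mixed full/half configurations --- and they go through, though the ``short count'' is the one place you leave partly implicit. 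The paper instead argues part (1) directly on a unit interval model of $\underline G$: if the central triangle $v_1v_2v_3$ survives, the Helly property and unit lengths force the union $I(v_1)\cup I(v_2)\cup I(v_3)$ (length $<2$) to miss the interval of some tip, giving two deleted spokes; if it does not survive, ordering the left endpoints $\lp{v_1}<\lp{v_2}<\lp{v_3}$ identifies the violated triangle as the one through the middle vertex's tip. Your approach buys elementarity and locality --- no model manipulation, only adjacency case analysis --- at the cost of a longer enumeration; the paper's model-based argument is shorter and rehearses the interval-shifting technique it relies on throughout (notably in Lemma~\ref{lem:edge-hole-cover}). Your part (2) coincides with the paper's one-line claw argument.
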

\begin{proof}
  (1) We consider the intervals for $v_1, v_2$, and $v_3$ in a unit
  interval model $\cal I$ for $\underline G$.  If $v_1 v_2 v_3$
  remains a triangle of $\underline G$, then the interval
  $I({v_1})\cup I({v_2})\cup I({v_3})$ has length less than $2$, and
  it has to be disjoint from $I(u_i)$ for at least one $i\in \{1, 2,
  3\}$.  In other words, both edges incident to this $u_i$ are in $F$.
  Otherwise, $v_1 v_2 v_3$ is not a triangle of $\underline G$.
  Assume without loss of generality $\lp{v_1} < \lp{v_2} < \lp{v_3}$.
  Then $v_1 v_3\not\in E(\underline G)$ and $u_2$ cannot be adjacent
  to both $v_1$ and $v_3$.  Therefore, at least two edges from the
  triangle $u_2 v_1 v_3$ are in $F$; other cases are symmetric.

  (2) If $F$ contains none of the three edges $v_1 u_1$, $v_2 u_2$,
  and $v_3 u_3$, then it contains at least two edges from the triangle
  $u_1 u_2 u_3$: Otherwise there is a claw.
\end{proof}

This observation and a refined analysis will yield the running time
claimed in Theorem~\ref{thm:alg-1}.  The algorithm goes similarly as
the parameterized algorithm for \uivd{} used in the proof of
Theorem~\ref{thm:algs-uivd}.
\begin{theorem}
  The \uied{} problem can be solved in time $O(4^k\cdot m)$.
\end{theorem}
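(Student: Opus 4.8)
**

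The plan is to mimic the bounded-depth search tree used for \uivd{} in Theorem~\ref{thm:algs-uivd}, but to branch on \emph{edges} rather than vertices, and to exploit Proposition~\ref{lem:4-suffice} so that the obstructions $S_3$ and $\overline{S_3}$---which naively cost a branching factor of $9$---are handled with an effective branching factor of only $4$. First I would set up the recursion exactly as before: given an instance $(G,k)$, call Theorem~\ref{thm:characterization}(2) to either certify that $G$ is $\cal F$-free (in which case we finish by Theorem~\ref{thm:edge-deletion-phcag}(2) in $O(m)$ time) or return an induced $F\in\cal F$. Since \uied{} requires an induced-subgraph-destroying edge set and unit interval graphs are hereditary, at least one edge of $F$ must be deleted, so we branch on the choice of deleted edge(s) and recurse on $(G-e,\,k-1)$ style subinstances, stopping when $k=0$.

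The key refinement is the branching rule for each obstruction type. For a claw (three edges) and a $C_4$ (four edges), deleting a single edge can remove the obstruction, giving branching factors $3$ and $4$ respectively; here the analysis is routine and both are bounded by $4$. The interesting cases are $S_3$ and $\overline{S_3}$. Rather than branching over all $9$ (resp.\ $9$) edges, I would use Proposition~\ref{lem:4-suffice} to branch only over the \emph{structured} ways a solution must delete edges. For an $S_3$, part~(1) says some $u_i$ has \emph{both} of its two triangle-edges in $F$; so I branch into three cases, one per $i\in\{1,2,3\}$, and in each case remove two specific edges, recursing on $(G - \{e,e'\},\,k-2)$. A branch that spends two units of budget while splitting into three ways contributes a recurrence $T(k)\le 3\,T(k-2)+O(m)$, whose characteristic root is $\sqrt3<2$, comfortably below the target base $4$. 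For $\overline{S_3}$, part~(2) gives either one of the three edges $u_iv_i$ (a $3$-way, cost-$1$ branch) or two edges of the triangle $u_1u_2u_3$; the former yields $T(k)\le 3\,T(k-1)$ but this arises only when the $u_iv_i$ option is taken, and combining it with the cost-$2$ triangle branch keeps the overall root below $4$. I would therefore state a single branching rule covering all four obstruction types and verify that in every case the recurrence solves to $O(4^k\cdot m)$.

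The main obstacle I anticipate is tightening the recurrence analysis so that the \emph{worst} branching vector across all four obstruction types yields base $4$ and not something larger. The claw and $C_4$ cases alone already force base at least $4$ (a $C_4$ gives $T(k)\le 4\,T(k-1)$), so the whole argument hinges on showing that $S_3$ and $\overline{S_3}$ do \emph{not} exceed this. For $S_3$ this is clear since $\sqrt3<4$. For $\overline{S_3}$ the delicate point is the $3\,T(k-1)$ term coming from the $u_iv_i$ options: $3<4$, so even this is fine, but I must make sure the rule does not double-count or branch redundantly---each of the (at most) $3+3=6$ branches is distinct and each reduces $k$ by at least $1$, so the branching number is at most $\max\{4,3,\sqrt3\}=4$. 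I would also need to confirm that after deleting the prescribed edge(s) in an $S_3$ or $\overline{S_3}$ branch the recursion correctly continues---i.e.\ that these deletions are \emph{consistent with} some optimal solution, which is exactly what Proposition~\ref{lem:4-suffice} guarantees, so no solution is lost. Finally, because each node of the search tree costs $O(m)$ (one call to Theorem~\ref{thm:characterization} plus, at the leaves, one call to Theorem~\ref{thm:edge-deletion-phcag}), and the tree has $O(4^k)$ nodes, the total running time is $O(4^k\cdot m)=O(4^k\cdot(n+m))$, as claimed.
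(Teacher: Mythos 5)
Your overall architecture --- recurse via Theorem~\ref{thm:characterization}(2), finish on $\cal F$-free graphs with Theorem~\ref{thm:edge-deletion-phcag}, and use Proposition~\ref{lem:4-suffice} to beat the naive branching factor $9$ --- is exactly the paper's, and your claw, $C_4$, and $\overline{S_3}$ rules as well as the final $O(4^k\cdot m)$ accounting match its proof. However, your $S_3$ rule has a genuine gap. You branch into only \emph{three} cases, deleting for each $i$ both edges incident to $u_i$, and justify this by reading Proposition~\ref{lem:4-suffice}(1) as ``some $u_i$ has both of its two triangle-edges in $F$.'' That is not what the proposition says: it guarantees at least two edges from the \emph{triangle involving} $u_i$, a triangle with three edges, one of which lies in the inner triangle $v_1v_2v_3$ and is not incident to $u_i$. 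A solution may well take that inner edge plus only one edge at $u_i$, and such minimal solutions genuinely exist: with the labels of Fig.~\ref{fig:fis} (where $u_2$ is adjacent to $v_1$ and $v_3$), deleting $\{v_1v_3,\, u_2v_1\}$ from an $S_3$ leaves a unit interval graph, yet no $u_j$ loses both of its edges. For a solution $F$ whose intersection with the $S_3$ has this shape, none of your three branches deletes a subset of $F$; and since unit interval graphs are not closed under edge deletion, deleting edges outside $F$ gives no guarantee that the resulting instance is still a yes-instance. So your correctness claim (``no solution is lost, which is exactly what Proposition~\ref{lem:4-suffice} guarantees'') does not follow, and absent an exchange argument the algorithm may wrongly reject.

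The repair is what the paper actually does: for an $S_3$, branch over all three triangles and, within each, over all three pairs of its edges --- $9$ branches, each decreasing $k$ by $2$, hence branching number $\sqrt{9}=3$, still dominated by the $C_4$'s factor of $4$. With this corrected rule, your recurrence analysis and the $O(4^k\cdot m)$ bound go through unchanged.
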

\begin{proof}
  The algorithm calls first Theorem~\ref{thm:characterization}(2) to
  decide whether there exists an induced subgraph in $\cal F$, and
  then based on the outcome, it solves the problem by making recursive
  calls to itself, or calling the algorithm of
  Theorem~\ref{thm:edge-deletion-phcag}.  When a claw or $C_4$ is
  found, the algorithm makes respectively $3$ or $4$ calls to itself,
  each with a new instance with parameter value $k - 1$ (deleting one
  edge from the claw or $C_4$).  For an $S_3$, the algorithm branches
  on deleting two edges from a triangle involving a vertex $u_i$ with
  $i = 1, 2, 3$.  Since there are three such triangles, and each has
  three options, the algorithm makes $9$ calls to itself, all with
  parameter value $k - 2$.  For an $\overline{S_3}$, the algorithm
  makes $6$ calls to itself, of which $3$ with parameter value $k - 1$
  (deleting edge $v_i u_i$ for $i = 1, 2, 3$), and another $3$ with
  parameter value $k - 2$ (deleting two edges from the triangle $v_1
  v_2 v_3$).

  To verify the correctness of the algorithm, it suffices to show that
  for any spanning unit interval subgraph $\underline G$ of $G$, there
  is at least one recursive call that generates a graph $G'$
  satisfying $E(\underline G)\subseteq E(G')\subseteq E(G)$.  This is
  obvious when the recursive calls are made on a claws or $C_4$.  It
  follows from Proposition~\ref{lem:4-suffice} when the recursive
  calls are made on an $S_3$ or $\overline{S_3}$.  With standard
  technique, it is easy to verify that $O(4^k)$ recursive calls are
  made, each in $O(m)$ time.  Moreover, the algorithm for
  Theorem~\ref{thm:edge-deletion-phcag} is called $O(4^k)$ times.  It
  follows that the total running time of the algorithm is $O(4^k\cdot
  m)$.
\end{proof}

What dominates the branching step is the disposal of $C_4$'s.  With
the technique the author developed in \cite{cao-15-edge-deletion}, one
may (slightly) improve the running time to $O(c^k\cdot m)$ for some
constant $c < 4$.  To avoid blurring the focus of the present paper,
we omit the details.

\section{General editing}\label{sec:general}
Let $V_-\subseteq V(G)$, and let $E_-$ and $E_+$ be a set of edges and
a set of non-edges of $G - V_-$ respectively.  We say that ($V_-, E_-,
E_+$) is an \emph{editing set} of $G$ if the deletion of $E_-$ from
and the addition of $E_+$ to $G - V_-$ create a unit interval graph.
Its {\em size} is defined to be the 3-tuple ($|V_-|, |E_-|, |E_+|$),
and we say that it is {\em smaller} than ($k_1,k_2,k_3$) if all of
$|V_-|\le k_1$ and $|E_-|\le k_2$ and $|E_+|\le k_3$ hold true and at
least one inequality is strict.  The \uie{} problem is formally
defined as follows.

\medskip
\fbox{\parbox{0.9\linewidth}{
\begin{tabularx}{\linewidth}{rX}
  \textit{Input:} & A graph $G$ and three nonnegative integers $k_1$,
  $k_2$, and $k_3$.
  \\
  \textit{Task:} & Either construct an editing set $(V_-,E_-,E_+)$ of
  $G$ that has size at most ($k_1,k_2,k_3$), or report that no such
  set exists.
\end{tabularx}
}}
\medskip
\\
We remark that it is necessary to impose the quotas for different
modifications in the stated, though cumbersome, way.  Since vertex
deletions are clearly preferable to both edge operations, the problem
would be computationally equivalent to \uivd{} if we have a single
budget on the total number of operations.

By and large, our algorithm for the \uie{} problem also uses the same
two-phase approach as the previous algorithms.  The main discrepancy
lies in the first phase, when we are not satisfied with a \phcag{} or
an $\cal F$-free graph.  In particular, we also want to dispose of all
holes $C_\ell$ with $\ell \le k_3 + 3$, which are precisely those
holes fixable by merely adding edges (recall that at least $\ell - 3$
edges are needed to fill a $C_\ell$ in).  In the very special cases
where $k_3 = 0$ or $1$, a fat $W_5$ is ${\cal F}\cup \{C_\ell : \ell
\le k_3 + 3\}$-free.  It is not hard to solve fat $W_5$'s, but to make
the rest more focused and also simplify the presentation, we also
exclude these cases by disposing of all $C_5$'s in the first phase.

A graph is called \emph{reduced} if it contains no claw, $S_3,
\overline{S_3}$, $C_4$, $C_5$, or $C_\ell$ with $\ell \le k_3 + 3$.
By Proposition~\ref{lem:phcag}, a reduced graph $G$ is a \phcag{}.
Hence, if $G$ happens to be chordal, then it must be a unit interval
graph (Corollary~\ref{lem:chordal-phcag}), and we terminate the
algorithm.  Otherwise, our algorithm enters the second phase.  Now
that $G$ is reduced, every minimal forbidden induced subgraph is a
hole $C_\ell$ with $\ell > k_3 + 3$, which can only be fixed by
deleting vertices and/or edges.  Here we again exploit a proper and
Helly arc model $\cal A$ for $G$.  According to
Lemma~\ref{lem:hole-cover}, if there exists some point $\rho$ in the
model such that $|K_{\cal A}(\rho)| \le k_1$, then it suffices to
delete all vertices in $K_{\cal A}(\rho)$, which results in a subgraph
that is a unit interval graph.  Therefore, we may assume hereafter
that no such point exists, then $G$ remains reduced and non-chordal
after at most $k_1$ vertex deletions.  As a result, we have to delete
edges as well.

Consider an (inclusion-wise minimal) editing set ($V_-, E_-, E_+$) to
a reduced graph $G$.  It is easy to verify that ($\emptyset, E_-,
E_+$) is an (inclusion-wise minimal) editing set of the reduced graph
$G - V_-$.  In particular, $E_-$ needs to intersect all holes of $G -
V_-$.  We use ${\cal A} - V_-$ as a shorthand for $\{A(v)\in {\cal A}:
v\not\in V_-\}$, an arc model for $G - V_-$ that is proper and Helly.
One may want to use Lemma~\ref{lem:edge-hole-cover} to find a minimum
set $E_-$ of edges (i.e., $\overrightarrow E_{{\cal A} - V_-}(\alpha)$
for some point $\alpha$) to finish the task.  However,
Lemma~\ref{lem:edge-hole-cover} has not ruled out the possibility that
we delete less edges to break all long holes, and subsequently add
edges to fix the incurred subgraphs in \{claw, $\overline{S_3}$,
$S_3$, $C_4$, $C_5$, $C_\ell$\} with $\ell \le k_3 + 3$.  So we need
the following lemma.
\begin{lemma}\label{lem:no-addition}
  Let ($V_-, E_-, E_+$) be an inclusion-wise minimal editing set of a
  reduced graph $G$.  If $|E_+| \le k_3$, then $E_+ = \emptyset$.
\end{lemma}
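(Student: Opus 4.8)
The plan is to reduce to the case $V_-=\emptyset$ and then prove the equivalent statement that $(G-V_-)\setminus E_-$ is already a unit interval graph; once this is established, $(\emptyset,E_-,\emptyset)$ is itself an editing set contained componentwise in $(V_-,E_-,E_+)$, so inclusion-wise minimality immediately forces $E_+=\emptyset$. For the reduction I would invoke the observation stated just before the lemma: $(\emptyset,E_-,E_+)$ is an inclusion-wise minimal editing set of the reduced, non-chordal \phcag{} $G-V_-$. Renaming $G-V_-$ to $G$, I work with the proper and Helly arc model ${\cal A}-V_-$ and write $\widetilde G:=(G\setminus E_-)\cup E_+$ for the resulting unit interval graph. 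I would also dispose of the trivial case $k_3=0$ (then $E_+=\emptyset$ at once), so assume $k_3\ge 1$; recall that, $G$ being reduced, every hole of $G$ has length at least $\max(6,k_3+4)>k_3+3$.

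The heart is a length/counting dichotomy applied to $G\setminus E_-$. Assume, toward a contradiction, that $G\setminus E_-$ is \emph{not} a unit interval graph; by Theorem~\ref{thm:uig-fis} it contains an induced claw, $S_3$, $\overline{S_3}$, or hole $F$. Since $E_+$ only adds edges, $\widetilde G[V(F)]$ retains every edge of $F$ and is chordal (being an induced subgraph of the unit interval graph $\widetilde G$), so the only chords available on $V(F)$ beyond $F$ come from $E_+$. If $F$ were a hole of length $\ell'$, triangulating its $\ell'$-cycle would require at least $\ell'-3$ such chords, whence $k_3\ge|E_+|\ge\ell'-3$ and $\ell'\le k_3+3$. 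Thus $F$ is never a long hole, and in every remaining case $|V(F)|\le\max(6,k_3+3)$. The next point, checked by cases, is that $G[V(F)]$ is hole-free: a hole needs at least $\max(6,k_3+4)$ vertices, which already exceeds $|V(F)|$ for a short hole and for a claw; and the six-vertex obstructions $S_3,\overline{S_3}$ each contain a triangle, so $G[V(F)]\supseteq F$ contains a triangle and cannot induce the only hole that fits on six vertices, a spanning $C_6$. Hence $G[V(F)]$ is a chordal \phcag{} and, by Corollary~\ref{lem:chordal-phcag}, a \emph{unit interval graph}. In other words, any obstruction in $G\setminus E_-$ is small and was created purely by deletions inside a vertex set $V(F)$ that $G$ itself realizes as a unit interval graph.

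It remains to convert this local over-deletion into a contradiction with minimality, and this is where I expect the real difficulty. Because $G[V(F)]$ is a unit interval graph while $F=(G\setminus E_-)[V(F)]$ is not, some edge of $G$ inside $V(F)$ lies in $E_-$; the aim is to show that restoring $G$'s edges on $V(F)$ while discarding the $E_+$-chords placed there still yields a unit interval graph, which contradicts the minimality of $E_-$. The obstacle is global consistency: rewriting the model only on $V(F)$ may disturb induced subgraphs straddling $V(F)$ and its complement. I would handle this exactly as in the proof of Lemma~\ref{lem:edge-hole-cover}, by taking a unit interval model $\cal I$ of $\widetilde G$ together with the arc model $\cal A$ of $G$ and locally rewriting the intervals for $V(F)$ so that they reproduce the unit interval model of $G[V(F)]$ inherited from $\cal A$, verifying through the relation $\rightarrow$ and Proposition~\ref{lem:non-chordal-model} that no interval outside $V(F)$ changes its intersections. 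The resulting proper interval model represents a unit interval graph whose edge set lies between $E(G)\setminus E_-$ and $E(G)$ but omits strictly fewer edges of $G$ than $E_-$, contradicting minimality of the deletion set. With the short-obstruction case thereby excluded, $G\setminus E_-$ is a unit interval graph, and the minimality of $(\emptyset,E_-,E_+)$ then forces $E_+=\emptyset$.
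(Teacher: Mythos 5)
Your reduction to $V_-=\emptyset$, your argument that any obstruction $F$ in $G-E_-$ must be a claw, $S_3$, $\overline{S_3}$, or a hole of length at most $k_3+3$ (via the triangulation count $|E_+|\ge \ell'-3$), and your deduction that $G[V(F)]$ is a unit interval graph containing an edge of $E_-$ are all sound, and they parallel reasoning that appears in the paper's proof. The genuine gap is the final step, which you yourself flag as ``the real difficulty'': you never prove that the graph $G'$ obtained from $\widetilde G$ by restoring $G$'s edges inside $V(F)$ and discarding the $E_+$-chords there is a unit interval graph. This is not a routine verification. The edges of $E_-$ inside $V(F)$ were deleted, in general, because they lie on \emph{long} holes of $G$; restoring them can recreate such a hole whenever all of that hole's other edges survive in $\widetilde G$ and the $E_+$-chord you just discarded was the only thing triangulating it. Ruling this out requires a structural argument about how $E_-$, $E_+$, and the circle interact --- it is exactly the content one would need a new lemma for, and asserting it is close to assuming the conclusion. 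Your proposed repair, ``handle this exactly as in the proof of Lemma~\ref{lem:edge-hole-cover},'' does not go through: that proof's interval-exchange arguments (moving $I(v_0)$, resetting intervals for $X$, the counting in its two claims) all hinge on $\underline G$ being a \emph{maximum-cardinality} spanning unit interval subgraph, whereas here you only have inclusion-wise minimality of an editing set, so none of those counting contradictions are available.

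The paper avoids this trap by never touching a model of $\widetilde G$ and never performing local surgery. It introduces an auxiliary set $E'_-\subseteq E_-$, inclusion-wise minimal with the property that no hole of $G-E'_-$ has arcs covering the circle of the model $\cal A$ (the property holds for $E_-$ itself by your same triangulation count, via Corollary~\ref{cor:non-chordal-model}). It then shows directly that $G-E'_-$ is a unit interval graph: any putative obstruction $X$ has arcs not covering the circle, so $G[X]$ is a unit interval graph and some $uw\in E'_-$ joins two vertices at distance $2$ in $(G-E'_-)[X]$ with common neighbor $v$; by minimality of $E'_-$, the graph $(G-E'_-)+uw$ has a circle-covering hole through $uw$, and splicing $v$ into that hole produces a circle-covering hole of $G-E'_-$ itself --- a contradiction. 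Inclusion-wise minimality of $(\emptyset,E_-,E_+)$ then forces $E_-=E'_-$ and $E_+=\emptyset$. Notice that the contradiction is manufactured entirely inside the arc model of $G$ using the covering characterization of holes, which is precisely the tool that substitutes for the unjustified global-consistency claim in your sketch. To complete your proof you would either need to supply that missing argument (in effect re-proving something of this strength) or restructure around an auxiliary minimal set as the paper does.
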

\begin{proof}
  We may assume without loss of generality $V_- = \emptyset$, as
  otherwise it suffices to consider the inclusion-wise minimal editing
  set ($\emptyset, E_-, E_+$) to the still reduced graph $G - V_-$.
  Let $\cal A$ be a proper and Helly arc model for $G$.  Let $E'_-$ be
  an inclusion-wise minimal subset of $E_-$ such that for every hole
  in $G - E'_-$, the union of arcs for its vertices does not cover the
  circle of $\cal A$.  We argue the existence of $E'_-$ by showing
  that $E_-$ itself satisfies this condition.  Suppose for
  contradiction that there exists in $G - E_-$ a hole whose arcs cover
  the circle of $\cal A$.  Then we can find a minimal subset of them
  that covers the circle of $\cal A$.  By
  Corollary~\ref{cor:non-chordal-model}, this subset has at least $k_3
  + 4$ vertices, and thus the length of the hole in $G - E_-$ is at
  least $k_3 + 4$.  But then it cannot be fixed by the addition of the
  at most $k_3$ edges from $E_+$.

  Now for the harder part, we argue that $\underline G := G - E'_-$ is
  already a unit interval graph.  Together with the inclusion-wise
  minimality, it would imply $E_- = E'_-$ and $E_+ = \emptyset$.

  Suppose for contradiction that $\underline G[X]$ is a claw, $S_3,
  \overline{S_3}$, or a hole for some $X\subseteq V(G)$.  We find
  three vertices $u, v, w\in X$ such that $u w\in E'_-$ and $u v, v
  w\in E(\underline G)$ as follows.  By
  Corollary~\ref{cor:non-chordal-model} and the fact that $G$ is
  $\{C_4, C_5\}$-free, at least six arcs are required to cover the
  circle.  As a result, if the arcs for a set $Y$ of at most six
  vertices covers the circle, then $G[Y]$ must be a $C_6$, and its
  subgraph $\underline G[Y]$ cannot be a claw, $\overline{S_3}$, or
  $S_3$.  Therefore, $\bigcup_{v\in X} A(v)$ cannot cover the whole
  circle when $\underline G[X]$ is a claw, $S_3$, or $\overline{S_3}$.
  On the other hand, from the selection of $E'_-$, this is also true
  when $\underline G[X]$ is a hole.  Thus, $G[X]$ is a unit interval
  graph, and we can find two vertices $x, z$ from $X$ having $x z \in
  E'_-$.  We find a shortest $x$-$z$ path in $\underline G[X]$.  If
  the path has more than one inner vertex, then it makes a hole
  together with $x z$; {as $G[X]$ is a unit interval graph}, this
  would imply that there exists an inner vertex $y$ of this path such
  that $x y\in E'_-$ or $y z\in E'_-$.  We consider then the new pair
  $x, y$ or $y, z$ accordingly.  Note that their distance in
  $\underline G[X]$ is smaller than $x z$, and hence repeating this
  argument (at most $|X| - 3$ times) will end with two vertices with
  distance precisely $2$ in $\underline G[X]$.  They are the desired
  $u$ and $w$, while any common neighbor of them in $\underline G[X]$
  can be $v$.

  By the minimality of $E'_-$, in $\underline G + u w$ there exists a
  hole $H$ such that arcs for its vertices cover the circle in $\cal
  A$.  This hole $H$ necessarily passes $u w$, and we denote it by
  $x_1 x_2 \cdots x_{\ell -1} x_\ell$, where $x_1 = u$ and $x_\ell =
  w$.  Note that $A(u)$ intersects $A(w)$, and since $\cal A$ is
  proper and Helly, $A(u), A(v), A(w)$ cannot cover the circle;
  moreover, it cannot happen that $A(v)$ intersects all the arcs
  $A(x_i)$ for $1<i<\ell$ simultaneously.  From $x_1 x_2 \cdots
  x_{\ell -1} x_\ell$ we can find $p$ and $q$ such that $1\le p < p+1
  < q\le \ell$ and $v x_p, v x_q\in E(\underline G)$ but $v x_i\not\in
  E(\underline G)$ for every $p < i < q$.  Here possibly $p = 1$
  and/or $q = \ell$.  Then $v x_p \cdots x_q$ makes a hole of
  $\underline G$, and the union of its arcs covers the circle,
  contradicting the definition of $E'_-$.  This concludes the proof.
\end{proof}

Therefore, a yes-instance on a reduced graph always has a solution
that does not add any edge.  By Lemma~\ref{lem:edge-hole-cover}, for
any editing set ($V_-, E_-, \emptyset$), we can always find some point
$\alpha$ in the model and use $\overrightarrow E_{{\cal A} -
  V_-}(\alpha)$ to replace $E_-$.  After that, we can use the vertices
``close'' to this point to replace $V_-$.  Therefore, the problem
again boils down to find some ``weak point'' in the arc model.  This
observation is formalized in the following lemma.  We point out that
this result is stronger than required by the linear-time algorithm,
and we present in the current form for its own interest (see
Section~\ref{sec:remarks} for more discussions).
\begin{lemma}\label{lem:mixed-hole-covers}
  Given a \phcag{} $G$ and a nonnegative integer $p$, we can calculate
  in $O(m)$ time the minimum number $q$ such that $G$ has an editing
  set of size ($p, q, 0$).  In the same time we can find such an
  editing set.
\end{lemma}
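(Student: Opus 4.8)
The plan is to reduce the combined vertex/edge optimization to a one‑dimensional problem on the arc model, just as the pure deletion problems were reduced, and then to solve that problem by a single sweep. First I would build a proper and Helly arc model $\cal A$ for $G$ (which exists because $G$ is a \phcag{}); if $G$ is chordal it is already a unit interval graph by Corollary~\ref{lem:chordal-phcag}, so $q=0$ and we stop. Otherwise, fix any editing set $(V_-,E_-,\emptyset)$. Applying Lemma~\ref{lem:edge-hole-cover} to the still proper‑and‑Helly model ${\cal A}-V_-$ of $G-V_-$, a minimum such $E_-$ equals $\overrightarrow E_{{\cal A}-V_-}(\rho)$ for some point $\rho$. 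The first key observation is that, since $K_{{\cal A}-V_-}(\rho)=K_{\cal A}(\rho)\setminus V_-$, deleting $V_-$ merely discards from $\overrightarrow E_{\cal A}(\rho)$ exactly the crossing edges incident to $V_-$; hence $|E_-|=|\overrightarrow E_{\cal A}(\rho)|-c(\rho,V_-)$, where $c(\rho,V_-)$ counts the crossing edges at $\rho$ that meet $V_-$. Thus $q=\min_{\rho}\min_{|V_-|\le p}(|\overrightarrow E_{\cal A}(\rho)|-c(\rho,V_-))$, and for a fixed cut point the inner task is to cover as many crossing edges as possible with $p$ vertices. Moreover we may assume every vertex of $V_-$ is incident to a crossing edge at $\rho$, because a deleted vertex meeting no crossing edge can be removed from $V_-$ without changing the residual, freeing budget.

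Next I would expose the structure of the crossing edges at a fixed $\rho$. Using the symmetric description of $\overrightarrow E_{\cal A}(\rho)$ with $\beta:=\max\{\cp{x}:x\in K_{\cal A}(\rho)\}+\epsilon$, the participating vertices split into a left part $L=K_{\cal A}(\rho)\setminus K_{\cal A}(\beta)$ and a right part $R=K_{\cal A}(\beta)\setminus K_{\cal A}(\rho)$, and a pair $vu$ with $v\in L$, $u\in R$ is a crossing edge precisely when $\cp{v}>\ccp{u}$, i.e.\ when $v$'s clockwise endpoint lies clockwise of $u$'s counterclockwise endpoint inside the gap $(\rho,\beta)$. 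Ordering $L$ by $\cp{\cdot}$ and $R$ by $\ccp{\cdot}$ therefore yields a bipartite \emph{chain graph}: the neighbourhoods on each side are nested. Because every crossing edge has a unique endpoint in $L$ and a unique endpoint in $R$, the residual minimization is clean: for any fixed split that retains $a'$ vertices of $L$ and $b'$ of $R$, it is optimal to retain the lowest‑degree ones (smallest $\cp{\cdot}$ on the left, largest $\ccp{\cdot}$ on the right), since nestedness makes exactly these the vertices whose arcs induce the fewest crossing edges. The residual is then the staircase count $|\{(v,u):\cp{v}>\ccp{u}\}|$ over the retained extremes, and the inner problem collapses to minimizing this count over the admissible splits $a'+b'=\max(0,|L|+|R|-p)$.

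Finally I would convert this into an $O(m)$ algorithm by sweeping the $2n$ candidate cut points $i+0.5$, reusing the incremental update of $\overrightarrow E_{\cal A}(\cdot)$ from the proof of Theorem~\ref{thm:edge-deletion-phcag}: as $\rho$ crosses a counterclockwise endpoint, the gap and the chain graph change by the insertion or deletion of a single arc, and the prefix sums of the sorted left‑ and right‑degree sequences that drive the staircase count can be refreshed in amortized constant time. The editing set itself is reconstructed by recording, at the optimal $\rho$ and split, which extreme vertices were retained.

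The hardest part, and where I expect to spend most of the effort, is twofold. First I must prove that retaining the lowest‑degree extremes is genuinely optimal for each fixed split—a maximum‑coverage statement that fails for general bipartite graphs but holds for chain graphs by an exchange argument using the nested neighbourhoods. Second, and more delicately, I must track the optimal split across the sweep within the linear‑time budget, since naively re‑optimizing over the $O(p)$ splits at each of the $O(n)$ cut points would cost $O(np)$. For this I would show that the residual, viewed as a function of the split along $a'+b'=\mathrm{const}$, is discretely convex, so that its minimizer shifts monotonically as $\rho$ advances and can be maintained by a single amortized pointer; combined with the $O(m)$ bound on the total number of adjacency updates during the sweep, this yields the claimed $O(m)$ running time.
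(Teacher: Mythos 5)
Your reduction is sound up to a point: the identity $\overrightarrow E_{{\cal A}-V_-}(\rho)=\overrightarrow E_{\cal A}(\rho)\setminus\{e: e\cap V_-\neq\emptyset\}$ is correct, so $q=\min_\rho\min_{|V_-|\le p}\bigl(|\overrightarrow E_{\cal A}(\rho)|-c(\rho,V_-)\bigr)$; the crossing edges at a fixed $\rho$ do form a chain graph; and for a fixed split $(a',b')$ retaining the extreme (lowest-degree) vertices is indeed optimal by an exchange argument. The genuine gap is the claim that the residual is discretely convex along $a'+b'=\mathrm{const}$, on which your entire $O(m)$ bound rests. It is false, and the function need not even be unimodal. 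Concretely, order $L$ by $\cp{\cdot}$ and realize crossing-degrees $\lambda=(1,1,3,5,5,5)$ with $|R|\ge 5$ (a staircase of endpoints in the gap does this, and it embeds in a proper and Helly model of a non-chordal graph), and take $p=3$. Writing $f(a)$ for the residual when the $a$ highest-degree vertices of $L$ and the $p-a$ highest-degree vertices of $R$ are deleted, $f(a)=\sum_{i\le 6-a}\max\bigl(0,\lambda_i-(3-a)\bigr)$, one gets $f(0)=6$, $f(1)=7$, $f(2)=6$, $f(3)=5$: the consecutive differences are $+1,-1,-1$, so $f$ is not convex, and $a=0$ is a strict local minimum while the global minimum sits at $a=3$. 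A pointer that moves the split only while the value improves gets trapped at $a=0$, so your ``monotone minimizer'' maintenance is unsupported, and with it the claimed running time and even the correctness of the computed $q$.

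What the paper does instead is precisely the idea you are missing: a normalization that makes the inner optimization disappear. Using Lemma~\ref{lem:edge-hole-cover} plus a replacement argument, it shows that some optimal editing set has the canonical one-sided form ``for some point $\rho$, delete the $p$ vertices of $K_{\cal A}(\rho)$ whose arcs are most clockwise, then delete $\overrightarrow E_{{\cal A}-V_-}(\rho)$.'' Intuitively, deleting a vertex of your right part $R$ at $\rho$ is the same as deleting a left-part vertex at a cut point shifted clockwise past its counterclockwise endpoint; the interior splits in my counterexample reappear as all-$L$ candidates at nearby cut points. With this normalization every cut point has a \emph{unique} candidate solution, maintained during the sweep by a queue of $p$ vertices, and $O(m)$ is immediate. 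If you insist on your route, you must either prove this normalization yourself, or replace the convexity argument; for the latter, note that in the nontrivial regime every point is covered by more than $p$ arcs, hence every vertex has degree at least $p$ and $np\le 2m$, so even brute-forcing all $p+1$ splits at each of the $2n$ cut points (with an amortized-constant-time recurrence for $f(a+1)$ from $f(a)$) could fit in $O(m)$ — but that argument, too, needs to be made, and your proposal does not contain it.
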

\begin{proof}
  We may assume that $G$ is not chordal; otherwise, by
  Corollary~\ref{lem:chordal-phcag}, $G$ is a unit interval graph and
  the problem becomes trivial because an empty set will suffice.  Let
  us fix a proper and Helly arc model $\cal A$ for $G$.  The lemma
  follows from Lemma~\ref{lem:hole-cover} when there is some point
  $\rho$ satisfying $K_{\cal A}(\rho) \le p$.  Hence, we may assume
  that no such point exists, and for any subset $V_-$ of at most $p$
  vertices, $G - V_-$ remains a \phcag{} and is non-chordal.  Hence,
  $q > 0$.  For each point $\rho$ in $\cal A$, we can define an
  editing set ($V^\rho_-, E^\rho_-, \emptyset$) by taking the $p$
  vertices in $K_{\cal A}(\rho)$ with the most clockwise arcs as
  $V^\rho_-$ and $\overrightarrow E_{{\cal A} - V^\rho_-}(\rho)$ as
  $E^\rho_-$.  We argue first that the minimum cardinality of this
  edge set, taken among all points in $\cal A$ is the desired number
  $q$.  See Fig.~\ref{fig:mixed-hole-covers}.

  Let ($V^*_-, E^*_-, \emptyset$) be an editing set of $G$ with size
  ($p, q, 0$).  According to Lemma~\ref{lem:edge-hole-cover}, there is
  a point $\alpha$ such that the deletion of $E'_- := \overrightarrow
  E_{{\cal A} - V^*_-}(\alpha)$ from $G - V^*_-$ makes it a unit
  interval graph and $|E'_-| \le |E^*_-|$.  We now consider the
  original model $\cal A$.  Note that a vertex in $V^*_-$ is in either
  $K_{\cal A}(\alpha)$ or $\{v\not\in K_{\cal A}(\alpha): u\rightarrow
  v, u\in K_{\cal A}(\alpha)\}$; otherwise replacing this vertex by
  any end of an edge in $E^*_-$, and removing this edge from $E^*_-$
  gives an editing set of size ($p, q - 1, 0$).  Let $V_-$ comprise
  the $|V^*_-\cap K_{\cal A}(\alpha)|$ vertices of $K_{\cal
    A}(\alpha)$ whose arcs are the most clockwise in them, as well as
  the first $|V^*_-\setminus K_{\cal A}(\alpha)|$ vertices whose arcs
  are immediately to the right of $\alpha$.  And let $E_- :=
  \overrightarrow E_{{\cal A}- V_-}(\alpha)$.  It is easy to verify
  that $|E_-| \le |E^*_-| = q$ and ($V_-, E_-, \emptyset$) is also an
  editing set of $G$ (Lemma~\ref{lem:edge-hole-cover-1}).  Note that
  arcs for $V_-$ are consecutive in $\cal A$.  Let $v$ be the vertex
  in $V_-$ with the most clockwise arc, and then $\ccp{v} + \epsilon$
  is the desired point $\rho$.

  We give now the $O(m)$-time algorithm for finding the desired point,
  for which we assume that $\cal A$ is canonical.  It suffices to
  consider the $2 n$ points $i + 0.5$ for $i \in \{0, \ldots, 2n -
  1\}$.  We calculate first the $V^{0.5}_-$ and $E^{0.5}_-$, and
  maintain a queue that is initially set to be $V^{0.5}_-$.  For $i =
  1, \ldots, 2n - 1$, we deduce the new sets $V^{i + 0.5}_-$ and $E^{i
    + 0.5}_-$ from $V^{i - 0.5}_-$ and $E^{i - 0.5}_-$ as follows.  If
  $i$ is a clockwise endpoint of some arc, then $V^{i + 0.5}_- = V^{i
    - 0.5}_-$ and $E^{i + 0.5}_- = E^{i - 0.5}_-$.  Otherwise, $i =
  \ccp{v}$ for some vertex $v$, then we enqueue $v$, and dequeue $u$.
  We set $V^{i + 0.5}_-$ to be the vertices in the queue, whose size
  remains $p$.  The different edges between $E^{i + 0.5}_-$ and $E^{i
    - 0.5}_-$ are those incident to $u$ and $v$.  In particular, $E^{i
    + 0.5}_- \setminus E^{i - 0.5}_- = \{u x: u \rightarrow x,
  x\not\in V^{i + 0.5}_-\}$, while $E^{i - 0.5}_- \setminus E^{i +
    0.5}_- = \{x v: x \rightarrow v, x\not\in V^{i - 0.5}_-\}$.  Note
  that the initial sets $V^{0.5}_-$ and $E^{0.5}_-$ can be found in
  $O(m)$ time, and then each vertex and its adjacency is scanned
  exactly once.  The total running time is $O(m)$.  This concludes the
  proof.
\end{proof}
\begin{SCfigure}[][h]
  \centering\small
    \includegraphics{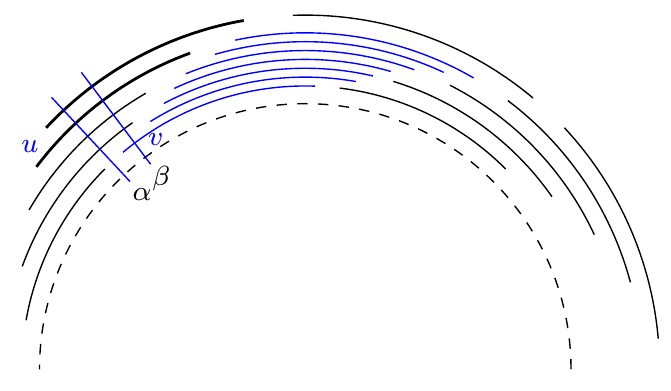} 
  \caption{Illustration for the proof of
    Lemma~\ref{lem:mixed-hole-covers}.  Here $p = 2$ and $q = 3$.
    Then $V^\alpha_-$ consists of the two thick arcs, and
    $|E^\alpha_-| = 3$.  Moving from point $\alpha$ to $\beta$ ($=
    \alpha + 1$) gives $|E^\beta_-| = |E^\alpha_-| + 4 - 2 = 5$.}
  \label{fig:mixed-hole-covers}
\end{SCfigure}

  Again, one should note that in the general case (when both $p, q >
  0$), the point identified by Lemma~\ref{lem:mixed-hole-covers} may
  not be the thinnest point for vertices or the thinnest point for
  edges, as specified by respectively Lemmas~\ref{lem:hole-cover} and
  \ref{lem:edge-hole-cover}.  Indeed, for different values of $p$, the
  thinnest points found by Lemma~\ref{lem:mixed-hole-covers} may be
  different.

  The mixed hole covers consists of both vertices and edges, and thus
  the combinatorial characterization given in
  Lemma~\ref{lem:mixed-hole-covers} extends
  Lemmas~\ref{lem:hole-cover} and \ref{lem:edge-hole-cover}.  The
  algorithm used in the proof is similar as that of
  Theorem~\ref{thm:edge-deletion-phcag}.  Recall that a reduced graph
  is a \phcag{}.  Thus, Lemmas~\ref{lem:no-addition} and
  \ref{lem:mixed-hole-covers} have the following consequence: It
  suffices to call the algorithm with $p = k_1$, and returns the found
  editing set if $q \le k_2$, or ``NO'' otherwise.
\begin{corollary}\label{cor:editing-reduced}
  The \uie{} problem can be solved in $O(m)$ time on reduced graphs.
\end{corollary}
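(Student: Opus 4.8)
The plan is to assemble Corollary~\ref{cor:editing-reduced} from the two structural lemmas just established, so that the only genuine work is wiring them together and verifying the equivalence that drives the accept/reject decision. Since $G$ is reduced it is a \phcag{} by Proposition~\ref{lem:phcag}, which is exactly the input format required by Lemma~\ref{lem:mixed-hole-covers} (and the chordal subcase is already dispatched inside that lemma via Corollary~\ref{lem:chordal-phcag}, where the empty set suffices). I would therefore simply call Lemma~\ref{lem:mixed-hole-covers} with $p = k_1$: in $O(m)$ time it returns the minimum number $q$ of edge deletions admitting an editing set of size $(k_1, q, 0)$, together with a concrete witness $(V_-, E_-, \emptyset)$. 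The algorithm then compares $q$ with $k_2$: if $q \le k_2$ it outputs $(V_-, E_-, \emptyset)$, and otherwise it reports that no editing set of size at most $(k_1, k_2, k_3)$ exists.

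The accept branch is immediate: the returned set has $|V_-| \le k_1$, $|E_-| = q \le k_2$, and $|E_+| = 0 \le k_3$, so it is a valid solution by definition. The step carrying the real weight is the soundness of the reject branch, and this is where Lemma~\ref{lem:no-addition} is indispensable. I would argue the contrapositive: assume some editing set of size at most $(k_1, k_2, k_3)$ exists and pick an inclusion-wise minimal one $(V_-', E_-', E_+')$. Since $|E_+'| \le k_3$, Lemma~\ref{lem:no-addition} forces $E_+' = \emptyset$, so $(V_-', E_-', \emptyset)$ is an edges-only editing set with $|V_-'| \le k_1$ and $|E_-'| \le k_2$. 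This witnesses an editing set of size $(k_1, |E_-'|, 0)$, whence the minimality of $q$ gives $q \le |E_-'| \le k_2$, contradicting $q > k_2$.

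One minor point to check is that measuring against the full vertex budget $p = k_1$ in Lemma~\ref{lem:mixed-hole-covers} does not lose solutions that delete fewer vertices: spending extra vertex deletions never increases the number of required edge deletions, because the result is an induced subgraph and unit interval graphs are hereditary, so a valid edge-deletion set restricts to one of no larger size. This monotonicity is already folded into the statement of Lemma~\ref{lem:mixed-hole-covers}, so no separate argument is needed here. Finally, the single call to Lemma~\ref{lem:mixed-hole-covers} runs in $O(m)$ time and subsumes building the proper and Helly arc model and testing chordality, so the whole procedure does. The main conceptual obstacle is thus not located in this corollary at all but in Lemma~\ref{lem:no-addition}, which is precisely what licenses restricting attention to editing sets that add no edges; the corollary itself is a short deduction once that fact is in hand.
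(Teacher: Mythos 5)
Your proposal is correct and follows the paper's own route exactly: since a reduced graph is a \phcag{}, call Lemma~\ref{lem:mixed-hole-covers} with $p = k_1$, accept iff $q \le k_2$, with the reject branch justified by applying Lemma~\ref{lem:no-addition} to an inclusion-wise minimal editing set. Your extra remarks (the chordal case being handled inside Lemma~\ref{lem:mixed-hole-covers}, and the hereditarity argument for why the full vertex budget $p = k_1$ loses nothing) only make explicit what the paper leaves implicit.
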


Putting together these steps, the fixed-parameter tractability of
\uie{} follows.  Note that to fill a hole, we need to add an edge
whose ends have distance $2$.
\begin{proof}[Proof of Theorem~\ref{thm:alg-interval-editing}]
  We start by calling Theorem~\ref{thm:characterization}.  If a
  subgraph in $\cal F$ or $W_5$ is detected, then we branch on all
  possible ways of destroying it or the contained $C_5$.  Otherwise,
  we have in our disposal a proper and Helly arc model for $G$, and we
  call Lemma~\ref{lem:shortest-hole} to find a shortest hole $C_\ell$.
  If $\ell \le k_3 + 3$, then we either delete one of its $\ell$
  vertices and $\ell$ edges, or add one of $\ell$ edges $h_i h_{i+2}$
  (the subscripts are modulo $\ell$).  One of the three parameters
  decreases by $1$.  We repeat these two steps until some parameter
  becomes negative, then we terminate the algorithm by returning
  ``NO''; or the graph is reduced, and then call the algorithm of
  Corollary~\ref{cor:editing-reduced} to solve it.  The correctness of
  this algorithm follows from Lemma~\ref{lem:shortest-hole} and
  Corollary~\ref{cor:editing-reduced}.  In the disposal of a subgraph
  of $\cal F$, at most $21$ recursive calls are made, while $3\ell$
  for a $C_\ell$, each having a parameter $k - 1$.  Therefore, the
  total number of instances (with reduced graphs) made in the
  algorithm is $O((3 k_3 + 21)^k)$.  It follows that the total running
  time of the algorithm is $2^{O(k\log k)}\cdot m$.
\end{proof}

It is worth mentioning that Lemma~\ref{lem:mixed-hole-covers} actually
implies a linear-time algorithm for the unit interval deletion problem
(which allows $k_1$ vertex deletions and $k_2$ edge deletions) on the
\phcag{s} and an $O(10^{k_1 + k_2} \cdot m)$-time algorithm for it on
general graphs.  The constant 10 can be even smaller if we notice that
(1) the problem is also easy on fat $W_5$'s, and (2) the worst cases
for vertex deletions ($S_3$'s and $\overline{S_3}$'s) and edge
deletions ($C_4$'s) are different.

\section{Concluding remarks}\label{sec:remarks}

All aforementioned algorithms exploit the characterization of unit
interval graphs by forbidden induced subgraphs
\cite{wegner-67-dissertation}.  Very recently, Bliznets et
al.~\cite{bliznets-14-unit-interval-completion} used a different
approach to produce a subexponential-time parameterized algorithm for
\uic{} (whose polynomial factor is however not linear).  Using a
parameter-preserving reduction from vertex cover
\cite{lewis-80-node-deletion-np}, one can show that the vertex
deletion version cannot be solved in $2^{o(k)}\cdot n^{O(1)}$ time,
unless the Exponent Time Hypothesis fails
\cite{cai-96-hereditary-graph-modification}.  Now that the edge
deletion version is FPT as well, one may want to ask to which side it
belongs.  The evidence we now have is in favor of the hard side: In
all related graph classes, the edge deletion versions seem to be
harder than their vertex deletion counterparts.

As said, it is not hard to slightly improve the constant $c$ in the
running time $O(c^k\cdot m)$, but a significant improvement would need
some new observation(s).  More interesting would be to fathom their
limits.  In particular, can the deletion problems be solved in time
$O(2^k\cdot m)$?

Polynomial kernels for \uic{}
\cite{bessy-13-kernels-proper-interval-completion} and \uivd{}
\cite{fomin-13-kernel-pivd} have been known for a while.  Using the
approximation algorithm of Theorem~\ref{thm:alg-2},
we~\cite{cao-16-kernel-uivd} recently developed an $O(k^4)$-vertex
kernel for \uivd{}, improving from the $O(k^{53})$ one of Fomin et
al.~\cite{fomin-13-kernel-pivd}.  We conjecture that the \uied{}
problem also has a small polynomial kernel.

The algorithm for \uie{} is the second nontrivial FPT algorithm for
the general editing problem.  The main ingredient of our algorithm is
the characterization of the mixed deletion of vertices and edges to
break holes.  A similar study has been conducted in the algorithm for
the chordal editing problem \cite{cao-16-chordal-editing}.  In
contrast to that, Lemmas~\ref{lem:no-addition} and
\ref{lem:mixed-hole-covers} are somewhat stronger.  For example, we
have shown that once small forbidden subgraphs have been all fixed, no
edge additions are further needed.  Together with Marx, we had
conjectured that this is also true for the chordal editing problem,
but we failed to find a proof.  Very little study had been done on the
mixed deletion of vertices and edges
\cite{marx-13-treewidth-reduction}.  We hope that our work will
trigger more studies on this direction, which will further deepen our
understanding of various graph classes.

We point out that although we start by breaking small forbidden
induced subgraphs, our major proof technique is instead manipulating
(proper/unit) interval models.  The technique of combining
(constructive) interval models and (destructive) forbidden induced
subgraphs is worth further study on related problems.

\section*{Appendix}
\begin{figure*}[t]
  \centering\small
  \subfloat[$F_1$]{\label{fig:f1}
    \includegraphics{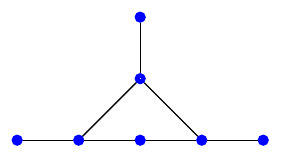} 
  }
  \subfloat[$F_2$]{\label{fig:f2}
    \includegraphics{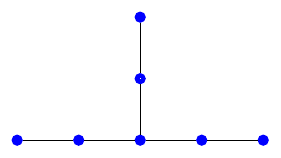} 
  }
  \subfloat[$F_3$]{\label{fig:whipping-top}
    \includegraphics{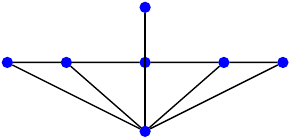} 
  }
  \subfloat[\dag]{\label{fig:net}
    \includegraphics{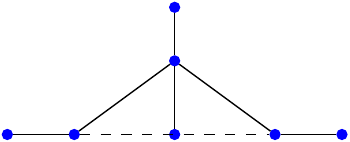} 
  }
  \subfloat[\ddag]{\label{fig:tent}
    \includegraphics{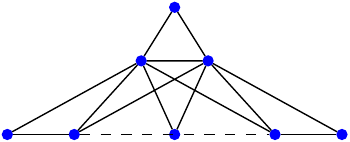} 
  }

  \subfloat[$K_{2,3}$]{\label{fig:long-claw-1}
    \includegraphics{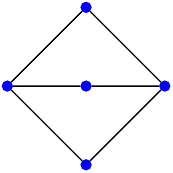} 
  }
  $\quad$
  \subfloat[$F_4$]{\label{fig:g-2}
    \includegraphics{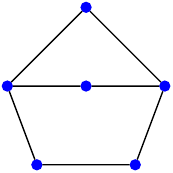} 
  }
  $\quad$
  \subfloat[$F_5$]{\label{fig:domino}
    \includegraphics{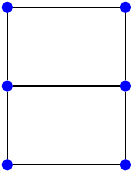} 
  }
  $\quad$
  \subfloat[$F_6$]{\label{fig:f3}
    \includegraphics{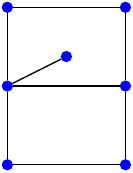} 
  }
  $\quad$
  \subfloat[$F_7$]{\label{fig:f4}
    \includegraphics{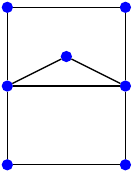} 
  }
  $\quad$
  \subfloat[$F_8$]{\label{fig:fis-1}
    \includegraphics{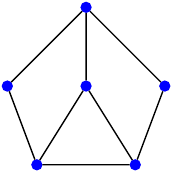} 
  }
  $\quad$
  \subfloat[$F_9$]{\label{fig:fis-2}
    \includegraphics{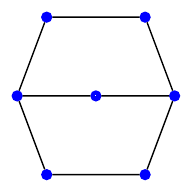} 
  }
  \caption{Forbidden induced graphs.}
  \label{fig:small-graphs}
\end{figure*}

\begin{figure}[t]
  \centering\small
  \captionsetup{justification=centering}
  \includegraphics{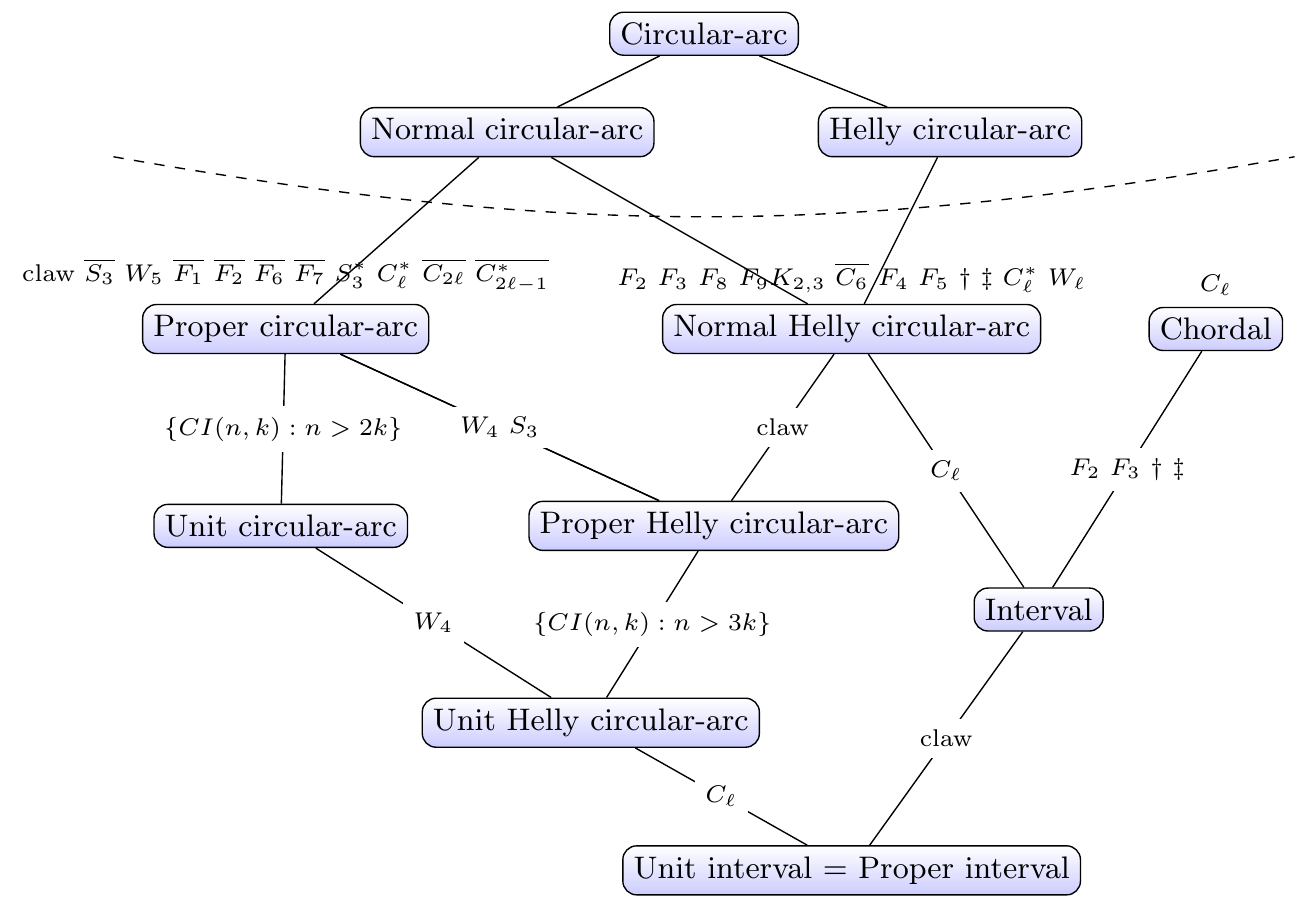} 
  \caption{Forbidden induced subgraphs and containment relations of
    related graph classes ($\ell\ge 4$)}
  \label{fig:classes-containment}
\end{figure}

For the convenience of the reader, we collect related graph classes
and their containment relations in Fig.~\ref{fig:classes-containment},
which is adapted from Lin et al.~\cite{lin-13-nhcag-and-subclasses}
(note that some of these graph classes are not used in the present
paper).  The $CI(n, k)$ graphs are defined by
Tucker~\cite{tucker-74-structures-cag}; see also
\cite{lin-13-nhcag-and-subclasses}.  Other subgraphs that have not
been introduced in the main text are depicted in
Fig.~\ref{fig:small-graphs}.  The relations in
Fig.~\ref{fig:classes-containment} can be viewed both from the
intersection models, arcs or intervals, and forbidden induced
subgraphs, every minimal forbidden induced subgraph of a super-class
being a (not necessarily minimal) forbidden induced subgraph of its
subclass.  For example, Proposition~\ref{lem:non-chordal-model} and
Corollary~\ref{cor:non-chordal-model} are actually properties of
\nhcag{s}.  For a \nhcag{} that is not chordal, every arc model has to
be normal and Helly \cite{lin-13-nhcag-and-subclasses, cao-15-nhcag}.
This is also true for the subclass of \phcag{s}, but an arc model for
a \phcag{} may not be proper.

A word of caution is worth on the definition of \phcag{s}.  One graph
might admit two arc models, one being proper and the other Helly, but
no arc model that is both proper and Helly, e.g., the $S_3$ and the
$W_4$.  Therefore, the class of proper Helly circular-arc graphs does
not contain all those graphs being both proper circular-arc graphs and
Helly circular-arc graphs, but a proper subclass of it.  A similar
remark applies to \nhcag{s}.

For the three classes at the top of
Fig.~\ref{fig:classes-containment}, their characterizations by minimal
forbidden induced subgraphs are still open.  At the third level, the
minimal forbidden induced subgraphs for proper circular-arc graphs and
\nhcag{s} are completely determined by
Tucker~\cite{tucker-74-structures-cag} and Cao et
al.~\cite{cao-15-nhcag}.  For all the classes at lower levels, their
forbidden induced subgraphs with respect to its immediate
super-classes are given.  From them we are able to derive all the
minimal forbidden induced subgraphs for each of these classes.

For example, the characterization of unit interval graphs
(Theorem~\ref{thm:uig-fis}) follows from the characterization of
interval graphs and that we can find a claw in an $F_2$, an $F_3$, a
$\dag$ that is not an $\overline{S_3}$, or a $\ddag$ that is not an
${S_3}$.  Likewise, the minimal forbidden induced subgraphs of proper
Helly circular-arc graphs stated in Theorem~\ref{thm:phcag} can be
derived from those of proper circular-arc graphs and by Corollary~5 of
\cite{lin-13-nhcag-and-subclasses}: A proper circular-arc graph that
is not a proper Helly circular-arc graph must contain a $W_4$ or
$S_3$.  Clearly, $S^*_3$ contains an $S_3$.  To see that each of
$\overline{F_1}, \overline{F_2}, \overline{F_6}, \overline{F_7}$, and
$\overline{C_{2 \ell}}, \overline{C_{2 \ell - 1}^*}$ for $\ell \ge 4$
contain a ${W_4}$, it is equivalent to check that each of ${F_1},
{F_2}, {F_6}, {F_7}$, and ${C_{2 \ell}}, {C_{2 \ell - 1}^*}$ for $\ell
\ge 4$ contains a $\overline{W_4}$, i.e., two non-incident edges and
another independent vertex $v$.  This can be directly read from
Fig.~\ref{fig:small-graphs} for ${F_1}, {F_2}, {F_6}, {F_7}$.  Let
$h_1, h_2,\ldots$ denote the vertices in the hole of ${C_{2 \ell}}$
and ${C_{2 \ell - 1}^*}$.  Then edges $h_1 h_2$ and $h_4 h_5$ are
non-incident.  In a $C_7^*$, the vertex not in the hole can be the
$v$, while in all other holes longer than $7$, the vertex $h_7$ can be
the $v$.

{
  \small
  \bibliographystyle{plainurl}
  \bibliography{../journal,../main}

\begin{thebibliography}{10}

\bibitem{bang-jensen-94-chordal-proper-cag}
J{\o}rgen Bang-Jensen and Pavol Hell.
\newblock On chordal proper circular arc graphs.
\newblock {\em Discrete Mathematics}, 128(1):395--398, 1994.
\newblock \href {http://dx.doi.org/10.1016/0012-365X(94)90130-9}
  {\path{doi:10.1016/0012-365X(94)90130-9}}.

\bibitem{bessy-13-kernels-proper-interval-completion}
St{\'e}phane Bessy and Anthony Perez.
\newblock Polynomial kernels for proper interval completion and related
  problems.
\newblock {\em Information and Computation}, 231:89--108, 2013.
\newblock \href {http://dx.doi.org/10.1016/j.ic.2013.08.006}
  {\path{doi:10.1016/j.ic.2013.08.006}}.

\bibitem{bevern-10-pivd}
Ren{\'e} van Bevern, Christian Komusiewicz, Hannes Moser, and Rolf Niedermeier.
\newblock Measuring indifference: Unit interval vertex deletion.
\newblock In Dimitrios~M. Thilikos, editor, {\em Graph-Theoretic Concepts in
  Computer Science (WG)}, volume 6410 of {\em LNCS}, pages 232--243. Springer,
  2010.
\newblock \href {http://dx.doi.org/10.1007/978-3-642-16926-7_22}
  {\path{doi:10.1007/978-3-642-16926-7_22}}.

\bibitem{bliznets-14-unit-interval-completion}
Ivan Bliznets, Fedor~V. Fomin, Marcin Pilipczuk, and Micha{\l} Pilipczuk.
\newblock A subexponential parameterized algorithm for proper interval
  completion.
\newblock {\em SIAM Journal on Discrete Mathematics}, 29(4):1961--1987, 2015.
\newblock A preliminary version appeared in ESA 2014.
\newblock \href {http://dx.doi.org/10.1137/140988565}
  {\path{doi:10.1137/140988565}}.

\bibitem{bodlaender-96-intervalize-k-colored-graphs}
Hans~L. Bodlaender and Babette van Antwerpen-de Fluiter.
\newblock On intervalizing $k$-colored graphs for {DNA} physical mapping.
\newblock {\em Discrete Applied Mathematics}, 71(1-3):55--77, 1996.
\newblock A preliminary version appeared in ICALP 1995.
\newblock \href {http://dx.doi.org/10.1016/S0166-218X(96)00057-1}
  {\path{doi:10.1016/S0166-218X(96)00057-1}}.

\bibitem{burzyn-06-NPC-edge-modification}
Pablo Burzyn, Flavia Bonomo, and Guillermo Dur{\'a}n.
\newblock {NP}-completeness results for edge modification problems.
\newblock {\em Discrete Applied Mathematics}, 154(13):1824--1844, 2006.
\newblock \href {http://dx.doi.org/10.1016/j.dam.2006.03.031}
  {\path{doi:10.1016/j.dam.2006.03.031}}.

\bibitem{cai-96-hereditary-graph-modification}
Leizhen Cai.
\newblock Fixed-parameter tractability of graph modification problems for
  hereditary properties.
\newblock {\em Information Processing Letters}, 58(4):171--176, 1996.
\newblock \href {http://dx.doi.org/10.1016/0020-0190(96)00050-6}
  {\path{doi:10.1016/0020-0190(96)00050-6}}.

\bibitem{cao-16-almost-interval-recognition}
Yixin Cao.
\newblock Linear recognition of almost interval graphs.
\newblock In Krauthgamer \cite{soda-2016}, pages 1096--1115.
\newblock Full version available at arXiv:1403.1515.
\newblock \href {http://dx.doi.org/10.1137/1.9781611974331.ch77}
  {\path{doi:10.1137/1.9781611974331.ch77}}.

\bibitem{cao-15-nhcag}
Yixin Cao, Luciano~N. Grippo, and Mart\'{\i}n~D. Safe.
\newblock Forbidden induced subgraphs of normal {H}elly circular-arc graphs:
  Characterization and detection.
\newblock {\em Discrete Applied Mathematics}, 216:67--83, 2017.
\newblock \href {http://dx.doi.org/10.1016/j.dam.2015.08.023}
  {\path{doi:10.1016/j.dam.2015.08.023}}.
 
\bibitem{cao-16-chordal-editing}
Yixin Cao and D{\'a}niel Marx.
\newblock Chordal editing is fixed-parameter tractable.
\newblock {\em Algorithmica}, 75(1):118--137, 2016.
\newblock A preliminary version appeared in STACS 2014.
\newblock \href {http://dx.doi.org/10.1007/s00453-015-0014-x}
  {\path{doi:10.1007/s00453-015-0014-x}}.

\bibitem{deng-96-proper-interval-and-cag}
Xiaotie Deng, Pavol Hell, and Jing Huang.
\newblock Linear-time representation algorithms for proper circular-arc graphs
  and proper interval graphs.
\newblock {\em SIAM Journal on Computing}, 25(2):390--403, 1996.
\newblock \href {http://dx.doi.org/10.1137/S0097539792269095}
  {\path{doi:10.1137/S0097539792269095}}.

\bibitem{downey-13}
Rodney~G. Downey and Michael~R. Fellows.
\newblock {\em Fundamentals of Parameterized Complexity}.
\newblock Undergraduate texts in computer science. Springer, 2013.
\newblock \href {http://dx.doi.org/10.1007/978-1-4471-5559-1}
  {\path{doi:10.1007/978-1-4471-5559-1}}.

\bibitem{fomin-13-kernel-pivd}
Fedor~V. Fomin, Saket Saurabh, and Yngve Villanger.
\newblock A polynomial kernel for proper interval vertex deletion.
\newblock {\em SIAM Journal on Discrete Mathematics}, 27(4):1964--1976, 2013.
\newblock A preliminary version appeared in ESA 2012.
\newblock \href {http://dx.doi.org/10.1137/12089051X}
  {\path{doi:10.1137/12089051X}}.

\bibitem{fulkerson-65-interval-graphs}
Delbert~R. Fulkerson and Oliver~A. Gross.
\newblock Incidence matrices and interval graphs.
\newblock {\em Pacific Journal of Mathematics}, 15(3):835--855, 1965.
\newblock \href {http://dx.doi.org/10.2140/pjm.1965.15.835}
  {\path{doi:10.2140/pjm.1965.15.835}}.

\bibitem{gavril-74-algorithms-cag}
F\u{a}nic\u{a} Gavril.
\newblock Algorithms on circular-arc graphs.
\newblock {\em Networks}, 4:357--369, 1974.
\newblock \href {http://dx.doi.org/10.1002/net.3230040407}
  {\path{doi:10.1002/net.3230040407}}.

\bibitem{goldberg-95-interval-edge-deletion}
Paul~W. Goldberg, Martin~C. Golumbic, Haim Kaplan, and Ron Shamir.
\newblock Four strikes against physical mapping of {DNA}.
\newblock {\em Journal of Computational Biology}, 2(1):139--152, 1995.
\newblock \href {http://dx.doi.org/10.1089/cmb.1995.2.139}
  {\path{doi:10.1089/cmb.1995.2.139}}.

\bibitem{villanger-13-pivd}
Pim van~'t Hof and Yngve Villanger.
\newblock Proper interval vertex deletion.
\newblock {\em Algorithmica}, 65(4):845--867, 2013.
\newblock \href {http://dx.doi.org/10.1007/s00453-012-9661-3}
  {\path{doi:10.1007/s00453-012-9661-3}}.

\bibitem{kaplan-99-chordal-completion}
Haim Kaplan, Ron Shamir, and Robert~Endre Tarjan.
\newblock Tractability of parameterized completion problems on chordal,
  strongly chordal, and proper interval graphs.
\newblock {\em SIAM Journal on Computing}, 28(5):1906--1922, 1999.
\newblock A preliminary version appeared in FOCS 1994.
\newblock \href {http://dx.doi.org/10.1137/S0097539796303044}
  {\path{doi:10.1137/S0097539796303044}}.

\bibitem{cao-16-kernel-uivd}
Yuping Ke, Yixin Cao, Xiating Ouyang, and Jianxin Wang.
\newblock Unit interval vertex deletion: Fewer vertices are relevant.
\newblock arXiv:1607.01162, 2016.

\bibitem{soda-2016}
Robert Krauthgamer, editor.
\newblock {\em Proceedings of the 27th Annual ACM-SIAM Symposium on Discrete
  Algorithms (SODA)}. {SIAM}, 2016.
\newblock \href {http://dx.doi.org/10.1137/1.9781611974331}
  {\path{doi:10.1137/1.9781611974331}}.

\bibitem{lewis-80-node-deletion-np}
John~M. Lewis and Mihalis Yannakakis.
\newblock The node-deletion problem for hereditary properties is {NP}-complete.
\newblock {\em Journal of Computer and System Sciences}, 20(2):219--230, 1980.
\newblock Preliminary versions independently presented in STOC 1978.
\newblock \href {http://dx.doi.org/10.1016/0022-0000(80)90060-4}
  {\path{doi:10.1016/0022-0000(80)90060-4}}.

\bibitem{lin-13-nhcag-and-subclasses}
Min~Chih Lin, Francisco~J. Soulignac, and Jayme~L. Szwarcfiter.
\newblock Normal {H}elly circular-arc graphs and its subclasses.
\newblock {\em Discrete Applied Mathematics}, 161(7-8):1037--1059, 2013.
\newblock \href {http://dx.doi.org/10.1016/j.dam.2012.11.005}
  {\path{doi:10.1016/j.dam.2012.11.005}}.

\bibitem{cao-15-edge-deletion}
Yunlong Liu, Jianxin Wang, Jie You, Jianer Chen, and Yixin Cao.
\newblock Edge deletion problems: Branching facilitated by modular
  decomposition.
\newblock {\em Theoretical Computer Science}, 573:63--70, 2015.
\newblock \href {http://dx.doi.org/10.1016/j.tcs.2015.01.049}
  {\path{doi:10.1016/j.tcs.2015.01.049}}.

\bibitem{marx-10-chordal-deletion}
D{\'a}niel Marx.
\newblock Chordal deletion is fixed-parameter tractable.
\newblock {\em Algorithmica}, 57(4):747--768, 2010.
\newblock A preliminary version appeared in WG 2006.
\newblock \href {http://dx.doi.org/10.1007/s00453-008-9233-8}
  {\path{doi:10.1007/s00453-008-9233-8}}.

\bibitem{marx-13-treewidth-reduction}
D{\'a}niel Marx, Barry O'Sullivan, and Igor Razgon.
\newblock Finding small separators in linear time via treewidth reduction.
\newblock {\em ACM Transactions on Algorithms}, 9(4):30.1--30.35, 2013.
\newblock A preliminary version appeared in STACS 2010.
\newblock \href {http://dx.doi.org/10.1145/2500119}
  {\path{doi:10.1145/2500119}}.

\bibitem{roberts-69-indifference-graphs}
Fred~S. Roberts.
\newblock Indifference graphs.
\newblock In Frank Harary, editor, {\em Proof Techniques in Graph Theory (Proc.
  Second Ann Arbor Graph Theory Conf., 1968)}, pages 139--146. Academic Press,
  New York, 1969.

\bibitem{tucker-74-structures-cag}
Alan~C. Tucker.
\newblock Structure theorems for some circular-arc graphs.
\newblock {\em Discrete Mathematics}, 7(1-2):167--195, 1974.
\newblock \href {http://dx.doi.org/10.1016/S0012-365X(74)80027-0}
  {\path{doi:10.1016/S0012-365X(74)80027-0}}.

\bibitem{villanger-10-pivd}
Yngve Villanger.
\newblock Proper interval vertex deletion.
\newblock In Venkatesh Raman and Saket Saurabh, editors, {\em Parameterized and
  Exact Computation - {IPEC} 2010}, volume 6478 of {\em LNCS}, pages 228--238.
  Springer, 2010.
\newblock \href {http://dx.doi.org/10.1007/978-3-642-17493-3_22}
  {\path{doi:10.1007/978-3-642-17493-3_22}}.

\bibitem{wegner-67-dissertation}
Gerd Wegner.
\newblock {\em Eigenschaften der Nerven homologisch-einfacher Familien im
  {$R^n$}}.
\newblock PhD thesis, Universit{\"a}t G{\"o}ttingen, 1967.

\bibitem{yannakakis-81-minimum-fill-in}
Mihalis Yannakakis.
\newblock Computing the minimum fill-in is {NP}-complete.
\newblock {\em SIAM Journal on Algebraic and Discrete Methods}, 2(1):77--79,
  1981.
\newblock \href {http://dx.doi.org/10.1137/0602010}
  {\path{doi:10.1137/0602010}}.

\end{thebibliography}
}
\end{document}